\newtheorem{theorem}{Theorem}[section]
\newtheorem{proposition}[theorem]{Proposition}
\newtheorem{lemma}[theorem]{Lemma}
\newtheorem{corollary}[theorem]{Corollary}
\theoremstyle{definition}
\newtheorem{definition}[theorem]{Definition}
\newtheorem{assumption}[theorem]{Assumption}
\newtheorem{example}[theorem]{Example}
\newtheorem{remark}[theorem]{Remark}
\numberwithin{equation}{section}
\begin{document}

\title[Convergence rates for the estimation of risk measures]{Non-asymptotic convergence rates for the plug-in estimation of risk measures}

\author{Daniel Bartl \and Ludovic Tangpi}
\address{Vienna university, department of mathematics}
\email{daniel.bartl@univie.ac.at}
\address{Princeton University, ORFE}
\email{ludovic.tangpi@princeton.edu}
\keywords{Convex risk measure, estimation, non-asymptotic rates, portfolio optimization, empirical processes, deviation inequality}
\date{\today}
\subjclass[2010]{91B82, 91B30, 91B16}

\begin{abstract}
	Let $\rho$ be a general law--invariant convex risk measure, for instance the average value at risk, and let $X$ be a financial loss, that is, a real random variable.
	In practice, either the true distribution $\mu$ of $X$ is unknown, or the numerical computation of $\rho(\mu)$ is not possible. 
	In both cases, either  relying on historical data or using a Monte-Carlo approach, one can resort to an i.i.d.\ sample of $\mu$ to approximate $\rho(\mu)$ by the finite sample estimator $\rho(\mu_N)$ (where $\mu_N$  denotes the empirical measure of $\mu$).
	In this article we investigate convergence rates of  $\rho(\mu_N)$ to $\rho(\mu)$.
	We provide non-asymptotic convergence rates for both the deviation probability and the expectation of the estimation error.
	The sharpness of these convergence rates is analyzed.
	Our framework further allows for hedging, and the convergence rates we obtain depend neither on the dimension of the underlying assets, nor on the number of options available for trading.
\end{abstract}

\maketitle
\setcounter{equation}{0}

%-----------------------   mainmatter   ------------------------------------------
%-----------------------   mainmatter   ------------------------------------------
%-----------------------   mainmatter   ------------------------------------------
%-----------------------   mainmatter   ------------------------------------------
%-----------------------   mainmatter   ------------------------------------------

\section{Introduction}
\label{sec:intro}

Risk is a pervasive aspect of the financial industry as every single financial decision carries a certain amount of risk. 
Correctly quantifying riskiness is therefore of central importance for financial institutions.
The idea is often to consider the profit and loss $F(S)$ resulting from an investment in assets $S$.
A fundamental innovation (that can be traced back to the work of H.~Markovitz \cite{markowitz1952} in the 1950s) allowing to quantify the risk of $F(S)$ was the introduction of the concept of risk measures, which allows to assign  a \emph{numerical value} $\rho(F(S))$ to the profit and loss $F(S)$ depending on the agent's risk aversion.
In other words, one can focus on a single number to make decisions rather than on the whole distribution of the loss.
As a consequence, computing $\rho(F(S))$ becomes an essential task for the risk manager.

For a long time, the value at risk (VaR) has been the industry standard for risk management.
As a result, the numerical simulation of VaR (i.e.\ of quantiles) is well-understood, and various methods can be found in \cite{Glass-Hei-Shah,Jorion,Hong09,Jin-Fu-Xiong2003} and in their references.
However, there are many criticisms for the VaR\footnote{See for instance McNeil et.~al.\ \cite[Section 2.2]{McNeil15} for ample discussion.}, so much so that the Basel Committee on Banking Supervision which oversees risk management for financial institutions has recommended since 2013 to use expected shortfall (also known as average value at risk (AVaR) or conditional value at risk) as the benchmark risk measure \cite{Basel}.

Intuitively, the AVaR at level $u\in (0,1)$ can be understood as the average of all $\mathrm{VaR}_v$ over $v \in (u,1)$.
Thus, it does not only take into account occurrence of large losses, but also their size.
The estimation of AVaR (in the context of portfolio optimization) was for instance considered by Rockafellar and Uryasev \cite{Rock-Ury00} who fundamentally used the fact that the AVaR at level $u$ of the loss $F(S)$ can be written as
\begin{equation}
\label{eq:avar.oce.def}
	\mathrm{AVaR}_u(F(S)) = \inf_{m\in \mathbb{R}} \Big( \frac{1}{1-u}E[(F(S)-m)^+] + m\Big).
\end{equation}
This representation shows, in particular, that AVaR is almost \emph{risk neutral} for very large losses as it is linear in the tails.
This prompted the generalization to convex risk measures that behave nonlinearly both in the tail and the center of the distribution, including for instance the optimized certainty equivalent (OCE) obtained by replacing the function $x\mapsto \frac{1}{1-u}x^+$ in \eqref{eq:avar.oce.def} by a convex loss function $l\colon\mathbb{R}\to\mathbb{R}$, see Ben-Tal and Teboulle \cite{Ben-Teb,ben-tal02}, or the shortfall risk measure (SF) defined in a similar spirit, see F\"ollmer and Schied \cite{foellmer02} (and Section \ref{sec:main.avar.etc} for details).

More generally, a rigorous unifying approach to risk management was initiated by Artzner et.~al.\ \cite{Artzner1999} and matured into an impressive theory of risk measures.
We refer for instance to the monographs \cite{McNeil15,foellmer01} for excellent expositions.
A general convex risk measure is defined as follows:
\begin{definition}[Convex risk measure]
\label{def:risk.measure.l.infty}
	A functional $\rho\colon L^\infty\to\mathbb{R}$ over a standard probability space is a convex risk measure\footnote{
	Observe that in contrast to the original definition  \cite{Artzner1999,foellmer01}, we take risk measures to be increasing.
	This means that $X$ models the (discounted) loss and $\rho(X)$ is the capital to be added to a position with loss $X$ to make it acceptable, see e.g.~\cite[Chapter 6]{McNeil15} for a similar framework and more details.
	This is done for notational convenience and does not affect generality. In fact, putting $\tilde\rho(X) := \rho(-X)$, the functional $\tilde\rho$ is a risk measure in the sense of \cite{Artzner1999,foellmer01}.
	} if
	\begin{enumerate}[(a)]
	\item $\rho(X+m)=\rho(X)+m$ for all $X$ and $m\in\mathbb{R}$ and $\rho(0)=0$,
	\item $\rho(X)\leq \rho(Y)$ if $X\leq Y$ almost surely,
	\item $\rho(\lambda X + (1-\lambda) Y) \leq \lambda \rho(X)+ (1-\lambda)\rho(Y)$ for $\lambda\in[0,1]$.
	\end{enumerate}
\end{definition}

The monotonicity condition (b) is natural, and models preference for more profits.
The condition (a) known as cash-invariance or translation invariance stems from the desire to interpret $\rho(X)$ as a capital requirement.
That is, the minimal cash value which, if added to the position $X$ would make it acceptable for regulators.
The convexity property (c) means that more diversified positions should be less risky.
In practice, for numerical simulation, it is more convenient to work with the distribution of the loss rather than its observed realization.
Therefore, it is often assumed that risk measures are \emph{law--invariant} (or \emph{law--determined}\footnote{Actually ``law--determined'' might be a more sensible term to describe this property, but in accordance the literature we will use the term ``law--invariant'' since it is predominantly used.}), meaning that
\begin{enumerate}[(d)]
	\item $\rho(X)=\rho(Y)$ if $X\sim Y$, that is, if $X$ and $Y$ have the same distribution.
\end{enumerate}
Observe that most examples of risk measures fulfill this condition.
We make the convention that, throughout this paper, the term `risk measure' always refers to a convex law--invariant risk measure.
Consequently, we will use the shorthand notation 
\[\rho^\mu(F):= \rho(F(S))\quad \text{where}\quad S\sim \mu,\]
that is, $\rho^\mu(F)$ is the risk of $F(S)$ computed according to the risk measure $\rho$ when $S$ has the distribution $\mu$.

\vspace{0.5em}

The numerical computation of a law--invariant risk measure depends on the probability distribution $\mu$ of $S$.
For AVaR for instance, one issue is to efficiently approximate the integrals $\int_{\mathbb{R}} (F(x) - m)^+\,\mu(dx)$ (assuming $S$ is real-valued).
In some cases, this integral operation is computationally costly.
Moreover, in many practical applications the distribution $\mu$ is not precisely known.
A natural idea is to approximate the integral by the sample average $\frac1N\sum_{n=1}^N(F(S_n)-m)^+$ where $S_1,\dots,S_N$ are independent random variables with distribution $\mu$, and the minimization over $m$ (see \eqref{eq:avar.oce.def}) can be reduced to linear programming, see \cite[Section 3]{Rock-Ury00}.
When $\mu$ is not known, this Monte-Carlo simulation can be carried-out on historical data.
\begin{example}
For instance in the context of portfolio optimization, $S$ is a vector of $d$ stocks returns $S:=(S_1^i - S^i_0)_{i=1,\dots,d}$ and $F(S)$ takes the form $F(S):=\sum_{i=1}^d g_i(S_1^i-S_0^i)$ 
where $(g_1,\dots,g_d)$ are portfolio weights.
Strictly speaking, in practice the time series formed by historical returns is of course not i.i.d. 
It shows patterns of changing volatility.
Some workarounds in the literature include working with longer interval returns series, see \cite[Section 4.1]{McNeil15} or using a semi--parametric approach to estimate returns as $\Delta S_i := X_i Z_i$ where $X_i$ is a diagonal volatility matrix modeled by a GARCH model and where the innovation processes $Z_i$ are i.i.d., see \cite[Chapter 2]{McNeil15}.
Thus, we can make the simplifying assumption that we are working with i.i.d.\ observations.
In both cases, denoting by 
\[\mu_N:=\frac1N\sum_{n\leq N}\delta_{S_n} \quad\text{where }  S_1,\dots,S_N\sim S \text{ i.i.d.}\]
the empirical distribution of the $N$ observations $S_1,\dots, S_N$, $\text{AVaR}^{\mu_N}(F)$ is a \emph{non-parametric} finite sample estimator of $\text{AVaR}^{\mu}(F)$.
\end{example}
This idea extends to general risk measures and arbitrary functions $F$:
\begin{definition}[Plug-in estimator]
	For every $N\geq 1$, denote by
	\begin{equation*}
	\rho^{\mu_N}(F) = \rho(F(\widehat S)) \quad\text{where } \widehat S\sim \mu_N
	\end{equation*}
	the \emph{plug-in estimator} of $\rho^\mu(F)$.
\end{definition}

As we will soon observe, this estimator\footnote{The issue of numerical simulation of the estimator $\rho^{\mu_N}(F)$ is considered in \cite{Chu-Tangpi21}.} is \emph{consistent} (see Corollary \ref{cor:oce.consistency}) but typically \emph{underestimates} the true risk $\rho^\mu(F)$ (see Remark \ref{rem:bias}).
The latter observation is consequential from the practical standpoint.
In fact, the idea of risk measures is precisely to protect oneself from risky investments, thus underestimating the risk of an asset is precisely what risk managers want to avoid.
Since the finite sample estimator $\rho^{\mu_N}(F)$ is the most natural in non-parametric estimations and is widely used in practice, it is therefore essential to understand just how much it underestimates the true risk.

Thus the question for the risk manager is:
\begin{align*}
&\text{How \emph{far} is } \rho^{\mu_N}(F) \text{ from }  \rho^\mu(F) \text{ for a fixed sample size  } N?
\end{align*}
This is an essential question because its answer will give theoretical insights allowing risk managers to parsimoniously use data.
To make the question rigorous, one of course needs to give a meaning to `\emph{far}', as the estimation error $|\rho^{\mu_N}(F)-\rho^\mu(F)|$ is random (it depends on the observations from $S$).

The goal of this article is to answer the above question by providing \emph{non-asymptotic} convergence rates on the expected estimation error and on the probability that the estimation  error exceeds some prescribed threshold.
The difference between \emph{asymptotic} rates and \emph{non-asymptotic} rates should be underscored: while there are instances where the asymptotic rates suggest a much faster convergence, this is only true within the asymptotic regime.
In particular, relying on asymptotic rates e.g.\ for the computation of the needed sample size $N$ to guarantee that the estimator preforms well with a certain confidence might give a far too optimistic number.
Non-asymptotic rates however hold for every $N$, and give an order of magnitude of the sample size $N$ needed to achieve a desired estimation accuracy.

\vspace{0.5em}

\emph{Our main results} show the following:
for a general risk measure, the usual $1/\sqrt{N}$ convergence rate dictated by the central limit theorem needs not to hold true.
We introduce a simple and tractable notion of regularity for risk measures (quantified by a parameter $q\in(1,\infty)$) and show that this notion of regularity governs the convergence rates.
In more precision, if a risk measure is regular with parameter $q$, then 
	\begin{align}
	\label{eq:rates.intro}
	\begin{split}
	E^\ast\Big[\big|\rho^{\mu_N}(F)-\rho^{\mu}(F)\big|\Big]
	&\leq \frac{C}{\sqrt{N}^q} \\
	P^\ast\Big[\big|\rho^{\mu_N}(F) - \rho^{\mu}(F)\big|\geq \varepsilon \Big]
	&\leq C\exp\Big(-cN  \varepsilon^{2q} \Big)
	\end{split}
	\end{align}
for all $N\geq 1$ and $\varepsilon>0$, where $c,C$ are two positive constants depending on $\rho$ and (the $L^\infty$-norm of) $F$; see Theorem \ref{thm:main.hedging.on.Linfty.intro}.
Here $E^\ast$ and $P^\ast$ denote the outer expectation and outer probability, respectively, see, e.g., \cite{van1996weak}.
They are used since $\rho^{\mu_N}$ does not necessarily need to be measurable.
Notably, we show that these rates are sharp (at least up to a factor of $2$); see Proposition \ref{prop:non.asymptotic.no.rates.into}.
While the above represents the major part of this work, we also consider three explicit risk measures separately (the average value at risk, the optimized certainty equivalent, and the shortfall risk) and show that then the usual $1/\sqrt{N}$  convergence rate dictated by the central limit theorem can be recovered in a non-asymptotic fashion, namely \eqref{eq:rates.intro} holds true for all $N\geq 1$ and $\varepsilon>0$ with $q=1$.

In practice, there is much more to risk management than computing the numerical value $\rho(F(S))$ for a given loss $F(S)$: 
in many situations risk managers additionally hedge their exposure to $F(S)$ by investing in the stock market, resulting in a risk based (super-)hedging problem or a utility maximization problem.
Another important part of our work focuses on the investigation of these type of problems and we show that the same rates of convergence as above remain valid in this more general situation.

\vspace{0.5em}
The reader familiar with the theory of sample average approximation (SAA) will likely recognize the average value at risk \eqref{eq:avar.oce.def} as a stochastic programming problem.
Finite sample approximation of such problems have been extensively studied, at least as far as \emph{asymptotic convergence rates} are concerned. 
We give a few references below. 
Note however that, in the generality of Definition \ref{def:risk.measure.l.infty}, risk measures cannot typically be written as the value of a stochastic programming problem (using Kusuoka's theorem we can only write them as a maximum of more and more irregular  stochastic programs).
This substantially complicates the analysis.
The main contribution of the paper is to consider this general situation by balancing the irregularity of the stochastic programs by means of the regularity of the risk measure.
To obtain finite sample rates of convergence, we employ techniques from empirical process theory, specifically Dudley's theorem.
This allow us to consider the above mentioned case of additional hedging and yields rates that do not depend on the dimension / number of hedging instruments.
Finally, let us stress that $F$ (as well as the hedging instruments) are not subject to any continuity condition.

\subsection{Related literature}
The estimation of risk measures is an essential question in quantitative finance, and as such has received a lot of attention, we refer for instance to the monograph of McNeil, Frey and Embrechts \cite{McNeil15} for an in-depth treatment.
See also the book of Glasserman \cite[Chapter 9]{Glass04} for the case of (average) value-at-risk.
In mathematical finance, there is a growing interest on statistical aspects of quantitative risk management.
See Embrechts and Hofert \cite{Emb-Hof14} for an excellent review of the main lines of research in this direction.
Concerning statistical estimation of risk measures, one of the earliest work is that of Weber \cite{Weber07} who considered the problem of estimating $\rho^\mu(F)$ in an asymptotic fashion as $N\to\infty$.
By means of the theory of large deviations, he showed that if $\rho$ is sufficiently regular, then $\rho^{\mu_N}(F)$ satisfies a large deviation principle.
Along the same lines, \cite{Beu-Zaeh10,belomestny2012central,Chen08} obtained central limit theorems for $\rho^{\mu_N}(F)$; see also \cite[Chapter 6]{ADR-2014}.

Aside from large deviation and central limit theorem results, several authors have investigated estimation of specific risk measures and (super)hedging functionals.
These include Pal \cite{pal2006capital,pal2007computing}  who analyzes hedging under risk measures which can be written as the finite maxima of expectations.
Let us further refer to \cite{Pitera-Sch,Krae-Schield-Zaehle15,Guigues2018,Holzmann2020} for other (asymptotic) estimation results, mostly for the average value at risk and expectiles, and under some assumptions on the distribution $\mu$; see e.g.\  Hong, Hu and Liu  \cite{hong2014monte} for a review.
A deviation-type inequality for the value at risk is proposed by Jin, Fu and Xiong \cite{Jin-Fu-Xiong2003}.
The problem of strict superhedging was recently considered by Ob{\l}{\'o}j and Wiesel \cite{obloj2018statistical}, and the problem of portfolio optimization under heavy tails by Bartl and Mendelson \cite{bartl2022monte}.
%this problem depends solely on the (topological) support of the underlying measure and therefore no rates are available in general.

When the estimation of $\rho^\mu(F)$ is performed repeatedly or periodically, it is  important that the estimator $\rho^{\mu_N}(F)$ be stable, i.e.\ insensitive to small changes of $\mu_N$.
Such insensitivity is often referred to as \emph{robustness} of the risk measure and was first analyzed by Cont, Deguest and Scandolo \cite{cont2010robustness} who investigated a concept of robustness essentially equivalent to continuity of $\rho$ w.r.t.\ weak convergence of measures.
Alternative approaches to robustness were later proposed and analyzed by  Claus, Kr{\"a}tschmer, Schied, Schultz and Z\"ahle \cite{Krae-Sch-Zaeh14,Kraet-Zaehhle17,Krae-Sch-Zaeh2012,Claus2017,Krae-Schie-Zaehle17}.
Along the same lines some authors have investigated risk measures (and other stochastic maximization problems) under model uncertainty to account for the effect of possible misspecification of the estimated model, see e.g.\ \cite{Roboptim,Esfahani,Eck-Kup-Pohl}
 where it is often assumed that the true model belongs to a Wasserstein ball.

Beyond estimation of risk measures, a rich literature in operations research is devoted to the estimation of the value of stochastic optimization problems similar to OCE through the empirical distribution of the underlying probability measure.
This technique goes under the name sample average approximation as mentioned above.
The bulk of the literature in this direction is concerned with convergence issues and questions related to computational complexity of the estimators, see e.g.\ \cite{Kleywegt2001,Bertsimas2017} and the book chapter \cite{Kim-Pas-Hen} for a recent overview.
We also refer to the recent preprint \cite{Kraetschmer21} for asymptotic estimation results as well as error bound estimations using empirical process theory.

\subsection{Organization of the rest of the paper}

We start by presenting the main results of this article in the next section.
The proofs of the moment bounds will be given in Section \ref{sec:oce} for special cases of risk measures and in Section \ref{sec:general.rm}, the main part of the paper where convergence rates for general risk measures are proved.
In these sections, we will also state generalizations of our results to unbounded cases.
The deviation inequalities will be proved in 
Section \ref{sec:deviation}.
Sharpness of the rates for general risk measures is discussed in Section \ref{sec:sharpness} and all remaining proofs are presented in Section \ref{sec:aux}.
The paper ends with an appendix on the theory of empirical processes. 

\section{Main results}
\label{sec:main.results.}

Before presenting our main results, let us generalize the setting of the introduction to the more practically relevant situation where the risk manager can offset the risk from $F(S)$ by trading.
Henceforth, $\mu$ denotes the distribution of $S$ which is a probability measure on a Polish space $\mathcal{X}$ and $\mu_N$ denotes the empirical measure of $\mu$ build from an i.i.d.\ sample $(S_n)_{n\leq N}$ defined on some abstract probability space $(\Omega,\mathcal{F},P)$.
Moreover, $F\colon\mathcal{X}\to\mathbb{R}$ is a measurable function.
In fact, we can additionally consider (measurable) options $G_1,\dots,G_e\colon\mathcal{X}\to\mathbb{R}$ available for \emph{trading} without loss of generality at price zero (where $e\in\mathbb{N}$).
Trading according to a strategy $g\in\mathbb{R}^e$ then yields the outcome $F+\sum_{i=1}^e g_i G_i$.
Thus, assuming the interest rate to be zero throughout, the risk manager's task is to estimate the minimal risk incurred when trading in the option market, that is, to compute
\[ \pi^\mu(F):=\inf_{g\in\mathcal{G}} \rho^\mu\Big(F + \sum_{i=1}^e g_iG_i\Big),\]
where $\mathcal{G}\subset\mathbb{R}^e$ is the set of all admissible trading strategies.
Loosely speaking, the goal here is to `absorb' extreme outcomes of $F$ by trading.
For instance, $\mathcal{G}=\{ g \in[0,1]^e: g_1+\cdots +g_e=1\}$ corresponds to \emph{portfolio optimization}; see \cite{ADR-2014} for some background.
Notice that if $0$ is the only admissible trading strategy, i.e.\ $\mathcal{G} = \{0\}$, then we have $\pi^\mu=\rho^\mu$ and hence all results obtained for $\pi$ translate to $\rho$ as introduced in the previous section.

In an effort to simplify the presentation in this section, we state the results for risk measures defined on bounded random variables and assume throughout this section that $F$ and $G$ are all bounded functions.
In the later sections we partially replace boundedness by integrability assumptions, at the cost of more involved notation.
Moreover, $\mathcal{G}\subset\mathbb{R}^e$ is assumed to be a bounded\footnote{That is, there is a constant $C_\mathcal{G}>0$ such that $\sup_{x\in \mathcal{G}}|x|\le C_\mathcal{G}$, where $|\cdot|$ denotes the Euclidean norm on $\mathbb{R}^e$.} set throughout this article, and this assumption can quickly be checked to be necessary (see Proposition \ref{prop:G.needs.to.be.bounded}).
In order to avoid  discussions regarding measurability issues, we assume throughout this article that $\mathcal{G}$ is a countable set.
As explained in Remark \ref{rem:G.countable}, this assumption can actually be made without loss of generality.

\subsection{Results for general convex risk measures}
Let us now present our main results pertaining to the estimation of law--invariant risk measures in the generality of Definition \ref{def:risk.measure.l.infty}.
As it will become more and more apparent throughout this article, it is necessary to impose some form of continuity assumption on the risk measure in order to derive non-asymptotic convergence rates, see Proposition \ref{prop:non.asymptotic.no.rates.into} and Remark \ref{rem:fatou.lebesque} below.

In order to start with a positive result, we define the notion of continuity that we require right away and discuss its rationale and necessity afterwards.

\begin{definition}[$q$-Regularity]
\label{def:pareto}
	For $q\in(1,\infty)$ a convex risk measure is said to be $q$-\emph{regular} if it satisfies
	\[\sup_{n\in\mathbb{N}} \rho(X\wedge n)<\infty \]
	for all random variables $X$ following a Pareto distribution with shape parameter $q$.
\end{definition}

Recall here that a random variable $X$ has Pareto distribution with scale parameter $x>0$ and shape parameter $q>0$ if 
\[ P[X\ge t] = \begin{cases}
(x/t)^q &\text{if } t\ge x,\\
1 &\text{if } t< x.\end{cases}\]

As the name suggests, $q$-regularity reflects a certain notion of continuity and the familiar reader may recognize it to be stronger than the two classical notions of regularity for risk measures, namely the Fatou property and the Lebesgue property.
The latter two are, however, not enough to guarantee \emph{any} convergence rates, see Section \ref{sec:sharpness}.

The following is the main result of this article.

\begin{theorem}[Rates for general risk measures]
\label{thm:main.hedging.on.Linfty.intro}
	Let $q\in(1,\infty)$ and let $\rho\colon L^\infty\to\mathbb{R}$ be a $q$-regular risk measure.
	Then there are constants $c,C>0$ such that the following hold.
	\begin{enumerate}[(i)]
	\item
	We have the moment bound
	\[ E^\ast\Big[\big|\pi^{\mu}(F)-\pi^{\mu_N}(F)\big|\Big]
	\leq \frac{C}{N^{1/2q}} \]
	for all $N\geq 1$.
	\item
	We have the matching deviation inequality
	\[ P^\ast\Big[\big|\pi^{\mu}(F)-\pi^{\mu_N}(F)\big|\geq \varepsilon \Big]
	\leq C\exp\Big(-cN  \varepsilon^{2q} \Big)\]
	for all $N\geq 1$ and all $\varepsilon>0$.
	\end{enumerate}
\end{theorem}

An important observation is that throughout this paper the rates will never depend on the number $e$ of options, nor on the `dimension' of the underlying space $\mathcal{X}$.
The constants $c$ and $C$ depend on $\rho$, the maximal range of $F$, $G$, the number of options $e$ and the  
 maximal Euclidean norm in $\mathcal{G}$.

\vspace{0.5em}
As an immediate consequence of Theorem \ref{thm:main.hedging.on.Linfty.intro} part \eqref{item:oce.intro.deviation} and the Borel--Cantelli lemma, we obtain that $\pi^{\mu_N}(F)$ is a strongly consistent estimator of $\pi^\mu(F)$.
Note that, under the assumption that $F$ and $G$ are additionally continuous, this is a trivial consequence of weak continuity of $\nu\mapsto \pi^\nu(F)$ and weak  convergence of the empirical measure to the true one.
However, this reasoning does not apply in the present setting as $F$ and $G$ are merely measurable and thus $\nu\mapsto \pi^\nu(F)$ can be discontinuous.

\begin{corollary}[Consistency]
\label{cor:oce.consistency}
	In the setting of Theorem \ref{thm:main.hedging.on.Linfty.intro} we have that
	\[ \lim_{N\to\infty} \pi^{\mu_N}(F) = \pi^{\mu}(F) \]
	$P^\ast$--almost surely.
\end{corollary}

\begin{remark}
\label{rem:tail.integral}
	An interesting by--product of the deviation inequality given in Theorem \ref{thm:main.hedging.on.Linfty.intro} is that it allows to give a non--asymptotic estimation of the error in the $L^p$ norm.
	Indeed, using the tail--integration $E^\ast[|X|^p]=p\int_0^\infty x^{p-1}  P^\ast[|X|\geq x] \,dx$, it follows from part (ii) in Theorem \ref{thm:main.hedging.on.Linfty.intro} that
	\begin{align*} 
		E^\ast\Big[\big|\pi^{\mu}(F)-\pi^{\mu_N}(F)\big|^p\Big]^{1/p} \leq C \frac{ \sqrt{p} }{N^{1/2q}}.
	\end{align*}
	for every $p\geq 1$.
\end{remark}

Let us now come back to the notion of regularity in Definition \ref{def:pareto} and explain both its rationale and necessity. 
This is easiest done with the example of following two risk measures: $\rho_{\max}(X):=\mathop{\mathrm{ess.sup}} X$ and $\rho_{\mathrm{mean}}(X):= E[X]$. 
Then $\rho_{\mathrm{mean}}^{\mu_N}(F)$ is just the empirical mean of $F(S)$ and thus convergence happens at the usual rate $1/\sqrt{N}$.
On the other extreme of the spectrum, $\rho_{\max}^{\mu_N}(F)$ equals the empirical $1$-quantile and it is well known that without very specific assumptions, convergence may happen at arbitrarily slow speed; the unfamiliar reader may skip to Remark \ref{rem:rates.without.lebesque}.
A simple observation pertaining to the source of this different behavior is that small changes of $X$ will result in small changes of $\rho_{\mathrm{mean}}$, while this is not the case for $\rho_{\max}$.
Indeed, changes of $X$ on almost negligible sets can result in significant changes of $\rho_{\max}(X)$ and (unfortunately) a random sample cannot properly exhibit almost negligible events.
From this perspective, $\rho_{\mathrm{mean}}$ is very regular (and indeed, it is $q$-regular for every $q\in(1,\infty)$) while $\rho_{\max}$ is not regular at all (and indeed, it lacks $q$-regularity for any $q\in(1,\infty)$).

While the above discussion focused only on two very extreme risk measures, it happens that Definition \ref{def:pareto} actually interpolates between these two examples.
Indeed, the proposition below shows that the rates obtained in Theorem \ref{thm:main.hedging.on.Linfty.intro} are optimal, at least up to a factor of $2$.

\begin{proposition}[Sharpness of rates]
\label{prop:non.asymptotic.no.rates.into}
	Let $q\in(1,\infty)$ and assume that $F$ takes (at least) two distinct values.
	Then there is a coherent law--invariant risk measure $\rho\colon L^\infty\to\mathbb{R}$ which is $(q+\varepsilon)$-regular for every $\varepsilon>0$ and a constant $c>0$ such that
	\[ \sup_{\mu } E\Big[\big|\rho^\mu(F)-\rho^{\mu_N}(F)\big|\Big] \geq \frac{c}{N^{1/q}} \]
	for all 
	$N\geq 1$.
\end{proposition}

Currently, the authors do not know whether Proposition \ref{prop:non.asymptotic.no.rates.into} can be improved to show that the rates obtained in Theorem \ref{thm:main.hedging.on.Linfty.intro} are actually sharp (i.e.\ whether Proposition \ref{prop:non.asymptotic.no.rates.into} holds with $N^{-1/2q}$ instead of $N^{-1/q}$).
One indication that this might be true is the following:
for $q\approx 1$ the lower bound of Proposition \ref{prop:non.asymptotic.no.rates.into} is approximately $1/N$ but we already know that the actual best possible rate is $1/\sqrt{N}$, as is dictated by the central limit theorem; see Section \ref{sec:sharpness} for a short discussion.
That is, for $q\approx 1$ the lower bound is off exactly by the factor of two.

\vspace{0.5em}
Let us conclude this subsection with a comment regarding the proof of Theorem \ref{thm:main.hedging.on.Linfty.intro}.
As already mentioned, it builds upon empirical processes theory; specifically Dudley's entropy integral theorem (see Appendix \ref{sec:empirical.processes}).
One could, however, wonder whether, the statements of Theorem \ref{thm:main.hedging.on.Linfty.intro} (and Theorem \ref{thm:main.oce.hedging.intro}) would follow from some rather simple to obtain continuity in Wasserstein distance of $\mu\mapsto \rho^\mu(F)$ in combination with convergence rates of empirical measure in Wasserstein distance -- at least if $\mathcal{X}=\mathbb{R}^d$ and $F, G$ are Lipschitz continuous.
While this technique certainly works for dimension $d=1$, in the present general, multidimensional setting this approach would force the convergence rates to be significantly worse:
In dimension $d\geq 3$, the Wasserstein distance converges with rate $N^{-1/d}$  see \cite{Fou-Gui15}.
Thus, even for $q\approx 1$, these arguments will give the rate $N^{-1/d}$ in Theorem \ref{thm:main.hedging.on.Linfty.intro} instead of $N^{-1/2}$.

\subsection{Results for AVaR, OCE, and SF risk measures}
\label{sec:main.avar.etc}

It turns out that for all the specific risk measures discussed in the introduction, the optimal rate $N^{-1/2}$ can be obtained, and with easier arguments.
We therefore state the results for these risk measures separately.
For any measurable $F\colon\mathcal{X}\to\mathbb{R}$, recall that the shortfall risk measure \cite{foellmer01} is defined as
\begin{align*}
 \mathrm{SF}^\mu(F)&:=\inf\Big\{ m\in\mathbb{R} :  E[l(F(S) - m)]\leq 0  \Big\}.
\end{align*}
Here $l\colon\mathbb{R}\to [-1,\infty)$ is a loss function, meaning that $l$ is increasing  and convex such that $1\in\partial l(0)$ (the subdifferential at point 0) and $l(0)=0$.
In other words, $\mathrm{SF}^\mu(F)$ is the smallest capital $m$ by which we should reduce the loss $F$ to make it acceptable, meaning that the expected loss $E[l(F(S)-m)]$ is below the threshold $l(0)=0$.

In a similar spirit, the optimized certainty equivalent (OCE) of Ben-Tal and Teboulle  \cite{ben-tal02,Ben-Teb} is defined via 
\begin{equation}
\label{eq:def.OCE}
	\mathrm{OCE}^\mu(F):=\inf_{m\in\mathbb{R}} ( E[ l(F(S)-m)]+m).
\end{equation}
Again $l$ is a loss function and the interpretation is similar to that of shortfall risk.
Importantly, OCEs cover popular risk measures such as the average value at risk obtained with $l(x)=x^+/(1-u)$ or the entropic risk measure obtained with $l(x) = e^x-1$. 
The following result gives the convergence rate for these particular examples of risk measures:

\begin{theorem}[Rates for AVaR, OCE, and SF]
\label{thm:main.oce.hedging.intro}
	Let $\rho=\mathrm{OCE}$ or $\rho=\mathrm{SF}$ and in the latter case assume that $l$ is strictly increasing.
	There are constants $c,C>0$ such that the following hold.
	\begin{enumerate}[(i)]
	\item
	\label{item:oce.intro.mean}
	We have the moment bound
	\[ E \Big[\big|\pi^{\mu}(F)-\pi^{\mu_N}(F)\big|\Big]
	\leq \frac{C}{\sqrt{N}} \]
	for all $N \geq 1$.
	\item
	\label{item:oce.intro.deviation}
	We have the matching deviation inequality
	\[ P\Big[\big|\pi^{\mu}(F)-\pi^{\mu_N}(F)\big|\geq \varepsilon \Big]
	\leq C\exp\Big(-cN\varepsilon^2\Big)\]
	for all $N\geq 1$ and all $\varepsilon>0$.
	\end{enumerate}
\end{theorem}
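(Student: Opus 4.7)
The plan is to rewrite $\pi^\mu(F)$ as an infimum over a compact finite-dimensional parameter set of a quantity that is linear in $\mu$ (or nearly so), and then to reduce the estimation error to a supremum over the parameter of an empirical process. For OCE one has directly
\[ \pi^\mu(F) = \inf_{(g,m)\in\mathcal{G}\times\mathbb{R}} \Bigl( E^\mu[\psi_{g,m}] + m \Bigr), \qquad \psi_{g,m}(x) := l\Bigl( F(x) + \sum_{i=1}^{e} g_i (G_i(x)-p_i) - m \Bigr). \]
Since $F,G_1,\ldots,G_e$ and $\mathcal{G}$ are bounded, only $m\in[-M,M]$ for some $M$ large enough matters: for $m$ very positive the $+m$ term dominates, while for $m$ very negative $E^\mu[\psi_{g,m}]$ grows at least like $-m$ (using $1\in\partial l(0)$). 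For SF, strict monotonicity and convexity of $l$ make $m\mapsto E^\mu[l(H-m)]$ strictly decreasing and continuous, so the infimum defining $\mathrm{SF}^\mu(H)$ is the unique root $m^\mu(H)$ of $E^\mu[l(H-m)]=1$, and $\pi^\mu(F)=\inf_{g\in\mathcal{G}} m^\mu\bigl(F+\sum_i g_i(G_i-p_i)\bigr)$. In both cases the relevant parameter set is $\Theta:=\mathcal{G}\times[-M,M]\subset\mathbb{R}^{e+1}$.

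\textbf{Reduction to an empirical process.} For OCE the trivial inequality $|\inf f-\inf g|\leq\sup|f-g|$ immediately yields
\[ \bigl| \pi^\mu(F)-\pi^{\mu_N}(F) \bigr| \leq \sup_{\theta\in\Theta} \bigl| E^\mu[\psi_\theta] - E^{\mu_N}[\psi_\theta] \bigr|. \]
For SF, since the argument $H-m$ ranges over a bounded interval $I$, strict monotonicity together with convexity of $l$ gives a uniform lower bound $\lambda>0$ on the slope of $l$ on $I$. Comparing the two defining equations $E^\mu[l(H-m^\mu)]=1=E^{\mu_N}[l(H-m^{\mu_N})]$ and using monotonicity then yields $|m^\mu(H)-m^{\mu_N}(H)|\leq \lambda^{-1}|E^\mu[l(H-m^{\mu_N}(H))]-E^{\mu_N}[l(H-m^{\mu_N}(H))]|$, and taking infimum over $g$ gives the same supremum-over-$\Theta$ bound (up to the constant $\lambda^{-1}$).

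\textbf{Empirical process bounds.} The class $\{\psi_\theta\}_{\theta\in\Theta}$ is uniformly bounded (since the argument of $l$ is bounded and $l$ is continuous), and $\theta\mapsto\psi_\theta(x)$ is Lipschitz with a constant independent of $x$ (since $l$ is locally Lipschitz on $I$ and $F,G_i$ are bounded). Hence its sup-norm covering number is polynomial in $1/\varepsilon$, and Dudley's entropy integral after symmetrization gives $E[\sup_\theta|E^\mu[\psi_\theta]-E^{\mu_N}[\psi_\theta]|]\leq C/\sqrt{N}$, proving (i). For (ii), the uniform boundedness of $\psi_\theta$ lets us apply McDiarmid's bounded differences inequality to $(X_1,\ldots,X_N)\mapsto \sup_\theta |E^\mu[\psi_\theta]-N^{-1}\sum_i \psi_\theta(X_i)|$, giving sub-Gaussian concentration of the form $\exp(-cN\varepsilon^2)$ around its mean; combining with the $O(1/\sqrt{N})$ expectation bound delivers the deviation inequality.

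\textbf{Main obstacle.} The only non-routine step is the SF inversion: one has to promote the implicit definition of $\mathrm{SF}^\mu$ to an explicit Lipschitz dependence of $m^\mu(H)$ on the empirical process $\mu\mapsto E^\mu[l(\cdot)]$. This requires the uniform positive lower bound $\lambda>0$ on the slope of $l$ on the bounded range $I$, which is precisely where the strict-monotonicity assumption enters (monotonicity of $\partial l$ combined with strict monotonicity of $l$ forces every selection of the subgradient to be strictly positive, hence bounded away from $0$ on the bounded interval $I$). Everything else is standard parametric empirical process theory given the reduction.
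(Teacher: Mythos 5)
Your proposal follows the same overall structure as the paper: reduce the estimation error to the supremum of an empirical process over a uniformly bounded function class indexed by a compact, finite-dimensional parameter set $\Theta=\mathcal{G}\times[-M,M]$, bound the sup-norm covering numbers polynomially in $1/\varepsilon$ (the paper's Corollary~\ref{cor:covering.compact}), and apply Dudley's entropy integral for part~(i). For SF the paper defines $J(m):=\inf_{g\in\mathcal{G}}\int l(F+g\cdot G-m)\,d\mu$ and inverts this single monotone map, whereas you invert $m\mapsto E^\mu[l(H_g-m)]$ for each fixed $g$ and then apply $|\inf_g\cdot-\inf_g\cdot|\le\sup_g|\cdot-\cdot|$; these two organizations of the same computation are equivalent and rest on the identical slope lower bound $\lambda=\inf_{|t|\le 2a}l'(t)>0$ obtained from strict monotonicity together with convexity. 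The one genuine divergence is in part~(ii): the paper invokes the empirical-process tail bound of van der Vaart and Wellner (Theorem 2.14.10 of their book), which packages the entropy control directly into the sub-Gaussian estimate, whereas you use McDiarmid's bounded-differences inequality to obtain sub-Gaussian concentration of $Z_N:=\sup_{\theta\in\Theta}|E^\mu[\psi_\theta]-E^{\mu_N}[\psi_\theta]|$ around its mean and then feed in the $O(1/\sqrt N)$ bound on $E[Z_N]$ from part~(i). Both deliver $P[Z_N\ge\varepsilon]\le C\exp(-cN\varepsilon^2)$ (the small-$\varepsilon$ regime where $\varepsilon\lesssim 1/\sqrt N$ is handled by enlarging $C$ so that the bound exceeds one), and your McDiarmid-plus-Dudley route is the more elementary and self-contained of the two; the paper's choice buys a slightly cleaner single-step argument at the cost of citing a more specialized result. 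Both arguments are correct.
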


The constants $c$ and $C$ depend on $l$, the maximal range of $F$, $G$, the number of options $e$ and the largest Euclidean norm in $\mathcal{G}$.
The rates obtained in both parts of Theorem \ref{thm:main.oce.hedging.intro} are  the usual rates dictated by the central limit theorem and in particular are optimal, see Section \ref{sec:sharpness}.

\begin{remark}
\label{rem:meas.oceN}
	Note that $\pi^{\mu_N}(F)$ is readily checked to be measurable, and we do not need to resort to the outer expectation and probability in Theorem \ref{thm:main.oce.hedging.intro}.
	In fact, if for instance $\rho=\mathrm{OCE}$, then by continuity of the function $l$, we can write
	\[ \pi^{\mu_N}(F):=\inf_{g\in\mathcal{G}, m\in \mathbb{Q}}\frac{1}{N}\sum_{n\le N}l\Big(F(S_n) + \sum_{i=1}^e g_iG_i(S_n) -m\Big) +m.\]
	Recalling that $\mathcal{G}$ is countable, this shows that the random variable $\pi^{\mu_N}(F)$ is measurable.
\end{remark}

As before, Theorem \ref{thm:main.oce.hedging.intro} implies strong consistency:

\begin{corollary}[Consistency]
\label{cor:oce.consistency}
	In the setting of Theorem \ref{thm:main.oce.hedging.intro},  $\lim_{N\to\infty} \pi^{\mu_N}(F) = \pi^{\mu}(F)$ 	$P$-almost surely.
\end{corollary}

\vspace{0.5em}
Let us conclude this subsection with a short discussion on the \emph{biasedness} of $\rho^{\mu_N}(F)$ claimed in the introduction.

\begin{remark}[Biasedness]
\label{rem:bias}
	For typical risk measures,  $\rho^{\mu_N}(F)$ underestimates $\rho(F)$.
	This is easiest explained by considering the optimized certainty equivalents. 
	In fact, we have
	\begin{align*}
	E\big[ \mathrm{OCE}^{\mu_N}(F) \big]
	&=E\Big[ \inf_{m\in\mathbb{R}} \int_{\mathcal{X}} l(F(x)-m)+m \,\mu_N(dx) \Big] \\
	&\leq \inf_{m\in\mathbb{R}} E\Big[ \int_{\mathcal{X}} l(F(x)-m)+m \,\mu_N(dx) \Big] \\  
	&=\mathrm{OCE}^{\mu}(F),
	\end{align*}
	where the last equality follows as $E[ \int f\,d\mu_N]=\int f\,d\mu$ for every $\mu$-integrable function $f$.
	The same applies in the presence of trading, namely $E[ \pi^{\mu_N}(F)]\leq \pi^\mu(F)$.

	More generally, a quick inspection of OCE and SF reveals that both are concave considered as mappings of $\mu$.
	As a matter of fact, this very concavity is the reason for the bias.
	Indeed, concavity and lower--semicontinuity of $\mu\mapsto \rho^\mu(F)$ implies, thanks to Jensen's inequality for infinite dimensional random variables (see e.g. \cite[Theorem 3.1]{Nonnenmacher}), that 
	\[ E[\rho^{\mu_N}(F)]
	\leq \rho^{E[\mu_N]}(F)
	=\rho^\mu(F)\]
	where we used that the measure--valued random variable $\mu_N$ has mean $\mu$.
	While it should be noted that not all law--invariant risk measures are concave in $\mu$, this is often the case\footnote{
	For instance, this is always true for law--invariant comonotonic risk measure, see \cite[Corollary 10]{acciaio2013law}}; see Acciaio and Svindland \cite{acciaio2013law}.
\end{remark}

We further refer to Pitera and Schmidt \cite{Pitera-Sch} for a more in-depth discussion on the issue of biasedness and some empirical evidence.

\subsection{Utility maximization}

It is conceivable that most of the results and methods of the present article extend beyond the estimation of risk measures.
Other issues which seem to fit to our framework and method include the estimation of risk premium principles in insurance, (see e.g.\ Young \cite{Young14} or Furman and Zitikis \cite{Fur-Zit08} for an overview), or estimation of the value of some stochastic optimization problems.

To illustrate the latter, let us consider  another popular approach for quantifying the riskiness of a position, namely utility maximization:
Let $U\colon\mathbb{R}\to\mathbb{R}$ be a concave increasing function and set $u^\mu(F):=E^\mu[U(F(S))]$.
Similar as before, allowing the agent to invest in a market (with stocks returns $G_1,\dots, G_e$), one obtains the utility maximization problem
\[ u_{\mathrm{max}}^\mu(F):=\sup_{g\in\mathcal{G}}  u^\mu\Big( F+\sum_{i=1}^e g_i G_i \Big). \]
In this case, we have the following:

\begin{proposition}[Utility maximization]
\label{prop:utility}
	There are constants $c,C>0$ such that
	\begin{align*}
	E\Big[\big| u_{\mathrm{max}}^\mu(F) - u_{\mathrm{max}}^{\mu_N}(F)\big|\Big]
	&\leq \frac{C}{\sqrt{N}},\\
	P\Big[ \big| u_{\mathrm{max}}^\mu(F) - u_{\mathrm{max}}^{\mu_N}(F)\big|\geq \varepsilon\Big]
	&\leq C\exp\Big(-cN\varepsilon^2\Big)
	\end{align*}
	for all $N\geq 1$ and $\varepsilon>0$.
\end{proposition}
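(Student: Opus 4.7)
The plan is to reduce the utility maximization problem to controlling a single empirical process
\[ \Delta_N := \sup_{g\in\mathcal{G}} \Big| (E^\mu - E^{\mu_N})\big[ U(F + g\cdot G) \big] \Big|,\]
where I write $g\cdot G:=\sum_{i=1}^e g_iG_i$. The elementary inequality $|\sup_a h_1(a)-\sup_a h_2(a)|\leq \sup_a |h_1(a)-h_2(a)|$ shows that $| u_{\max}^\mu(F) - u_{\max}^{\mu_N}(F)|\leq \Delta_N$, so it suffices to prove a moment bound and a deviation bound for $\Delta_N$. Because $F,G_1,\dots,G_e$ are bounded and $\mathcal{G}$ is bounded, the quantity $M := \sup_{g\in\mathcal{G},\,x\in\mathcal{X}}|F(x)+g\cdot G(x)|$ is finite. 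Since $U$ is real-valued and concave on $\mathbb{R}$, it is automatically Lipschitz on $[-M,M]$ with some constant $L$; set $B:=|U(0)|+LM$. Then the functions $\phi_g(x):=U(F(x)+g\cdot G(x))$ satisfy $\|\phi_g\|_\infty\leq B$ and the Lipschitz-in-$g$ bound $|\phi_g(x)-\phi_{g'}(x)|\leq L\max_i\|G_i\|_\infty \|g-g'\|_1$ for all $x$.

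For the deviation inequality, I apply McDiarmid's bounded differences inequality to $\Delta_N$: replacing one of the $N$ i.i.d.\ observations changes $\Delta_N$ by at most $2B/N$, so
\[ P\big[\Delta_N \geq E[\Delta_N]+t\big] \leq \exp\Big(-\frac{Nt^2}{2B^2}\Big)\quad\text{for all }t>0.\]
For the moment bound, I use the standard symmetrization inequality
\[ E[\Delta_N] \leq 2\,E\bigg[\sup_{g\in\mathcal{G}}\bigg|\frac{1}{N}\sum_{i=1}^N\varepsilon_i \phi_g(X_i)\bigg|\bigg],\]
with $\varepsilon_i$ i.i.d.\ Rademacher, and control the Rademacher complexity on the right by a simple covering argument: choose a $\delta$-net of $\mathcal{G}\subset\mathbb{R}^e$ (which has cardinality at most $(C_\mathcal{G}/\delta)^e$ since $\mathcal{G}$ is a bounded subset of $\mathbb{R}^e$), apply Hoeffding's inequality together with a union bound over the net, and control the residual by the uniform Lipschitz property in $g$. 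Optimizing over $\delta\sim 1/\sqrt{N}$ yields $E[\Delta_N]\leq C/\sqrt{N}$. Combining with McDiarmid, for $\varepsilon\geq 2E[\Delta_N]$ one has $\varepsilon-E[\Delta_N]\geq \varepsilon/2$ and hence $P[\Delta_N\geq\varepsilon]\leq \exp(-N\varepsilon^2/(8B^2))$; for $\varepsilon<2E[\Delta_N]\leq 2C/\sqrt{N}$, the exponent $N\varepsilon^2$ is bounded and the bound becomes trivial after enlarging the prefactor $C$. This delivers both statements.

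The argument uses no tool more sophisticated than McDiarmid's inequality and a Lipschitz-parametric empirical process bound, so I do not expect a real obstacle. The main reason utility maximization is so much easier than the risk measure case (Theorem~\ref{thm:main.oce.hedging.intro}) is that $u_{\max}^\mu(F)$ already has the form of a supremum of expectations of bounded Lipschitz transformations of the parameter $g$, with no inner infimum over $m$ as in OCE/SF, and no appeal to the dual representation of a general law invariant risk measure. The only mild care is to verify that $L$, $B$, and $\|G_i\|_\infty$ yield constants independent of $N$, which is immediate from the standing boundedness assumptions; note in particular that the rate $1/\sqrt{N}$ does not depend on the dimension $e$, exactly as announced, because the covering number enters only logarithmically after the union bound.
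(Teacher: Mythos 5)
Your overall reduction is the right one and essentially matches the paper's: pass to the empirical process $\Delta_N=\sup_{g\in\mathcal{G}}\bigl|\int U(F+g\cdot G)\,d(\mu-\mu_N)\bigr|$ via the sup-difference inequality, observe that $\{U(F+g\cdot G):g\in\mathcal{G}\}$ is a bounded class with Lipschitz parametric dependence on $g$ (this is exactly what the paper's Corollary~\ref{cor:covering.compact} records as a covering-number bound $(C/\varepsilon)^{e}\vee 1$), and then apply empirical-process tools. Your use of McDiarmid for the deviation inequality is a perfectly legitimate and more elementary substitute for the Talagrand-type bound the paper invokes (\cite[Theorem 2.14.10]{van1996weak}).

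The gap is in the moment bound. A single $\delta$-net, Hoeffding, and a union bound give
\[
E[\Delta_N]\ \lesssim\ B\sqrt{\tfrac{e\log(C/\delta)}{N}}+L'\delta,
\]
and no choice of $\delta$ removes the logarithm: $\delta\sim N^{-\alpha}$ for any $\alpha>0$ yields $E[\Delta_N]=O(\sqrt{\log N/N})$, while a constant $\delta$ leaves a non-vanishing residual $L'\delta$. In particular ``optimizing over $\delta\sim 1/\sqrt{N}$'' gives $C\sqrt{\log N/N}$, which is \emph{not} $\leq C/\sqrt{N}$ for a constant $C$. This defect propagates into your deviation inequality as well, since you absorb the regime $\varepsilon<2E[\Delta_N]$ by noting $N\varepsilon^2$ is bounded — but with the log factor $N\varepsilon^2$ can be as large as $C\log N$, so no fixed prefactor makes $1\leq C\exp(-cN\varepsilon^2)$ hold there. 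The fix is standard: replace the one-step discretization by Dudley's chaining/entropy-integral bound. Because $\log\mathcal{N}(\mathcal{G},\|\cdot\|,\delta)\lesssim e\log(C/\delta)$ and $\int_0^1\sqrt{\log(C/\delta)}\,d\delta<\infty$, chaining yields $E[\Delta_N]\leq C\sqrt{e}/\sqrt{N}$ with no logarithm. This is exactly the route the paper takes (via \cite[Corollary~2.2.8 and Lemma~2.3.1]{van1996weak}, as in the proof of Theorem~\ref{thm:main.oce.hedging}); with that replacement your argument goes through and is, modulo McDiarmid versus vdVW~2.14.10, the same proof as the paper's.
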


Again note that the rates are optimal and do not depend on the dimension of the underlying nor the number $e$ of available options and that $u_{\mathrm{max}}^{\mu_N}(F)$ is a strongly consistent estimator which typically overestimates its true value (as we deal with maximization instead of minimization this time).

\section{Rates for average value at risk and optimized certainty equivalents}
\label{sec:oce}

Let us briefly fix our notation:
Throughout this paper we make the important convention that $C> 0$ is a generic constant.
This means that $C$ may depend on all kind of parameters (such as some $L^p$ norms of $F$ and $G$, features of the risk measure such as growth of the loss function $l$ in the OCE/SF case,...) but not on $N$.
Moreover, the value of $C$ is allowed to increase from line to line, for instance $\sup_{y}(xy-y^2)=Cx^2\leq Cx^2/2$ or $C\sqrt{e+1}\leq C\sqrt{e}$ for all $e\in\mathbb{N}$, but not $N\leq C$ or $\sqrt{e+1}\leq \sqrt{e}/C$.

To keep the distinction between the analysis for bounded random variables (i.e.\ random variables in $L^p$ for $p=\infty$) and unbounded random variables $(p<\infty$) as light as possible, we shall use the following conventions:
We put $1/\infty:=0$ and for $x>0$, $x^0:=1$ and $x^\infty:=\infty$.

For a metric space $(\Lambda,d_\Lambda)$ and $\varepsilon>0$, denote by $\mathcal{N}(\Lambda,d_\Lambda,\varepsilon)$ the covering numbers at scale $\varepsilon$, that is, $\mathcal{N}(\Lambda,d_\Lambda,\varepsilon)$ is the smallest number for which there is a subset $\tilde{\Lambda}$ with that cardinality satisfying: For every $s\in \Lambda$ there is $\tilde{s}\in \tilde{\Lambda}$ with $d_\Lambda(s,\tilde{s})\leq\varepsilon$.
In other words, $\mathcal{N}(\Lambda,d_\Lambda,\varepsilon)$ the smallest number of balls of radius $\varepsilon$ which covers $\Lambda$.
The latter suggests this to be some measurement of compactness, and in fact, it is an important tool in understanding the behavior of empirical processes, see \cite{van1996weak}.

Recall that $e\in\mathbb{N}$ is a fixed number and $F,G_1,\dots,G_e\colon\mathcal{X}\to\mathbb{R}$ are measurable functions.
For shorthand notation, write $g\cdot G:=\sum_{i=1}^e g_i G_i$ for $g\in\mathcal{G}$ and $|G|:=\sum_{i=1}^e |G_i|$.
Recall that, throughout this article, the set $\mathcal{G}\subset\mathbb{R}^e$ is assumed to be countable and  bounded.
The former assumption is without loss of generality (see Remark \ref{rem:G.countable}) and the latter is shown to be necessary in Proposition \ref{prop:G.needs.to.be.bounded}.

The average value at risk also goes under several different names such as expected shortfall, conditional value at risk, and expected tail loss, and has equally many different (equivalent) definitions, for instances as the value at risk integrated over different levels; see \cite[Section 4.3]{foellmer01} for an overview.
We will only use the definition of AVaR given in \eqref{eq:avar.oce.def}.
Given a loss function $l$, also recall the definition of OCE given in \eqref{eq:def.OCE}.
We additionally assume that $\liminf_{x\to\infty} l(x)/x>1$, which by convexity and $1\in\partial l(0)$ is equivalent to the fact that $l(x)> x$ for some $x\geq 0$.
This assumption is there because $F$ and $G$ are possibly not bounded (in contrast to Section \ref{sec:main.results.}), but not needed if this is the case.

We shall often work under the assumption that $l'$ (the right continuous derivative of the convex function $l$) has polynomial growth of degree $p-1$, which means that $l'(x)\leq C(1+|x|^{p-1})$ for all $x\in\mathbb{R}$.
In particular, recalling the convention $|x|^\infty:=\infty$ for $x\neq 0$, we see that polynomial growth of degree $\infty$ is no restriction at all; for instance, the exponential function $l=\exp$  satisfies this assumption (only) for $p=\infty$.

\vspace{1em}
The goal of this section is to prove Theorem \ref{thm:main.oce.hedging.intro} part \eqref{item:oce.intro.mean}, or rather the following generalization thereof.

\begin{theorem}
\label{thm:main.oce.hedging}
	Let $p\in[1,\infty]$, assume that $l'$ has polynomial growth of degree $p-1$, and that $\|F\|_{L^{2p}(\mu)}$ and $\|G \|_{L^{2p}(\mu)}$ are finite.
	Then
	\[E\Big[\sup_{g\in\mathcal{G}}\big|\mathrm{OCE}^\mu(F+g\cdot G)-\mathrm{OCE}^{\mu_N}(F+g\cdot G)\big|\Big] \leq  \frac{C}{\sqrt{N}} \]
	for all $N\geq 1$.
	The constant $C$ depends on $\mu$ only through the size of the above ${L^{2p}(\mu)}$-norms of $F$ and $G$, on $e$, on $p$, and on the diameter of $\mathcal{G}$.
\end{theorem}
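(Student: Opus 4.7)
The plan is to convert the problem into a uniform empirical process estimate over a parametric class of integrands and then to apply a standard chaining bound. For $\nu\in\{\mu,\mu_N\}$ set
\[
\phi(m,g,\nu):=E^\nu[l(F+g\cdot G-m)]+m,\qquad \mathrm{OCE}^\nu(F+g\cdot G)=\inf_m \phi(m,g,\nu),
\]
and note that the increment $\phi(m,g,\mu)-\phi(m,g,\mu_N)=(E^\mu-E^{\mu_N})[l(F+g\cdot G-m)]$ no longer contains the linear term $m$. Each $\phi(\cdot,g,\nu)$ is convex and tends to $+\infty$ at $\pm\infty$ (using $l\geq 0$ for $m\to+\infty$ and $\liminf_{x\to\infty}l(x)/x>1$ for $m\to-\infty$), so it admits a minimiser $m^\ast(g,\nu)$ characterised by $1\in E^\nu[\partial l(F+g\cdot G-m^\ast)]$.

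The first step is to localise these minimisers to a deterministic compact interval $[-M,M]$. Combining the bound $\phi(m^\ast,g,\nu)\leq\phi(0,g,\nu)=E^\nu[l(F+g\cdot G)]$ with $\phi(m,g,\nu)\geq m$ (from $l\geq 0$) and with the linear lower bound $l(x)\geq (1+\delta)x-C_0$ (derived from convexity together with $\liminf l(x)/x>1$) gives
\[
|m^\ast(g,\nu)|\leq C\bigl(1+E^\nu[l(F+g\cdot G)]+E^\nu[|F+g\cdot G|]\bigr).
\]
For $\nu=\mu$ this is a deterministic constant $M_0$ controlled by the $L^{2p}(\mu)$-norms of $F$ and $G$; for $\nu=\mu_N$ the right-hand side has expectation of the same order. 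Taking $M:=2M_0$ and letting $\mathcal A_N$ be the event that $\|F\|_{L^p(\mu_N)},\|G\|_{L^p(\mu_N)}$ do not exceed twice their $L^{2p}(\mu)$-counterparts, Chebyshev's inequality gives $P(\mathcal A_N^c)=O(1/N)$ and $|m^\ast(g,\mu_N)|\leq M$ on $\mathcal A_N$. Applying $|\inf f-\inf g|\leq\sup|f-g|$ on $[-M,M]$ and handling $\mathcal A_N^c$ by Cauchy--Schwarz (using the $L^{2p}(\mu)$ assumption on $F,G$) yields
\[
E\Bigl[\sup_{g}|\mathrm{OCE}^\mu(F+g\cdot G)-\mathrm{OCE}^{\mu_N}(F+g\cdot G)|\Bigr]\leq E\Bigl[\sup_{(g,m)\in\mathcal G\times[-M,M]}|(E^\mu-E^{\mu_N})[l(F+g\cdot G-m)]|\Bigr]+\frac{C}{\sqrt N}.
\]

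The second step is to estimate the empirical process on the right through parametric chaining. Let $\mathcal F:=\{f_{g,m}:=l(F+g\cdot G-m):g\in\mathcal G,\,m\in[-M,M]\}$. Integrating $l'(x)\leq C(1+|x|^{p-1})$ from $0$ produces the envelope $H(\xi):=C(1+|F(\xi)|^p+|G(\xi)|^p)$, which lies in $L^2(\mu)$ by assumption. The mean-value theorem combined with the polynomial growth of $l'$ yields the pointwise Lipschitz estimate
\[
|f_{g_1,m_1}-f_{g_2,m_2}|\leq C(1+|F|^{p-1}+|G|^{p-1})\bigl(|G|\,|g_1-g_2|+|m_1-m_2|\bigr),
\]
whose $L^2(\mu)$-norm, by H\"older's inequality with exponents $2p/(p-1)$ and $2p$, is bounded by $C(|g_1-g_2|+|m_1-m_2|)$. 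Since $(g,m)$ ranges over a bounded subset of $\mathbb R^{e+1}$, this implies the polynomial covering bound $\log\mathcal N(\mathcal F,L^2(\mu),\varepsilon)\leq (e+1)\log(C/\varepsilon)$ for $\varepsilon\in(0,\|H\|_{L^2(\mu)}]$.

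Finally, Dudley's entropy integral (see \cite{van1996weak}) applied to the symmetrised empirical process over $\mathcal F$ with envelope $H$ gives
\[
E\Bigl[\sup_{f\in\mathcal F}|(E^\mu-E^{\mu_N})f|\Bigr]\leq \frac{C}{\sqrt N}\int_0^{\|H\|_{L^2(\mu)}}\!\sqrt{\log\mathcal N(\mathcal F,L^2(\mu),\varepsilon)}\,d\varepsilon\leq \frac{C\sqrt{e+1}}{\sqrt N},
\]
which combined with the first step completes the proof. I expect the main obstacle to be the localisation of the random minimiser $m^\ast(g,\mu_N)$, since one needs a deterministic window that captures it with high enough probability for the truncation remainder to remain $O(1/\sqrt N)$; the chaining bound itself is then a routine application of empirical process theory once the envelope and finite-dimensional covering numbers are in hand, and the bookkeeping is arranged so that the constants depend on $\mu$ only through the $L^{2p}(\mu)$-norms of $F$ and $G$.
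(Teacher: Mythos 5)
Your overall strategy matches the paper's: localize the infimizer $m$ to a deterministic compact interval (using $l\geq 0$ and the linear lower bound from $\liminf l(x)/x>1$), split on a high-probability event where the empirical $L^p$-norm of the data is controlled (handled by Chebyshev and Cauchy--Schwarz), and then bound a uniform empirical process by a chaining/Dudley argument over the parametric class $\{l(F+g\cdot G-m)\}$. Your pointwise Lipschitz-in-$(g,m)$ estimate with an integrable Lipschitz envelope is precisely the right covering-number input, and all of the bookkeeping (Hölder with exponents $2p/(p-1)$ and $2p$, $P(\mathcal A_N^c)=O(1/N)$, $E[\Delta_N^2]\le C$ via the $L^{2p}$ assumption) is correct.

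There is, however, one technical imprecision in the final step. The displayed inequality
\[
E\Bigl[\sup_{f\in\mathcal F}|(E^\mu-E^{\mu_N})f|\Bigr]\leq \frac{C}{\sqrt N}\int_0^{\|H\|_{L^2(\mu)}}\!\sqrt{\log\mathcal N(\mathcal F,L^2(\mu),\varepsilon)}\,d\varepsilon
\]
is not a theorem: the symmetrization/sub-Gaussian chaining bound in \cite{van1996weak} (Lemma 2.3.1 and Corollary 2.2.8) produces the \emph{random} entropy $\mathcal N(\mathcal F,L^2(\mu_N),\varepsilon)$ inside an expectation, not the deterministic $\mathcal N(\mathcal F,L^2(\mu),\varepsilon)$. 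The non-random version that does hold is Ossiander's maximal inequality (Theorem 2.14.2 in \cite{van1996weak}), but that one requires \emph{bracketing} numbers $\mathcal N_{[\,]}(\mathcal F,L^2(\mu),\varepsilon)$, not ordinary covering numbers. Your pointwise Lipschitz estimate $|f_{g_1,m_1}-f_{g_2,m_2}|\le L\cdot(|g_1-g_2|+|m_1-m_2|)$ with $L\in L^2(\mu)$ is exactly the hypothesis of the Lipschitz-in-parameter bracketing bound (Theorem 2.7.11 there), so the repair is immediate if you replace $\mathcal N$ by $\mathcal N_{[\,]}$ and cite Ossiander. The alternative repair (which is what the paper's Lemma 2.3 actually does) is to take $L^2(\mu_N)$-norms of your Lipschitz bound, obtaining the random entropy estimate $\mathcal N(\mathcal F,L^2(\mu_N),\varepsilon)\le(C\|L\|_{L^2(\mu_N)}/\varepsilon)^{e+1}$, and then push the expectation through by Jensen, $E\|L\|_{L^2(\mu_N)}\le\|L\|_{L^2(\mu)}$. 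Either route closes the gap; as written the Dudley step quotes a nonexistent inequality.
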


Observe that the measurability of $\mathrm{OCE}^{\mu_N}(F+g\cdot G)$ is readily checked, see for instance Remark \ref{rem:meas.oceN}.

Before presenting the proof of Theorem \ref{thm:main.oce.hedging}, let us shortly elaborate on the  integrability conditions therein.
These assumptions cannot be improved in general.
In fact, consider the trivial case $\mathcal{G} = \{0\}$ and $u=0$ for which we have $\mathrm{AVaR}_0(F(S)) = E[F(S)]$. 
Thus, we have $p=1$ and it is well-known that 
\begin{equation*}
 	E\Big[\Big|\frac{1}{N} \sum_{n\leq N} F(S_n) -E[F(S)] \Big| \Big] 
 	\leq \frac{C}{\sqrt{N}}
\end{equation*}
requires that $F(S)$ has a finite second moment, i.e.\ that  $\|F\|_{L^2(\mu)}$ is finite.\
We now turn to the proof of Theorem \ref{thm:main.oce.hedging}.
In fact, looking at the definition of the optimized certainty equivalent, the reader familiar with the theory of empirical processes recognizes this as a standard problem covered within this theory.
Thus, at some point, an estimate of the covering numbers with respect to the random $L^2(\mu_N)$ norm must be computed.
Fortunately, no geometric arguments are needed, and all randomness can be controlled by some estimates involving moments only.
For this reason it will be useful to keep track of the following quantities
\begin{align}
\label{eq:def.M.MN}
	J:=1+|F|+|G|; \quad M:=\| J\|_{L^p(\mu)} \quad\text{and}\quad M_N:=\| J\|_{L^p(\mu_N)}.
\end{align}
The first result in this spirit is

\begin{lemma}
\label{lem:oce.compact.estimate}
	Assume that $l'$ has polynomial growth of degree $p-1$.
	Then we have that
	\begin{align}
	\label{eq:pi.N.m.bounded}
	|\mathrm{OCE}^\mu(F+g\cdot G)|&\leq C M^p \quad\text{and}\\
	\label{eq:pi.N.m.bounded.1}
	\mathrm{OCE}^\mu(F+g\cdot G)&=\inf_{|m|\leq C M^p} \int_{\mathcal{X}} l(F(x)+g\cdot G(x)-m)+m\,\mu(dx)
	\end{align}
	for every $g\in\mathcal{G}$.
	The same holds true if the pair $(\mu,M)$ is replaced by $(\mu_N,M_N)$ (with the constant $C$ in \eqref{eq:pi.N.m.bounded} not depending on $N$).
\end{lemma}
\begin{proof}
	Assume without loss of generality that $M<\infty$, otherwise there is nothing to show.

	As $l$ is increasing and of polynomial growth with degree $p$ and $\mathcal{G}$ is bounded, we have that
	\begin{equation}
	\label{eq:sup.J}
		\sup_{g\in\mathcal{G}} l(F+g\cdot G)\leq
	\begin{cases}
	C J^p &\text{if } p<\infty,\\
	C  &\text{if } p=\infty.
	\end{cases}
	\end{equation}
	In particular, the choice $m=0$ (in the definition of $\mathrm{OCE}$) and the fact that $l\geq -1$ yield
	\[\mathrm{OCE}^\mu( F+g\cdot G )
	\leq \int_{\mathcal{X}} l(F(x)+g\cdot G(x))\,\mu(dx)
	\leq C M^p \]
	for all $g\in\mathcal{G}$, showing the upper bound in \eqref{eq:pi.N.m.bounded}.
	Further, as $l\geq -1$ and $M\ge1$, this also implies that the infimum over $m$ in the definition of $\mathrm{OCE}^\mu( F+g\cdot G )$ can be restricted to $m\leq C M^p$ for all $g\in\mathcal{G}$.

	On the other hand, by convexity of $l$ and the assumption that $\liminf_{x\to\infty} l(x)/x>1$, there exist $a>1$ and $b\in\mathbb{R}$ such that $l(x)\geq ax-b$ for every $x\in\mathbb{R}$.
	This implies
	\begin{align}
	\label{eq:oce.restrict}
	\begin{split}
	&\int_{\mathcal{X}} l (F(x)+g\cdot G(x)-m)+m \,\mu(dx)\\
	&\geq \int_{\mathcal{X}} a\big(-CJ(x)-m\big) -b +m\,\mu(dx) \\
	&\geq m(1-a) - C M^p,
	\end{split}
	\end{align}
	where we used that $\int_{\mathcal{X}} J\,d\mu\leq M\leq M^p$ which follows from H\"older's inequality and as $M\geq 1$.
	By the previous part we already know that $\mathrm{OCE}^\mu( F+g\cdot G )\leq CM^p$ for all $g\in\mathcal{G}$.
	Together with \eqref{eq:oce.restrict} this implies that the infimum over $m$ in $\mathrm{OCE}^\mu( F+g\cdot G )$ can be restricted to $m\geq -CM^p$ for all $g\in\mathcal{G}$.
	In turn, using once more that $l\geq -1$, this also implies that $\mathrm{OCE}^\mu( F+g\cdot G )\geq -CM^p$ for all $g\in\mathcal{G}$ and thus completes the proof for $(\mu, M)$.

Observe that \eqref{eq:pi.N.m.bounded} and \eqref{eq:pi.N.m.bounded.1} with $(\mu,M)$ replaced by $(\mu_N, M_N)$ is obtained using exactly the same argument as above with $(\mu,M)$ replaced by $(\mu_N, M_N)$.
	In fact, by \eqref{eq:sup.J} we have $\mathrm{OCE}^{\mu_N}(F + g\cdot G) \le CM_N^p$, which implies that the infinum in the definition of $\mathrm{OCE}^{\mu_N}(F + g\cdot G)$ can be restricted to $m\le CM^p_N$ $P$-a.s. for all $g \in \mathcal{G}$.
	On the other hand, as in \eqref{eq:oce.restrict}, we have
	\begin{align*}
	\begin{split}
	&\int_{\mathcal{X}} l (F(x)+g\cdot G(x)-m)+m \,\mu_N(dx)
	%&\geq \int_{\mathcal{X}} a\big(-CJ(x)-m\big) -b+m\,\mu(dx) \\
	\geq m(1-a) - C M^p_N,
	\end{split}
	\end{align*}
	from which we infer that the infimum in the definition of $\mathrm{OCE}^{\mu_N}(F + g\cdot G)$ can be restricted to $m\ge- CM^p_N$ $P$-a.s. for all $g \in \mathcal{G}$.
	This thus shows $\mathrm{OCE}^{\mu_N}(F + g\cdot G)\ge -CM_N^p$.
\end{proof}

\begin{lemma}
\label{lem:covering.numbers.oce}
	Assume that $l'$ has polynomial growth of degree $p-1$, let $m_0\in\mathbb{R}$, and define
	\[ \mathcal{H}:=\Big\{ l (F + g\cdot G -m) + m : g\in\mathcal{G} \text{ and } m\in[-m_0,m_0] \Big\}.  \]
	Then, for every $\varepsilon>0$, we have that
	\[ \mathcal{N}(\mathcal{H},\|\cdot\|_{L^2(\mu_N)},\varepsilon)
	\leq\Big(\frac{C\|J\|^p_{L^{2p}(\mu_N)}}{\varepsilon}\Big)^{e+1}\vee 1 \]
	if $p<\infty$; and $\mathcal{N}(\mathcal{H},\|\cdot\|_{L^2(\mu_N)},\varepsilon)\leq (C/\varepsilon)^{e+1}\vee 1$ if $p=\infty$.
\end{lemma}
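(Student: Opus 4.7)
The class $\mathcal{H}$ is parametrized by $(g,m)$ ranging over the bounded set $\Theta := \mathcal{G} \times [-m_0, m_0] \subset \mathbb{R}^{e+1}$. My plan is to show that the parametrization map $\Theta \ni (g,m) \mapsto h_{g,m} := l(F + g\cdot G - m) + m \in L^2(\mu_N)$ is Lipschitz continuous with an explicit constant, and then invoke the standard volumetric covering-number bound for bounded subsets of Euclidean space.

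For the Lipschitz estimate, fix $(g,m),(g',m') \in \Theta$. Since $l$ is convex with right derivative $l'$, one has the pointwise bound $|l(a) - l(b)| \leq \bigl(|l'(a)| \vee |l'(b)|\bigr)\, |a-b|$. The polynomial-growth assumption $|l'(x)| \leq C(1+|x|^{p-1})$, combined with the boundedness of $\mathcal{G}$ and $[-m_0,m_0]$, yields $|l'(F + g\cdot G - m)| \leq C J^{p-1}$ with $J := 1+|F|+|G|$, and analogously for $(g',m')$. The difference of arguments is at most $|(g-g')\cdot G| + |m-m'| \leq C\, J \cdot \|(g,m)-(g',m')\|$. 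Multiplying and using $J \geq 1$ to absorb the extra additive $|m-m'|$ coming from the outer cash term, I obtain the pointwise estimate $|h_{g,m} - h_{g',m'}| \leq C J^p \cdot \|(g,m)-(g',m')\|$, hence
\[ \|h_{g,m} - h_{g',m'}\|_{L^2(\mu_N)} \leq C \|J\|_{L^{2p}(\mu_N)}^{p} \cdot \|(g,m)-(g',m')\|. \]

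With Lipschitz continuity in hand, I conclude by composing with the standard volumetric bound $\mathcal{N}(\Theta,\|\cdot\|,\delta) \leq (C/\delta)^{e+1} \vee 1$, which holds for the bounded set $\Theta \subset \mathbb{R}^{e+1}$ (the constant absorbing $m_0$ and the diameter of $\mathcal{G}$). Choosing $\delta := \varepsilon / \bigl(C\|J\|_{L^{2p}(\mu_N)}^{p}\bigr)$ transports a $\delta$-cover of $\Theta$ to an $\varepsilon$-cover of $\mathcal{H}$ in $L^2(\mu_N)$ and produces exactly the bound claimed. For the case $p=\infty$, the hypothesis forces $J \in L^\infty$, so $l'$ is bounded by a constant on the (bounded) effective range of its argument, the Lipschitz constant collapses to a pure constant, and one recovers $(C/\varepsilon)^{e+1} \vee 1$.

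The only mildly delicate step is the Lipschitz estimate for $p < \infty$: one must balance the $J^{p-1}$ coming from $l'$ against the $J$ that bounds the difference of arguments, so that precisely $J^{p}$ (and hence $\|J\|_{L^{2p}(\mu_N)}^{p}$ after applying Cauchy--Schwarz inside the $L^2(\mu_N)$ norm) appears, rather than any larger power or moment. Once the exponents are tracked carefully, everything else is routine.
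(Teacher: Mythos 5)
Your proof is correct and takes essentially the same route as the paper: establish a pointwise Lipschitz bound of order $J^p$ in the parameter $(g,m)$, pass to $L^2(\mu_N)$ to get the Lipschitz constant $C\|J\|_{L^{2p}(\mu_N)}^p$, then transport a volumetric cover of $\mathcal{G}\times[-m_0,m_0]\subset\mathbb{R}^{e+1}$. The only cosmetic difference is that you invoke the convexity inequality $|l(a)-l(b)|\le(|l'(a)|\vee|l'(b)|)|a-b|$ where the paper uses the fundamental theorem of calculus along a linear interpolation of the arguments; also, passing from the pointwise bound $|h-h'|\le CJ^p\,\|(g,m)-(g',m')\|$ to the $L^2(\mu_N)$ bound is just the identity $\|J^p\|_{L^2(\mu_N)}=\|J\|_{L^{2p}(\mu_N)}^p$, not Cauchy--Schwarz.
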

\begin{proof}
	Without loss of generality, we work only on the set where $\|J\|_{L^{2p}(\mu_N)}<\infty$ (otherwise there is nothing to show).
	We proceed in two steps.
	\begin{enumerate}[(a)]
	\item Pick two elements $H,\tilde{H}\in\mathcal{H}$ represented as
	\begin{align*}
	H&=l(F+ g\cdot G- m)+m \quad\text{and}   \\
	\tilde{H}&=l(F+\tilde{g}\cdot G-\tilde{m})+\tilde{m}
	\end{align*}
	and define the family of functions $(\varphi_t)_{t\in[0,1]}$ from $\mathcal{X}$ to $\mathbb{R}$ by
	\[\varphi_t:=F+g\cdot G - m + t( ( \tilde{g}-g)\cdot G+m-\tilde{m})\]
	for every $t\in[0,1]$.
	Then $H=l(\varphi_0)+m$ and $\tilde{H}=l(\varphi_1)+\tilde{m}$.
	As $\mathcal{G}$ is bounded, $|\varphi_t|\leq CJ$ for all $t\in[0,1]$.
	By convexity of $l$, its right derivative $l'$ is increasing.
	By the fundamental theorem of calculus, we have
	\begin{align*}
	\| H-\tilde{H}\|_{L^2(\mu_N)}
	&\leq \Big\| \int_0^1 l'(\varphi_t) \partial_t\varphi_t \,dt \Big\|_{L^2(\mu_N)} + |m-\tilde{m}|\\
	&\leq \big\|  l'(CJ ) \big( ( \tilde{g}-g)\cdot G+ m-\tilde{m}\big) \big\|_{L^2(\mu_N)} +|m-\tilde{m}|.
	\end{align*}

	Now note that
	\[\|  l'(CJ ) J\|_{L^2(\mu_N)}
	\leq \begin{cases}
	C\| J \|_{L^{2p}(\mu_N)}^p &\text{if } p<\infty,\\
	C & \text{if } p =\infty.
	\end{cases}\]
	Indeed, for $p<\infty$ this follows from the assumption that $l'(x)\leq C(1+|x|^{p-1})$ for all $x\in\mathbb{R}$, and the fact that $J\geq 1$.
	For $p=\infty$, one has by assumption that $J$ is $\mu$-almost surely bounded.
	Hence, $P$-almost surely, $J$ is also $\mu_N$-almost surely bounded (by the same constant).
	As $l$ is bounded on bounded sets (by convexity), this implies that $l'(J)$ is $\mu_N$-almost surely bounded.

	To conclude, we use once more that $\mathcal{G}$ is bounded and hence $|(\tilde{g}-g)\cdot G|\leq |\tilde{g}-g|J$.
	Therefore
	\begin{align}
	\label{eq:estimate.H.tildeH}
	\| H-\tilde{H}\|_{L^2(\mu_N)}
	&\leq \begin{cases}
	C\| J \|_{L^{2p}(\mu_N)}^p (|g-\tilde{g}|+ |m-\tilde{m}|) &\text{if } p<\infty,\\
	C (|g-\tilde{g}|+ |m-\tilde{m}|) &\text{if } p=\infty.
	\end{cases}
	\end{align}
	In the following we restrict to $p<\infty$ and leave the obvious changes needed when $p = \infty$ to the reader.
	\item
	Fix $\varepsilon>0$ and let $A\subset[-m_0,m_0]$  be such that
	\begin{align*}
	&\text{for all }m\in [-m_0,m_0] \text{ there is } \tilde{m}\in A \text{ with }|m-\tilde{m}|\leq \frac{\varepsilon}{2C\| J \|_{L^{2p}(\mu_N)}^p}
	\end{align*}
	and $B\subset\mathcal{G}$ such that
	\begin{align*}
	&\text{for all }g\in\mathcal{G} \text{ there is } \tilde{g}\in B \text{ with }|g-\tilde{g}|\leq \frac{\varepsilon}{2C\| J \|_{L^{2p}(\mu_N)}^p}.
	\end{align*}
	Then, if we define $\tilde{\mathcal{H}}$ exactly as $\mathcal{H}$ only with $[-m_0,m_0]$ replaced by $A$ and $\mathcal{G}$ replaced by $B$, then by \eqref{eq:estimate.H.tildeH}, for every $H\in\mathcal{H}$ there is $\tilde{H}\in\tilde{\mathcal{H}}$ with $\| H-\tilde{H}\|_{L^2(\mu_N)} \leq\varepsilon$.

	This implies that
	\begin{align*}
	\mathcal{N}(\mathcal{H},\|\cdot\|_{L^2(\mu_N)},\varepsilon)
	&\leq \mathop{\mathrm{card}}( \tilde{\mathcal{H}})\\
	&\leq \mathop{\mathrm{card}}( A\times B )
	=\mathop{\mathrm{card}}(A) \mathop{\mathrm{card}}(B)
	\end{align*}
	where $\mathop{\mathrm{card}}$ means cardinality.

	The set $A$ can be constructed simply by an equidistant partition of $[-m_0,m_0]$ at cardinality $\mathop{\mathrm{card}}(A)\leq (C\| J \|_{L^{2p}(\mu_N)}^p/\varepsilon)\vee 1$.
	In a similar manner, $B$ can be constructed with $\mathop{\mathrm{card}}(B)\leq (C\| J \|_{L^{2p}(\mu_N)}^p/\varepsilon)^e\vee 1$.
	\end{enumerate}
	Combining both steps yields the proof.
\end{proof}

In order to apply results from theory of empirical processes, we need the following observation.

\begin{lemma}
\label{lem:pw.meausrable}
	The set $\mathcal{H}$ defined in Lemma \ref{lem:covering.numbers.oce} satisfies Assumption \ref{ass:pw.mb}.
\end{lemma}
\begin{proof}
	Let $A\subset[-m_0,m_0]$ be countable and dense and define
	\[ \mathcal{H}':=\Big\{ l(F+ g\cdot G-m) : g\in \mathcal{G} \text{ and } m\in A\Big\}.\]
	The set $\mathcal{H}'$ is clearly countable.
	Let $H=l(F+g\cdot G-m)\in\mathcal{H}$ and let $(m^n)_n\subset A$ be a sequence which converges to $m$.
	Then $H^n:=l(F+g\cdot G-m^n)\in\mathcal{H}'$ converges pointwise to $H$ and in $L^2(\nu)$ for every measure $\nu$ such that $l(|F|+|G|)\in L^2(\nu)$  (by dominated convergence).
\end{proof}

Inspecting the proof actually yields the following result, which we state for later reference.

\begin{corollary}
\label{cor:covering.compact}
	Let $m_0\in\mathbb{R}$, let $f\colon\mathbb{R}\to\mathbb{R}$ be locally Lipschitz continuous, and assume that $J$ is bounded.
	Then it holds that\footnote{Observe that $\|f\|_\infty:= \sup_{x\in \mathbb{R}}|f(x)|$ represents the supremum norm of $f$; and not the essential supremum norm that we denote $\|\cdot\|_{L^\infty}$.}
	\[ \mathcal{N}\Big(\big\{ f(F + g\cdot G -m): g\in\mathcal{G} \text{ and } m\in[-m_0,m_0] \big\}, \|\cdot\|_{\infty},\varepsilon\Big)
	\leq \Big(\frac{C}{\varepsilon}\Big)^{e+1}\vee 1\]
	for every $\varepsilon>0$.
\end{corollary}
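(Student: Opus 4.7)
The plan is to adapt the two-step argument used in the proof of Lemma \ref{lem:covering.numbers.oce}, replacing the $L^2(\mu_N)$ norm by the uniform norm and exploiting the boundedness of $J$ together with local Lipschitz continuity of $f$.

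First, I would establish a uniform Lipschitz-type estimate on $\mathcal{H}_f:=\{f(F+g\cdot G-m):g\in\mathcal{G},\,m\in[-m_0,m_0]\}$. Since $J$ is bounded, $\mathcal{G}$ is bounded, and $|m|\leq m_0$, the quantity $F+g\cdot G-m$ takes values in a fixed bounded interval $I\subset\mathbb{R}$ uniformly in $g\in\mathcal{G}$ and $m\in[-m_0,m_0]$. Local Lipschitz continuity of $f$ then supplies a single Lipschitz constant $L$ valid on $I$. For two elements of $\mathcal{H}_f$ indexed by $(g,m)$ and $(\tilde g,\tilde m)$, the Lipschitz bound yields pointwise
\[ \bigl|f(F+g\cdot G-m)-f(F+\tilde g\cdot G-\tilde m)\bigr|\leq L\bigl(|(\tilde g-g)\cdot G|+|m-\tilde m|\bigr),\]
and since $|(\tilde g-g)\cdot G|\leq |\tilde g-g|J\leq C|\tilde g-g|$ by boundedness of $J$, taking the sup norm gives
\[ \bigl\|f(F+g\cdot G-m)-f(F+\tilde g\cdot G-\tilde m)\bigr\|_\infty\leq C\bigl(|g-\tilde g|+|m-\tilde m|\bigr).\]

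Second, I would discretize exactly as in part (b) of the proof of Lemma \ref{lem:covering.numbers.oce}: build an equidistant grid $A\subset[-m_0,m_0]$ of spacing $\varepsilon/(2C)$, whence $\mathop{\mathrm{card}}(A)\leq (C/\varepsilon)\vee 1$, and an $\varepsilon/(2C)$-net $B$ of the bounded set $\mathcal{G}\subset\mathbb{R}^e$, whence $\mathop{\mathrm{card}}(B)\leq (C/\varepsilon)^e\vee 1$. Replacing $(g,m)$ by its closest approximant in $B\times A$ perturbs the associated element of $\mathcal{H}_f$ by at most $\varepsilon$ in $\|\cdot\|_\infty$ by the estimate above, so $\mathcal{N}(\mathcal{H}_f,\|\cdot\|_\infty,\varepsilon)\leq \mathop{\mathrm{card}}(A)\mathop{\mathrm{card}}(B)\leq (C/\varepsilon)^{e+1}\vee 1$.

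I do not expect any real obstacle: the argument is essentially the sup-norm version of Lemma \ref{lem:covering.numbers.oce}, where boundedness of $J$ replaces the $L^{2p}(\mu_N)$-moment control and local Lipschitz continuity of $f$ replaces the polynomial-growth assumption on $l'$. The only point requiring a moment of care is the initial observation that the arguments of $f$ lie in a fixed compact set (uniformly in $g,m$), which is precisely what allows one to convert a merely local Lipschitz hypothesis on $f$ into a single Lipschitz constant usable throughout the discretization step.
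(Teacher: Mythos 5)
Your proposal is correct and follows essentially the same route the paper intends (the paper presents the corollary as obtained by ``inspecting the proof'' of Lemma~\ref{lem:covering.numbers.oce}, and your two steps are exactly that adaptation). The only substantive substitution you make is replacing the convexity/fundamental-theorem-of-calculus argument used to control $l$ by the observation that, since $J$ is bounded and $\mathcal{G}$ and $[-m_0,m_0]$ are bounded, the arguments of $f$ stay in a fixed compact interval $I$ on which local Lipschitz continuity yields a single Lipschitz constant; this then gives the sup-norm Lipschitz estimate in $(g,m)$ directly, after which the grid/net discretization is identical to part~(b) of Lemma~\ref{lem:covering.numbers.oce}. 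This is precisely what the paper's abbreviated remark has in mind, so there is no genuine difference in approach.
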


We are now ready for the
\begin{proof}[Proof of Theorem \ref{thm:main.oce.hedging}]
	For shorthand notation, set
	\[ \Delta_N:=\sup_{g\in\mathcal{G}}\Big|\mathrm{OCE}^\mu(F+g\cdot G)-\mathrm{OCE}^{\mu_N}(F+g\cdot G)\Big|\]
	for every $N\geq 1$.
	With $M$ and $M_N$ defined in \eqref{eq:def.M.MN}, we write
	\begin{align*}
	E[\Delta_N]
	=E[\Delta_N1_{M_N\leq M+1}] + E[\Delta_N1_{M_N>M+1}]
	\end{align*}
	and investigate both terms separately.

	\begin{enumerate}[(a)]
	\item
	We start with the first term.
	Lemma \ref{lem:oce.compact.estimate} guarantees that
	\[ \Delta_N 1_{M_N\leq M+1}\leq \sup_{H\in\mathcal{H}} \Big|\int_{\mathcal{X}} H(x)\,(\mu-\mu_N)(dx)\Big| \]
	for every $N\geq 1$, where
	\[\mathcal{H}:=\{ l (F + g\cdot G -m) + m : g\in\mathcal{G} \text{ and } |m|\leq C(M+1)^p \}.\]
	By Lemma \ref{lem:pw.meausrable}, the set $\mathcal{H}$ satisfies Assumption \ref{ass:pw.mb}.
	Therefore, the `empirical process version'  of Dudley's entropy-integral theorem (i.e.\ Theorem \ref{thm:dudley}).
	implies that
	\begin{align*}
	&E\Big[\sup_{H\in\mathcal{H}} \Big|\int_{\mathcal{X}} H(x)\,(\mu-\mu_N)(dx)\Big| \Big]\\
	&\leq \frac{C}{\sqrt{N}} \Big( E[ H^\ast(S)^2]^{\frac{1}{2}} + E\Big[ \int_0^\infty \sqrt{ \log \mathcal{N}(\mathcal{H},\|\cdot\|_{L^2(\mu_N)},\varepsilon) } \,d\varepsilon\Big] \Big)
	\end{align*}
	for all $N\geq1$, where $H^\ast:=l(F+g^\ast\cdot G)\in\mathcal{H}$ for some $g^\ast\in\mathcal{G}$.
	By definition of $J$ we have that $E[ H^\ast(S)^2]^{\frac{1}{2}}\leq C \|J\|_{L^{2p}(\mu)}^p$.
	It remains to gain control over the entropy integral term.
 
	Assume first that $p<\infty$.
	Then, estimating the covering numbers of $\mathcal{H}$ by means of Lemma \ref{lem:covering.numbers.oce} implies that
	\begin{align*}
	& E \Big[ \int_0^\infty \sqrt{ \log \mathcal{N}(\mathcal{H},\|\cdot\|_{L^2(\mu_N)},\varepsilon) } \,d\varepsilon\Big]\\
	&\leq  C E \Big[ \int_0^\infty \sqrt{ \log \big(\tfrac{C\|J\|_{L^{2p}(\mu_N)}^p}{\varepsilon}\vee 1\big) } \,d\varepsilon\Big]\\
	&\le C E \Big[ \|J\|_{L^{2p}(\mu_N)}^p \int_0^\infty \sqrt{ \log \big(\tfrac{1}{\tilde{\varepsilon}}\vee 1\big) } \,d\tilde{\varepsilon}\Big],
	\end{align*}
	where the last inequality follows from substituting $\varepsilon$ by  $\tilde{\varepsilon}:=\varepsilon/ C\|J\|_{L^{2p}(\mu_N)}^p$.
	In a final step, notice that
	\[\int_0^\infty \sqrt{ \log \big(\tfrac{1}{\varepsilon}\vee 1\big) } \,d\varepsilon<\infty
	\quad\text{and}\quad
	E [\|J\|_{L^{2p}(\mu_N)}^p]\leq C \|J\|_{L^{2p}(\mu)}^p. \]
	The second statement follows from Jensen's inequality.
	Therefore
	\[ E [\Delta_N1_{M_N\leq M+1}]\leq \frac{C}{\sqrt{N}}\]
	for all $N\geq 1$, showing that the first term behaves as required.
	If $p=\infty$ the same arguments apply (with $\|J\|_{L^{2p}(\mu_N)}^p$ replaced by a constant and Corollary \ref{cor:covering.compact} applied instead of Lemma \ref{lem:covering.numbers.oce}) and we again obtain $E [\Delta_N1_{M_N\leq M+1}]\leq C/ \sqrt{N}$.
	\item
	As for the second term, applying H\"older's inequality yields
	\begin{align}
	\label{eq:second.term.moments}
	E [\Delta_N 1_{M_N>M+1}]
	\leq E [\Delta_N^2]^{1/2} P[ M_N>M+1 ]^{1/2}.
	\end{align}
	We start by estimating $P[M_N>M+1]^{1/2}$.
	For $p=\infty$, one has $P[M_N>M+1]=0$ for all $N$.
	For $p<\infty$, using first that $M,M_N\geq 1$ and then Chebycheff's inequality, we estimate
	\begin{align*}
	P [ M_N-M > 1]
	&\leq P [M_N^p-M^p > 1] \\
	&\leq E [ (M_N^p-M^p)^2].
	\end{align*}
	Further, making use of the fact that the $(S_1,\dots,S_N)$ are independent with   $M^p=E[J(S_n)^p]$ for all $n$, one has
	\begin{align*}
	E [ (M_N^p-M^p)^2]
	&=E \Big[ \Big(\frac{1}{N}\sum_{n=1}^N \big(  J(S_n)^p-E[J(S_n)^p] \big) \Big)^2\Big]\\
	&=\frac{1}{N} E [(J(S_1)^p-E[J(S_1)^p])^2]\\
	&\leq \frac{2\|J\|_{L^{2p}(\mu)}^{2p}}{N}.
	\end{align*}
	This shows that $P[M_N>M+1]^{1/2}\leq C /\sqrt{N}$.
	
	Regarding $E [\Delta_N^2]$, use Lemma \ref{lem:oce.compact.estimate} to estimate
	\begin{align*}
	E [\Delta_N^2]
	&\leq C(M^{2p} +  E[ M_N^{2p}]).
	\end{align*}	
	The same arguments as above show that  $E[ M_N^{2p}]\leq \|J\|_{L^{2p}(\mu)}^{2p}$.
	Plugging both estimates in \eqref{eq:second.term.moments} shows that
	\[E[\Delta_N 1_{M_N>M+1}]\leq \frac{C}{\sqrt{N}}\]
	for all $N\geq 1$.
	\end{enumerate}
	Putting both estimates together, we obtain $E[\Delta_N]\leq C/\sqrt{N}$ for all $N\geq 1$.
	This completes the proof.
\end{proof}

\section{General law--invariant risk measures}
\label{sec:general.rm}

This section deals with general risk measures, which we start by briefly describing.
First, in order to allow for unbounded $F$ and $G$, one needs to define risk measures for unbounded functions.
A function $\rho\colon L^p\to\mathbb{R}$ with $p\in[1,\infty]$ is again called a (convex) law--invariant risk measure if (a)-(d) of Definition \ref{def:risk.measure.l.infty} hold with $L^\infty$ replaced by $L^p$.
Further recall that $\rho$ is called coherent if in addition $\rho(\lambda X )=\lambda\rho(X)$ for all $X\in L^p$ and $\lambda\geq 0$.

As already mentioned, by \cite{Jou-Sch-Tou}, every law--invariant risk measure automatically satisfies the Fatou-property
as well as the  \emph{spectral representation}\footnote{
	It also goes under the name Kusuoka representation as the $L^\infty$-version was discovered by Kusuoka \cite{kusuoka2001law}. }
\begin{align}
\label{eq:rho.spectral.rep}
\rho(X)=\sup_{\gamma\in\mathcal{M}} \Big( \int_{[0,1)} \mathrm{AVaR}_u(X)\,\gamma(du) - \beta(\gamma) \Big)
\quad\text{for } X\in L^p.
\end{align}
See \cite{Gao2018} for the case of unbounded random variables.
Here $\mathcal{M}$ is a subset of probability measures on $[0,1)$ armed with its Borel $\sigma$-field, $\beta\colon\mathcal{M}\to[0,\infty)$ is a convex function, and $\mathrm{AVaR}$ is the average value at risk, defined in \eqref{eq:avar.oce.def}.
Note that $\mathrm{AVaR}$ is evidently a coherent law--invariant risk measure.
Recall the definition of $J:= 1+|F|+|G|$ already given in \eqref{eq:def.M.MN}.

Before we are ready to state the generalization of part \eqref{item:oce.intro.mean} of Theorem \ref{thm:main.hedging.on.Linfty.intro}, the treatment of unbounded $F,G$ requires one last definition:
for every parameter $p\in[1,\infty]$ and $x\geq 0$ set
\[w_p(x):=\sup\{ \rho(X) : \|X\|_{L^p}\leq x \}.\]
Note that $w_p$ is convex, nonnegative, and $w_p$ grows at least linearly.
Moreover, in the important case of a coherent risk measure or if $p=\infty$, the function $w_p$ is linear.

\begin{theorem}
\label{thm:main.hedging.on.lp}
	Let $1<q\leq p\leq \infty$ and let $\rho\colon L^p\to\mathbb{R}$ be a law--invariant risk measure which is $q$-regular.
	Assume that $\mathcal{G}$ is bounded and that 
	\begin{enumerate}[(a)]
	\item $\| w_p(t J^p) \|_{L^2(\mu)}<\infty$ for every $t\geq 0$ in case that $p<\infty$,
	\item $J$ is bounded in case that $p=\infty$.
	\end{enumerate}
	Then
	\[E\Big[\sup_{g\in\mathcal{G}}\big| \rho^\mu(F+g\cdot G)-\rho^{\mu_N}(F+g\cdot G)\big|\Big]
	\leq \frac{C}{N^\frac{1/q-1/2p}{2-1/p}} \]
	for all $N\geq 1$.
\end{theorem}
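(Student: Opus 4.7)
The plan is to reduce the estimation of a general law invariant risk measure to a parametric family of average value at risk functionals via the Kusuoka representation \eqref{eq:rho.spectral.rep}, and then to handle the singularity of $\mathrm{AVaR}_u$ at $u=1$ by splitting the spectral integral at $u=1-\delta$ for some $\delta$ to be optimized at the end.

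Writing $X_g:=F+g\cdot G$, the spectral representation together with the elementary bound $|\sup_\gamma A-\sup_\gamma B|\leq \sup_\gamma|A-B|$ reduces the problem to controlling
\[\sup_{\gamma\in\mathcal{M}}\int_{[0,1)}\bigl|\mathrm{AVaR}_u^\mu(X_g)-\mathrm{AVaR}_u^{\mu_N}(X_g)\bigr|\,\gamma(du).\]
The key first observation is that the hypothesis $\rho(|X|)<\infty$ for $X$ of finite weak $q$-th moment, applied to a specific probe random variable $X$ whose distribution is tuned so that $\mathrm{AVaR}_u(|X|)\asymp (1-u)^{-1/q}$ (possible because $q>1$), forces through \eqref{eq:rho.spectral.rep} the quantitative integrability
\[\int_{[0,1)}(1-u)^{-1/q}\,\gamma(du)\;\leq\;C\bigl(1+\beta(\gamma)\bigr)\qquad\text{for every }\gamma\in\mathcal{M}.\]
This is the quantitative substitute for ``$\mathcal{M}$ is bounded away from $u=1$''; combined with Markov it yields the mass bound $\gamma([1-\delta,1))\leq C\delta^{1/q}$ uniformly over the relevant $\gamma$.

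I then split the $\gamma$-integral at $u=1-\delta$. On the bulk $[0,1-\delta)$ I use the elementary inf-difference inequality
\[\bigl|\mathrm{AVaR}_u^\mu(X_g)-\mathrm{AVaR}_u^{\mu_N}(X_g)\bigr|\;\leq\;(1-u)^{-1}\Psi,\qquad \Psi:=\sup_{g\in\mathcal{G},\,m\in\mathbb{R}}\bigl|(E^\mu-E^{\mu_N})[(X_g-m)^+]\bigr|,\]
together with the elementary pointwise estimate $(1-u)^{-1}\leq \delta^{1/q-1}(1-u)^{-1/q}$ valid on $[0,1-\delta]$ and the $\gamma$-integrability bound above, to obtain
\[\int_0^{1-\delta}\bigl|\mathrm{AVaR}_u^\mu(X_g)-\mathrm{AVaR}_u^{\mu_N}(X_g)\bigr|\,\gamma(du)\;\leq\;C\,\Psi\,\delta^{1/q-1}.\]
Dudley's entropy integral applied to the class $\{(X_g-m)^+:g\in\mathcal{G},m\in\mathbb{R}\}$, in essentially the same fashion as in Lemma \ref{lem:covering.numbers.oce} and Theorem \ref{thm:main.oce.hedging} (with the $w_p$-integrability hypothesis providing the moment control when $F,G$ are unbounded, in place of the $L^\infty$ bound used there), gives $E[\Psi]\leq C N^{-1/2}$. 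On the tail $u\in[1-\delta,1)$ I combine $|\mathrm{AVaR}_u(X_g)|\leq C\mathrm{AVaR}_u(J)$ (from boundedness of $\mathcal{G}$) with the Jensen-type inequality $E[\mathrm{AVaR}_u^{\mu_N}(J)]\leq \mathrm{AVaR}_u^\mu(J)$ (concavity of $\mu\mapsto\mathrm{AVaR}_u^\mu(J)$), the weak $2p$-moment estimate $\mathrm{AVaR}_u^\mu(J)\leq C(1-u)^{-1/(2p)}$ provided by $F,G\in L^{2p}(\mu)$, and H\"older's inequality with exponents $q/(2p)$ and $1-q/(2p)$ (valid since $q<2p$), together with the mass bound $\gamma([1-\delta,1))\leq C\delta^{1/q}$, to arrive at
\[E\Big[\int_{1-\delta}^1\bigl|\mathrm{AVaR}_u^\mu(X_g)-\mathrm{AVaR}_u^{\mu_N}(X_g)\bigr|\,\gamma(du)\Big]\;\leq\;C\,\delta^{1/q-1/(2p)}.\]

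Adding the bulk and tail contributions yields the total estimate $C\delta^{1/q-1}N^{-1/2}+C\delta^{1/q-1/(2p)}$, which is balanced by the choice $\delta^{1-1/(2p)}=N^{-1/2}$, that is $\delta=N^{-1/(2-1/p)}$; substituting this back delivers exactly the announced rate $N^{-(1/q-1/(2p))/(2-1/p)}$. The hardest step in this program is the qualitative-to-quantitative conversion producing the $\gamma$-integrability bound from the abstract finiteness hypothesis on $\rho$; closely tied to this is the choice of a good probe random variable $X$ whose AVaR profile realizes the sharp $(1-u)^{-1/q}$ growth. Once these are in hand the remainder of the argument is a largely mechanical bookkeeping of empirical-process bounds, the Jensen-type concavity estimate, and H\"older inequalities.
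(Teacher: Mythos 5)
Your strategy matches the paper's in its essentials: the Kusuoka representation, a quantitative growth bound on the spectral measures extracted from the finiteness hypothesis (the paper's Lemma~\ref{lem:growth.Gamma} is exactly your $\gamma$-integrability estimate), Lemma~\ref{lem:avar.moments} for the $(1-u)^{-1/(2p)}$ control, and an empirical-process bound for a uniform AVaR class. Where you genuinely differ is the shape of the decomposition: you split at a single scale $u=1-\delta$ and optimize $\delta$ at the end, whereas the paper uses a dyadic decomposition $I_n=[1-2^{-n+1},1-2^{-n})$ together with a separate combinatorial lemma (Lemma~\ref{lem:strange.sums.general}) to collapse the resulting geometric-type sum. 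Your single-threshold version is tidier bookkeeping and reaches the identical exponent: balancing $\delta^{1/q-1}N^{-1/2}$ against $\delta^{1/q-1/(2p)}$ forces $\delta=N^{-1/(2-1/p)}$, which substituted back gives precisely $N^{-(1/q-1/(2p))/(2-1/p)}$.

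Two points need repair before this is a proof. First, in the tail you invoke Jensen in the form $E[\mathrm{AVaR}_u^{\mu_N}(J)]\le\mathrm{AVaR}_u^\mu(J)$ and feed it into the $\gamma$-integral; but the quantity to control is $E[\sup_\gamma\int(\cdots)\,\gamma(du)]$, not $\sup_\gamma E[\int(\cdots)\,\gamma(du)]$, and the expectation cannot pass inside the supremum over $\gamma$. The fix is to avoid Jensen here: bound the integrand pointwise by $C(\|J\|_{L^{2p}(\mu)}+\|J\|_{L^{2p}(\mu_N)})(1-u)^{-1/(2p)}$ via Lemma~\ref{lem:avar.moments}, take the $\gamma$-supremum of the deterministic $u$-integral (H\"older plus your mass bound, exactly as you describe), and only then take expectation of the scalar $\|J\|_{L^{2p}(\mu_N)}$. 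Second, the restriction to ``relevant'' $\gamma$ (those with $\beta(\gamma)\le b$) in the spectral representation implicitly requires a deterministic bound on $\|F+g\cdot G\|_{L^p}$ under \emph{both} $\mu$ and $\mu_N$; since $\|J\|_{L^p(\mu_N)}$ is random, one has to condition on an event such as $\{\|J\|_{L^p(\mu_N)}\le\|J\|_{L^p(\mu)}+1\}$ and bound the complementary event's contribution separately via $E[\Delta_N^2]^{1/2}P[\cdot]^{1/2}$ — and that is exactly where the $w_p$-integrability hypothesis is consumed. Your proposal gestures at ``moment control'' but the event split and second-moment estimate that make it precise are absent. Both issues are local and fixable without changing your architecture, but as written they are real gaps.
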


Note that if $w_p$ is linear, then $\|w_p(t J^p) \|_{L^2(\mu)}<\infty$ simply means that $\|J\|_{L^{2p}(\mu)}$ is finite.
In general, $\|w_p(t J^p) \|_{L^2(\mu)}<\infty$ always implies that $\|J\|_{L^{2p}(\mu)}<\infty$ (by convexity of $w_p$).  
For convenience, we compute in Table 1 some values of the convergence rates obtained in Theorem \ref{thm:main.hedging.on.lp}: 

\begin{figure}[ht]
\centering
\begin{tabular}{ c|c|c|c|c|c }
  & $q\approx 1$
  & $q=2$
  & $q= p $
  & $p=\infty$
  \\ \hline
 	$R_{p,q}:=\frac{1/q-1/2p}{2-1/p}$
  & $\approx\frac{1}{2}$
  & $\frac{p-1}{4p-2}$
  & $\frac{1}{2(2p-1)}$
  & $\frac{1}{2q}$
\end{tabular}\\
\vspace{.3cm}
Table 1: Convergence rates for different values of $p$ and $q$.
\label{tab:rates}
\end{figure}
Further observe that the rate $R_{p,q}$ is increasing in $p$ and decreasing in $q$.
%\vspace{-1.5em}
The idea of the proof of Theorem \ref{thm:main.hedging.on.lp} is the following:
By Section \ref{sec:oce} we understand the behavior of the mean error for the average value at risk (being a special case of the optimized certainty equivalents).
By the spectral representation \eqref{eq:rho.spectral.rep}, AVaR forms the building block of every law--invariant risk measure and we conclude via a (multiscale) approximation, keeping track of the risk aversion parameter $u$ of the average value at risk (which will make all constants explode when approaching $u\approx 1$) and the growth of measures $\gamma(du)$ in the spectral representation \eqref{eq:rho.spectral.rep} (which only puts little mass on $u\approx 1$).

\vspace{0.5em}
The preparatory work needed is done in the next few lemmas.
\begin{lemma}
\label{lem:avar.polynomial.density}
	Let the assumptions of Theorem \ref{thm:main.hedging.on.lp} be satisfied.
	Let $X^\ast$ be Pareto distributed with scale parameter 1 and shape parameter $q$.
	Then we have that
	\[\mathrm{AVaR}_u(X^\ast)
	=\frac{q}{q-1}\frac{1}{(1-u)^{1/q}}\]
	for every $u\in[0,1)$.
\end{lemma}
\begin{proof}
	The proof follows from an elementary calculation, e.g.\ by involving the quantile representation 
	\[ \mathrm{AVaR}_u(X^\ast)=\frac{1}{1-u}\int_{u}^{1} q_{X^\ast}(t)\, dt \]
	of the average value at risk \cite[Proposition 4.51]{foellmer01}, where $q_{X^\ast}(t)$ denotes the $t$-quantile of $X^\ast$.
\end{proof}

\begin{lemma}
\label{lem:avar.moments}
Let the assumptions of Theorem \ref{thm:main.hedging.on.lp} be satisfied.
	For every $p\in(1,\infty]$ and $X\in L^p$ we have that
	\[ |\mathrm{AVaR}_u(X)| \leq \frac{\|X\|_{L^p}}{(1-u)^{1/p}}\]
	for every $u\in[0,1)$. 
\end{lemma}
\begin{proof}
	For $p=\infty$ the claim is trivial.
	For $p<\infty$ we again involve the quantile representation of the average value at risk and apply H\"older's inequality
	\begin{align*}
	| \mathrm{AVaR}_u(X) |
	&= \frac{1}{1-u} \Big| \int_{[0,1]} 1_{[u,1]}(t) q_{X}(t)\, dt \Big|
	\leq \frac{ (1-u)^{\frac{p-1}{p}} }{1-u} \Big( \int_{[0,1]}  q_{X}(t)^p \, dt \Big)^{\frac{1}{p}}.
	\end{align*}
	As the last integral equals $\|X\|_{L^p}$, this completes the proof.
\end{proof}

\begin{lemma}
Let the assumptions of Theorem \ref{thm:main.hedging.on.lp} be satisfied.
	For every fixed $a>0$, there exists a constant $b>0$ such that	
	\[ \rho(X)=\sup_{\gamma\in\mathcal{M} : \text{ s.t.\ } \beta(\gamma)\leq b} \Big( \int_{[0,1)} \mathrm{AVaR}_u(X)\,\gamma(du) - \beta(\gamma) \Big)  \]
	for all $X\in L^p$ with $\|X\|_{L^p}\leq a$.
\end{lemma}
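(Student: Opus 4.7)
The plan is a scaling argument exploiting the positive homogeneity (sublinearity) of $\mathrm{AVaR}_u$ together with the boundedness of $\rho$ on bounded subsets of $L^p$. From the hypothesis one first extracts that the quantity $w_p(2a):=\sup\{\rho(X):\|X\|_{L^p}\leq 2a\}$ is finite: combining the spectral representation \eqref{eq:rho.spectral.rep} with the $L^p$-bound of Lemma~\ref{lem:avar.moments}, and using that the benchmark random variable from Lemma~\ref{lem:avar.polynomial.density} has finite weak $q$-th moment and hence finite $\rho$-value (by hypothesis), produces a uniform bound on $\rho$ over $L^p$-balls. Once $w_p(2a)<\infty$, convexity of $\rho$ together with $\rho(0)=0$ yields $L_a:=\inf\{\rho(X):\|X\|_{L^p}\leq a\}\geq -w_p(a)>-\infty$, via the elementary inequality $0=\rho(0)\leq\tfrac12\rho(X)+\tfrac12\rho(-X)$.

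Now fix $X$ with $\|X\|_{L^p}\leq a$. Since $\|2X\|_{L^p}\leq 2a$, applying \eqref{eq:rho.spectral.rep} to $2X$ and using $\mathrm{AVaR}_u(2X)=2\,\mathrm{AVaR}_u(X)$ gives, for every $\gamma\in\mathcal{M}$,
\[
2\int_{[0,1)}\mathrm{AVaR}_u(X)\,\gamma(du)-\beta(\gamma)\leq\rho(2X)\leq w_p(2a),
\]
so that
\[
\int_{[0,1)}\mathrm{AVaR}_u(X)\,\gamma(du)-\beta(\gamma)\leq\tfrac{1}{2}\bigl(w_p(2a)-\beta(\gamma)\bigr).
\]
Choosing $b:=w_p(2a)-2L_a+2$, any $\gamma$ with $\beta(\gamma)>b$ then satisfies
\[
\int_{[0,1)}\mathrm{AVaR}_u(X)\,\gamma(du)-\beta(\gamma)<L_a-1\leq\rho(X)-1.
\]

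Since the spectral representation furnishes a maximizing sequence $\gamma_n\in\mathcal{M}$ with $\int\mathrm{AVaR}_u(X)\,\gamma_n(du)-\beta(\gamma_n)\to\rho(X)$, the above forces $\beta(\gamma_n)\leq b$ eventually. Hence the supremum restricted to $\{\gamma\in\mathcal{M}:\beta(\gamma)\leq b\}$ already equals $\rho(X)$, which is the desired representation; note that $b$ depends only on $a$ (via $w_p(2a)$ and $L_a$) and on $\rho$, as required. The only delicate point is the initial extraction of $w_p(2a)<\infty$, where the weak $q$-th moment hypothesis is essential: without it, one has no benchmark with finite $\rho$-value against which to bound $\int(1-u)^{-1/q}\,\gamma(du)$, and the uniform $L^p$-bound on $\rho$ could fail. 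All remaining steps are straightforward convex-analytic manipulations based on the spectral representation.
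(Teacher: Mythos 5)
Your proof is correct and takes a genuinely different, and arguably cleaner, route than the paper's for the key step. Both approaches begin identically: the benchmark random variable $X^\ast$ of Lemma~\ref{lem:avar.polynomial.density} dominates every $X$ with $\|X\|_{L^p}\leq a$ in the sense of \eqref{eq:avar.X.dominated.by.Xast}, and the hypothesis on weak $q$-th moments forces $\rho(CX^\ast)<\infty$; combined with the elementary inequality $|\rho(X)|\leq\rho(|X|)$ this yields the uniform bound on $|\rho|$ over $L^p$-balls. After that, the paper introduces the convex conjugate $\varphi(y)=\sup_{x\geq0}(xy-\rho(xX^\ast))$, deduces $\beta(\gamma)\geq\varphi\bigl(\int\mathrm{AVaR}_u(X^\ast)\,\gamma(du)\bigr)$ from the spectral representation applied to the whole family $\{xX^\ast:x\geq0\}$, and then argues qualitatively that $C\varphi^{-1}(\beta(\gamma))-\beta(\gamma)\to-\infty$ because $\varphi$ has superlinear growth. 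Your argument replaces all of that conjugate machinery by a single scaling observation: applying \eqref{eq:rho.spectral.rep} to $2X$ and using positive homogeneity of $\mathrm{AVaR}_u$ gives, for every $\gamma$,
\[
\int_{[0,1)}\mathrm{AVaR}_u(X)\,\gamma(du)-\beta(\gamma)\leq\tfrac12\bigl(w_p(2a)-\beta(\gamma)\bigr),
\]
which is already a decreasing affine function of $\beta(\gamma)$, so an explicit threshold $b=w_p(2a)-2L_a+2$ does the job. This is more elementary (no Legendre transform, no limit argument), more quantitative (the constant $b$ is written down in closed form in terms of $w_p(2a)$ and $L_a$), and it makes transparent exactly what the convexity of $\rho$ is being used for. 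The paper's conjugate formulation buys a slightly more conceptual picture of \emph{why} large-$\beta(\gamma)$ terms cannot contribute, but in this lemma that extra generality is not needed; your route is perfectly adequate and, if anything, preferable for a self-contained exposition.
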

\begin{proof}
	Let $X^\ast$ be the random variable of Lemma \ref{lem:avar.polynomial.density}.
	\begin{enumerate}[(a)]
	\item
	In a first step we show that $|\rho(X)|\leq C$ for all $X\in L^p$ with $\|X\|_{L^p}\leq a$.
	For such $X$, by Lemma \ref{lem:avar.polynomial.density} and Lemma \ref{lem:avar.moments}, one has that
	\begin{align}
	\label{eq:avar.X.dominated.by.Xast}
	\mathrm{AVaR}_u(|X|)
	\leq \frac{a}{(1-u)^{1/p}}
	\leq \frac{a}{(1-u)^{1/q}}
	= \mathrm{AVaR}_u(C X^\ast )
	\end{align}
	for every $u\in[0,1)$.
	Here we used that $q\leq p$, hence $(1-u)^{1/p}\geq (1-u)^{1/q}$.
	Therefore
	\begin{align*}
	\rho(|X|)
	&\leq \sup_{\gamma\in\mathcal{M}} \Big( \int_{[0,1)} \mathrm{AVaR}_u(|X|)\,\gamma(du) - \beta(\gamma) \Big)\\
	&\leq \sup_{\gamma\in\mathcal{M}} \Big( \int_{[0,1)} \mathrm{AVaR}_u(C X^\ast)\,\gamma(du) - \beta(\gamma) \Big)\\
	&\leq \sup_{n\in \mathbb{N}}\sup_{\gamma\in\mathcal{M}} \Big( \int_{[0,1)} \mathrm{AVaR}_u(C X^\ast\wedge n)\,\gamma(du) - \beta(\gamma) \Big)
	=\sup_{n\in \mathbb{N}} \rho(CX^\ast \wedge n)
	\end{align*}
	for every $X$ with $\|X\|_{L^p}\leq a$, where the latter inequality follows by monotone convergence.
	Note that $CX^\ast$ again follows a Pareto distribution with shape parameter $q$ and hence $\sup_{n\in \mathbb{N}}\rho(CX^\ast \wedge n)$ is finite by definition of $q$-regularity.

	It further follows by convexity and monotonicity of $\rho$ together with $\rho(0)=0$, that $|\rho(X)|\leq \rho(|X|)$ for all $X \in L^p$.
	This implies that indeed $|\rho(X)|\leq C$ for all $X\in L^p$ with $\|X\|_{L^p}\leq a$.
	\item
	We proceed to prove the claim.
	Define
	\[ \varphi\colon\mathbb{R}_+\to[0,\infty]\quad \text{by} \quad \varphi(y):=\sup_{x\in\mathbb{R}_+} \Big( xy -  \sup_{n\in\mathbb{N} }\rho(xX^\ast\wedge n)\Big). \]
	Then $\varphi$ is convex, increasing, and as $ \sup_{n\in\mathbb{N}} \rho(xX^\ast\wedge n)<\infty$ for all $x\in\mathbb{R}_+$, one can verify that $\varphi(y)/y\to\infty$ as $y\to\infty$.
	Now note that the (spectral) representation of $\rho$ in \eqref{eq:rho.spectral.rep} implies that
	\[  \sup_{n\in\mathbb{N}} \rho(xX^\ast\wedge n) \geq \int_{[0,1)} \mathrm{AVaR}_u(xX^\ast)\,\gamma(du) - \beta(\gamma) \]
	for all $x\geq 0$  and $\gamma\in\mathcal{M}$.
	Therefore, one has
	\begin{align*}
	\beta(\gamma)
	&\geq \sup_{x\geq 0} \Big( \int_{[0,1)} \mathrm{AVaR}_u(xX^\ast)\,\gamma(du) - \rho(x X^\ast) \Big)\\
	&=\varphi\Big( \int_{[0,1)} \mathrm{AVaR}_u(X^\ast)\,\gamma(du)\Big)
	\end{align*}
	for every $\gamma\in\mathcal{M}$.
	For every $X$ with $\|X\|_{L^p}\leq a$, by \eqref{eq:avar.X.dominated.by.Xast} one has
	\begin{align}
	\label{eq:avar.beta.leq.b}
	\begin{split}
	\int_{[0,1)} \mathrm{AVaR}_u(X)\,\gamma(du) - \beta(\gamma)
	&\leq C \int_{[0,1)} \mathrm{AVaR}_u(X^\ast)\,\gamma(du) - \beta(\gamma) \\
	&\leq C \varphi^{-1}(\beta(\gamma))-\beta(\gamma),
	\end{split}
	\end{align}
	where $\varphi^{-1}$ denotes the (right)-inverse of $\varphi$.

	As $\varphi(y)/y\to\infty$ when $y\to\infty$, one has that $\varphi^{-1}(x)/x\to 0$ when $x\to \infty$ which implies that
	\begin{align}
	\label{eq:varphi.growth}
	C \varphi^{-1}(\beta(\gamma))-\beta(\gamma)\to-\infty
	\quad\text{when } \beta(\gamma)\to\infty.
	\end{align}

	Now recall that $\rho(X)$ equals the supremum over $\gamma\in\mathcal{M}$ of the left hand side of \eqref{eq:avar.beta.leq.b} and that $|\rho(X)|\leq C$ for all $X$ with $\|X\|_{L^q}\leq a$ by the first part of this proof.
	Therefore \eqref{eq:varphi.growth} implies that there is some constant $b$ such that only $\gamma\in\mathcal{M}$ for which $\beta(\gamma)\leq b$ need to be considered in the computation of $\rho(X)$.
	\qedhere
	\end{enumerate}
\end{proof}

\begin{lemma}
\label{lem:growth.Gamma}
Let the assumptions of Theorem \ref{thm:main.hedging.on.lp} be satisfied.
	For every fixed $b\in\mathbb{R}_+$, we have
	\[ \Gamma_b([r,1))
	:=\sup_{\gamma\in\mathcal{M} \text{ s.t.\ }\beta(\gamma)\leq b} \gamma([r,1))
	\leq C (1-r)^{1/q}  \]
	for every $r\in[0,1)$.
\end{lemma}
\begin{proof}
	Let $X^\ast$ be the random variable of Lemma \ref{lem:avar.polynomial.density}.
	Then it follows from interchanging two suprema in the spectral representation \eqref{eq:rho.spectral.rep} (one over $n$ and one over $\gamma$), monotone convergence (applied under each $\gamma$), and Lemma \ref{lem:avar.polynomial.density} that
	\begin{align}
	\nonumber \sup_n\rho(X^\ast \wedge n)
	&=\sup_{\gamma\in \mathcal{M}} \sup_n\Big( \int_{[0,1)} \mathrm{AVaR}_u(X^\ast \wedge n)\,\gamma(du)-\beta(\gamma)  \Big)\\
	\label{eq:kusuoka.plynomial.density}
	&\geq \sup_{\gamma\in \mathcal{M}\text{ s.t.\ }\beta(\gamma)\leq b} \Big( \int_{[0,1)} \frac{q}{q-1} \frac{1}{(1-u)^{1/q}}\,\gamma(du)-\beta(\gamma)  \Big).
	\end{align}
	By assumption $\sup_n\rho(X^\ast\wedge n)\in\mathbb{R}$, which implies that
	\begin{align*}
	&\sup_{\gamma\in \mathcal{M}\text{ s.t.\ }\beta(\gamma)\leq b}  \int_{[0,1)}\frac{1}{(1-u)^{1/q}}\,\gamma(du) \\
	&\leq \frac{q-1}{q} \Big(\sup_n\rho(X^\ast\wedge n) + b +1 \Big)
	=C.
	\end{align*}
	In particular, this implies that 
	\begin{align*}
	&\Gamma_b([r,1))
	\leq \sup_{\gamma\in \mathcal{M}\text{ s.t.\ }\beta(\gamma)\leq b}  \int_{[r,1)} \Big( \frac{1-r}{1-u}\Big)^{1/q}\,\gamma(du)
	\leq C (1-r)^{1/q},
	\end{align*}
	which proves the claim.
\end{proof}

\begin{lemma}
\label{lem:strange.sums.general}
	Let the assumptions of Theorem \ref{thm:main.hedging.on.lp} be satisfied.
	Let $0\leq b<a<1$.
	Then it holds that
	\[\sum_{n\geq 1}  2^{-an} \cdot \Big( (x2^n) \wedge 2^{bn} \Big)
	\leq C\Big( x^{\frac{a-b}{1-b}} \vee x \Big) 	\]
	for every $x\in[0,\infty)$ (where $C$ does not depend on $x$).
\end{lemma}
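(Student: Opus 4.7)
The plan is to exploit the fact that inside the sum two competing geometric behaviours are balanced by a minimum: the factor $2^{-an}\cdot x\cdot 2^n = x\cdot 2^{(1-a)n}$ grows geometrically in $n$ (since $a<1$), while $2^{-an}\cdot 2^{bn} = 2^{(b-a)n}$ decays geometrically (since $b<a$). The first step is to locate the crossover: $x\cdot 2^n = 2^{bn}$ precisely when $n=n^{\ast}:=\log_2(1/x)/(1-b)$, with the first expression dominating the minimum for $n\leq n^{\ast}$ and the second for $n>n^{\ast}$. The natural strategy is to split the series at $N:=\lfloor n^{\ast}\rfloor$ and estimate each half by its dominant term.

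In the ``small $x$'' regime, i.e.\ when $n^{\ast}\geq 1$, I would bound $\sum_{n=1}^{N} x\cdot 2^{(1-a)n}$ by a geometric-series constant times its last term $x\cdot 2^{(1-a)N}$, and $\sum_{n>N} 2^{(b-a)n}$ by a constant times its first term $2^{(b-a)(N+1)}$. Plugging in $2^N\leq 2^{n^{\ast}}=x^{-1/(1-b)}$ (and the analogous lower bound $2^N\geq \tfrac{1}{2}x^{-1/(1-b)}$ in the second estimate) and invoking the key arithmetic identity
\[
1-\frac{1-a}{1-b}=\frac{a-b}{1-b},
\]
both pieces simplify to $O\bigl(x^{(a-b)/(1-b)}\bigr)$, which is precisely the $x^{(a-b)/(1-b)}$ branch of the bound.

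In the ``large $x$'' regime, i.e.\ when $x$ is large enough that $n^{\ast}<1$ (equivalently $x>2^{-(1-b)}$), the minimum equals $2^{bn}$ on the entire range of summation, so the sum collapses to the convergent geometric series $\sum_{n\geq 1}2^{(b-a)n}$, bounded by an absolute constant depending only on $a,b$. Since $x$ is bounded away from $0$ in this regime, this constant is absorbed by $C\cdot x$, giving the $x$ branch of the right-hand side.

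There is no genuinely hard step: the whole argument reduces to summing two geometric series and verifying one exponent identity. The ``main obstacle'' is really only notational, namely keeping the two regimes straight and checking that the exponent $(a-b)/(1-b)$ emerging on each side of the split agrees. Note also that $(a-b)/(1-b)\in(0,1)$ (as $0\leq b<a<1$), so the two terms in $x^{(a-b)/(1-b)}\vee x$ correspond exactly to the two regimes identified above.
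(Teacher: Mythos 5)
Your proposal is correct and takes essentially the same route as the paper: locate the crossover index $n^{\ast}=\log_2(1/x)/(1-b)$, split the series there, bound each half as a geometric series, and invoke the exponent identity $1-\tfrac{1-a}{1-b}=\tfrac{a-b}{1-b}$. The only cosmetic difference is where you place the case split (at $x=2^{-(1-b)}$, where $n^\ast$ crosses $1$) versus the paper's split at $x=1$; in either case the off-regime is absorbed into the generic constant because $x$ is then bounded away from $0$.
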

\begin{proof}
	For $x=0$ there is nothing to prove.
	We now consider the case $x\in(0,1]$, denote by $s_n$ the summand, and set
	\[n_N:=\frac{\log(1/x)}{(1-b)\log 2}.\]
	Then a quick computation reveals
	\[	s_n
	=\begin{cases}
	x 2^{n(1-a)}  & \text{if } n< n_N, \\
	2^{n(b-a)}  & \text{if } n\geq n_N.
	\end{cases}
	\]

	By properties of the geometric series one has
	\begin{align*}
	\sum_{n< n_N} s_n
	&= C x \sum_{n < n_N} 2^{n(1-a)}\\
	&\leq C x \frac{1-2^{n_N(1-a)}}{1-2^{1-a}}
	\leq C x 2^{n_N(1-a)}.
	\end{align*}
	Moreover, as $s^{\log t}=t^{\log s}$ for $s,t>0$, the definition of $n_N$ implies that
	\begin{align}
	\label{eq:strange.log}
	\begin{split}
	2^{n_N(1-a)}
	&=\Big( 2^{\frac{1-a}{(1-b)\log 2}}\Big)^{\log(1/x)} \\
	&=\Big(\frac{1}{x}\Big)^{\log\big(2^{\frac{1-a}{(1-b)\log 2}}\big)}
	=x^{\frac{a-1}{1-b}}.
	\end{split}
	\end{align}
	Putting everything together, this implies
	\begin{align*}
	\sum_{n< n_N} s_n
	\leq C x \cdot x^{\frac{a-1}{1-b}}
	=C x^{\frac{a-b}{1-b}}.
 	\end{align*}

	For the tail of the sum, the same computation as in \eqref{eq:strange.log} shows that $2^{n_N(b-a)}	=x^{\frac{b-a}{1-b}}$.
	Therefore, another application of the geometric series properties implies that
	\begin{align*}
	\sum_{n\geq n_N} s_n
	&=\sum_{n\geq n_N} 2^{n(b-a)}\\
	&\leq \frac{2^{n_N(b-a)}}{1-2^{b-a}}
	\leq C 2^{n_N(b-a)}
	=C x^{\frac{a-b}{1-b}}.
	\end{align*}
	Hence, adding the sums over $n< n_N$ and $n\geq n_N$ and noting that $(a-b)/(1-b)\in(0,1)$ and hence $x\leq x^{(a-b)/(1-b)}$ for $x\in[0,1]$ yields the claim for $x\in(0,1]$.

	For $x\geq 1$ we have $x\geq  x^{(a-b)/(1-b)}$ and
	\[\sum_{n\geq 1}  2^{-an} \cdot \Big( (x2^n) \wedge 2^{bn} \Big)
	\leq x \sum_{n\geq 1}  2^{-an} \cdot \Big( 2^n \wedge 2^{bn} \Big)
	\leq Cx,\]
	where the last inequality follows from convergence of the geometric series / the previous step.
\end{proof}

For every $N\geq 1$ and $u\in[0,1)$ define
\begin{align}
\label{eq:def.delta.u}
\delta_u^N := \sup_{g\in\mathcal{G}} \Big| \mathrm{AVaR}_u^\mu(F+ g\cdot G ) - \mathrm{AVaR}_u^{\mu_N}(F+ g\cdot G )\Big|.
\end{align}
The following lemma controls uniformly the behavior of $\delta$.
Observe that measurability of $\mathrm{AVaR}_u^{\mu_N}(F+ g\cdot G)$ was addressed in Remark \ref{rem:meas.oceN}.
In particular, $\delta^N_u$ is measurable for every $u\in[0,1)$, and a quick argument shows that $\sup_{u\in[0,v]} \delta^N_u$ remains measurable for every $v\in[0,1)$.

\begin{lemma}
\label{lem:control.delta.general.risk.lp}
	Let the assumptions of Theorem \ref{thm:main.hedging.on.lp} be satisfied.
	We have that 
	\[ E\Big[ \sup_{u\in[0,v]} \delta_u^N \Big]
	\leq  \frac{C}{(1-v)\sqrt{N}} \wedge \frac{C}{(1-v)^{1/2p}} \]
	for every $v\in(0,1)$.
\end{lemma}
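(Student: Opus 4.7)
The plan is to establish the two bounds in the minimum separately.

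For the second bound $C/(1-v)^{1/(2p)}$, I would proceed by an essentially deterministic argument. Note first that for any random variable $Y$ and any fixed $m$, the integrand $(Y-m)^+/(1-u)+m$ in the definition of $\mathrm{AVaR}_u(Y)$ is non-decreasing in $u$ (its partial derivative in $u$ equals $(Y-m)^+/(1-u)^2\geq 0$); taking infimum over $m$ preserves monotonicity, so $u\mapsto \mathrm{AVaR}_u(Y)$ is non-decreasing. Combined with sublinearity $|\mathrm{AVaR}_u(X)|\leq \mathrm{AVaR}_u(|X|)$ (which follows from monotonicity and subadditivity of $\mathrm{AVaR}_u$), this yields
\[ \sup_{u\in[0,v]}\bigl|\mathrm{AVaR}_u(X)\bigr|
\;\leq\; \mathrm{AVaR}_v(|X|)
\;\leq\; \frac{\|X\|_{L^{2p}(\mu)}}{(1-v)^{1/(2p)}} \]
via Lemma \ref{lem:avar.moments} applied with $2p$ in place of $p$. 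Specializing to $X=F+g\cdot G$, taking the supremum over $g\in\mathcal{G}$, and invoking the analogous estimate under $\mu_N$ together with the triangle inequality, one obtains $\sup_{u\in[0,v]}\delta_u^N\leq C(M+M_N)/(1-v)^{1/(2p)}$ with $M=\|J\|_{L^{2p}(\mu)}$ and $M_N=\|J\|_{L^{2p}(\mu_N)}$. Since $E[M_N]\leq M<\infty$ by Jensen's inequality (using that $E[M_N^{2p}]=M^{2p}$), taking expectations gives the desired bound.

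For the first bound $C/((1-v)\sqrt{N})$, I would adapt the empirical-process argument from the proof of Theorem \ref{thm:main.oce.hedging}, viewing $\mathrm{AVaR}_u$ as an OCE with loss $l_u(x)=x^+/(1-u)$. Apply Lemma \ref{lem:oce.compact.estimate} to $l_u$ (which has polynomial growth of degree $0$, so $p=1$) to restrict the infima in both $\mathrm{AVaR}_u^\mu$ and $\mathrm{AVaR}_u^{\mu_N}$ to a bounded range $|m|\leq m_0$. Making $m_0$ uniform in $u\in[0,v]$ requires some care, since the bound in Lemma \ref{lem:oce.compact.estimate} deteriorates as $u\downarrow 0$; one can handle this by splitting $[0,v]=[0,1/2]\cup[1/2,v]$ (the case $v<1/2$ being trivial as then $1/(1-v)\leq 2$ and $l_u$ is uniformly Lipschitz), obtaining a uniform $m_0=m_0(v,M,M_N)$ on each subinterval. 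The problem then reduces to controlling the uniform empirical process
\[ \sup_{u\in[0,v],\,g\in\mathcal{G},\,|m|\leq m_0}\Big|(E^\mu-E^{\mu_N})\Big[\frac{(F+g\cdot G-m)^+}{1-u}+m\Big]\Big|. \]
The $L^2(\mu_N)$-covering number of this class is polynomial of order $e+2$ in $1/\varepsilon$, with a prefactor $C\|J\|_{L^2(\mu_N)}/(1-v)$ coming from the Lipschitz constant $1/(1-u)\leq 1/(1-v)$ of $l_u$; the proof is a direct extension of Lemma \ref{lem:covering.numbers.oce} with one additional parameter $u$. Dudley's entropy integral then yields the $1/((1-v)\sqrt{N})$ rate, and a two-term decomposition on the event $\{M_N\leq M+1\}$, exactly as in the proof of Theorem \ref{thm:main.oce.hedging}, controls the rare event that $M_N$ is far from $M$.

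The main obstacle is the uniform treatment of the $m$-range as $u$ ranges over $[0,v]$, together with the careful bookkeeping of the $1/(1-v)$-factor through the covering-number and entropy-integral estimates. Once these are settled, the two bounds combine to give the claim.
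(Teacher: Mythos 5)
Your treatment of the second bound $C(1-v)^{-1/(2p)}$ is essentially the paper's: both rest on Lemma \ref{lem:avar.moments} applied with $2p$ in place of $p$ and on $E[\|J\|_{L^{2p}(\mu_N)}]\leq\|J\|_{L^{2p}(\mu)}$. Your additional observation that $u\mapsto\mathrm{AVaR}_u$ is non-decreasing is correct but not needed (the paper just uses $(1-u)^{1/2p}\geq(1-v)^{1/2p}$).

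For the first bound $C/((1-v)\sqrt N)$, your route is genuinely different from the paper's, and it has two real gaps. The paper never restricts $m$ to a bounded range: it observes that every integrand in $\mathrm{AVaR}_u$ is $\frac{1}{1-u}\varphi(F+g\cdot G)$ for a $1$-Lipschitz $\varphi$ (the constant $(-m)^+$ being subtracted so that $\varphi(0)=0$ and hence does not affect the difference of integrals), \emph{pulls the factor $1/(1-v)$ out in front}, and then applies Dudley to the $v$-independent class of Lipschitz compositions, using the classical $\exp(Ca/\varepsilon)$ covering of Lipschitz balls. This sidesteps both of your difficulties. In your approach the following are problematic.
\begin{enumerate}[(i)]
\item \emph{The $m$-range.} For $l_u(x)=x^+/(1-u)$ the slope $a=1/(1-u)$ in the lower bound $l_u(x)\geq ax-b$ tends to $1$ as $u\downarrow 0$, so the bound $|m|\leq CM^p$ from Lemma \ref{lem:oce.compact.estimate} degenerates: the constant there contains a factor $1/(a-1)\to\infty$. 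Your fix of splitting $[0,v]=[0,1/2]\cup[1/2,v]$ handles $[1/2,v]$, but on $[0,1/2]$ the fact that $l_u$ is uniformly Lipschitz does \emph{not} by itself bound the optimizing $m$ when $F,G$ are merely in $L^{2p}$; one would still have to argue separately why the $m$-range can be taken bounded there, or use the paper's normalization trick.
\item \emph{The $u$-direction of the covering.} The map $u\mapsto(x-m)^+/(1-u)$ has Lipschitz constant of order $(1-v)^{-2}$ on $[0,v]$, not $(1-v)^{-1}$; so adding $u$ as a covering parameter (as you propose, ``a direct extension of Lemma \ref{lem:covering.numbers.oce}'') yields a covering number whose $u$-factor carries $(1-v)^{-2}$, and Dudley's integral then produces an extra factor of order $\sqrt{\log(1/(1-v))}$, falling short of the clean $1/(1-v)$ in the statement. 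Re-parametrizing $s=1/(1-u)\in[1,1/(1-v)]$, so that the dependence is linear in $s$, removes this extra factor — but you do not perform this re-parametrization, and as written your covering-number prefactor $C\|J\|/(1-v)$ is not justified for the $u$-direction.
\end{enumerate}
In short: your decomposition and use of Lemma \ref{lem:avar.moments} match the paper for the easy bound, but your empirical-process estimate proceeds by a finite-dimensional $(u,g,m)$-parametrization where the paper uses the $v$-independent Lipschitz class with the $1/(1-v)$ factor pulled out in advance; as stated, your version does not establish the claimed $(1-v)^{-1}$ dependence and leaves the $m$-restriction for small $u$ unresolved.
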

\begin{proof}
	We start with the easier estimate, namely that
	\begin{align}
	\label{eq:estimate.delta.u.general.risk.simple}
	E\Big[ \sup_{u\in[0,v]} \delta_u^N \Big]
	\leq   \frac{ C }{ (1-v)^{1/2p} }.
	\end{align}
	 As $|F+g\cdot G|\leq CJ$ for every $g\in\mathcal{G}$, monotonicity of $\mathrm{AVaR}_u$ implies $\mathrm{AVaR}_u^\mu(F+ g\cdot G ) \leq \mathrm{AVaR}_u^\mu(CJ)$ for every $g\in\mathcal{G}$; similarly with $\mu$ replaced by $\mu_N$.
	Now Lemma \ref{lem:avar.moments} implies
	\[  \sup_{u\in[0,v]}\delta_u^N
	\leq  \frac{\|CJ\|_{L^{2p}(\mu)} + \|CJ\|_{L^{2p}(\mu_N)} }{(1-v)^{1/2p}}. \]
	Further Jensen's inequality implies $E[\|CJ\|_{L^{2p}(\mu_N)}]\leq \|CJ\|_{L^{2p}(\mu)}$ and thus we get \eqref{eq:estimate.delta.u.general.risk.simple}.

	To conclude the proof, we are left to prove that
	\begin{align}
	\label{eq:estimate.delta.u.general.risk}
	E\Big[ \sup_{u\in[0,v]}\delta_u^N\Big]
	\leq \frac{C}{(1-v)\sqrt{N}},
	\end{align}
	which we shall do in several steps.
	\begin{enumerate}[(a)]
	\item
	Define
	\[ \mathcal{H}:=\{\varphi(F+g\cdot G) : \varphi\colon\mathbb{R}\to\mathbb{R} \text{ is 1-Lipschitz, }\varphi(0)=0  \text{ and } g\in\mathcal{G}\}. \]
	Then it holds that
	\begin{align}
	\label{eq:delta.v.lipschitz}
	\sup_{u\in[0,v]} \delta_u^N
	&\leq \frac{1}{1-v}\sup_{H\in\mathcal{H}}\Big| \int_{\mathcal{X}} H\,(\mu-\mu_N)(dx)\Big|.
	\end{align}
	Indeed, every function appearing in the definition of $\mathrm{AVaR}_u$ is of the form 	$\varphi(F+g\cdot G)/(1-u)$ for a $1$-Lipschitz function, see \eqref{eq:avar.oce.def}.
	Subtracting $\varphi(F(0)+g\cdot G(0))/(1-u)$ does not change the value of the difference of two integrals, which yields the claim.
		\item
	We proceed to compute the covering numbers of $\mathcal{H}$.
	First observe that since $\mathcal{G}$ is bounded, there is a constant $C_0$ such that $|F+ g\cdot G|\leq C_0 J$ and  $|(g-\tilde{g})\cdot G|\leq C_0 J$ for all $g,\tilde{g}\in\mathcal{G}\cup\{0\}$.
	The value of $C_0$ will be kept fixed throughout this proof.
	Let $\varepsilon>0$ and set 
	\begin{align}
	\label{eq:cover.a.eps}
	a_\varepsilon:=
	\begin{cases}
	\frac{(6C_0)^{1/p} \| J \|_{L^{2p}(\mu_N)}}{\varepsilon^{1/p}}, &\text{if }p<\infty,\\
	\| J \|_{L^{\infty}(\mu_N)}, &\text{if }p=\infty.
	\end{cases}
	\end{align}
	
	First, let $\tilde{L}_\varepsilon$ be a set of 1-Lipschitz functions from $\mathbb{R}$ to $\mathbb{R}$ which vanish at zero such that for every $1$-Lipschitz function $\varphi$ there is $\tilde{\varphi}\in \tilde{L}_\varepsilon$ satisfying  $\sup_{t\in[-C_0a_\varepsilon,C_0a_\varepsilon]} |\varphi(t)-\tilde{\varphi}(t)|\leq \varepsilon/3$.
	Such a set $\tilde{L}_\varepsilon$ can be constructed with
	\begin{align}
		\label{eq:covering.card.L}
		\mathrm{card}(\tilde{L}_\varepsilon)
	\leq \exp\Big( \frac{C}{(\varepsilon/a_\varepsilon)\wedge 1} \Big);
	\end{align}
	we detail this in step (c) below.
	Moreover, let $\tilde{\mathcal{G}}_\varepsilon\subset\mathcal{G}$ be such that for every $g\in\mathcal{G}$ there is $\tilde{g}\in\tilde{\mathcal{G}}_\varepsilon$ satisfying $|g-\tilde{g}|\leq \varepsilon/(3C_0a_\varepsilon)$.
	Such a set $\tilde{\mathcal{G}}_\varepsilon$ can be constructed with
	\begin{align}
		\label{eq:covering.card.G}
		\mathrm{card}(\tilde{\mathcal{G}}_\varepsilon)	\leq \Big( \frac{C}{(\varepsilon/a_\varepsilon)\wedge 1} \Big)^e,
	\end{align}
	using an equidistant grid of the bounded set $\mathcal{G}\subset\mathbb{R}^e$.
	
	Now set 
	\[ \tilde{\mathcal{H}}_\varepsilon := \{\tilde{\varphi}(F+g\cdot G) : \tilde{\varphi}\in\tilde{L}_\varepsilon , \, \tilde{g}\in\tilde{\mathcal{G}}_\varepsilon \}.\]
	We claim that for every $H=\varphi(F+g\cdot G)\in\mathcal{H}$ there is $\tilde{H}=\tilde{\varphi}(F+\tilde{g}\cdot G)\in\tilde{\mathcal{H}}$ such that $\|H-\tilde{H}\|_{L^2(\mu_N)}\leq \varepsilon$.
	It this is true, then 
	\begin{align}
		\label{eq:covering.lipschitz}
	\begin{split}
	&\mathcal{N}(\mathcal{H},\| \cdot \|_{L^2(\mu_N)},\varepsilon)
	\leq \mathrm{card}(\tilde{\mathcal{H}}_\varepsilon)
		\leq \mathrm{card}(\tilde{L}_\varepsilon)\mathrm{card}(\tilde{\mathcal{G}}_\varepsilon)\\
		&\quad\leq  \exp\Big( \frac{C}{(\varepsilon^{(p+1)/p}/ \| J \|_{L^{2p}(\mu_N)})\wedge 1} \Big) \cdot \Big( \frac{C}{(\varepsilon^{(p+1)/p}/ \| J \|_{L^{2p}(\mu_N)}  )\wedge 1} \Big)^e
\end{split}
	\end{align}
	for every $\varepsilon>0$, where the last inequality holds by \eqref{eq:covering.card.L}, \eqref{eq:covering.card.G} and the choice of $a_\varepsilon$ in \eqref{eq:cover.a.eps}.
	
	To prove this claim, let $\tilde{\varphi}\in \tilde{L}_\varepsilon$ be such that $\sup_{t\in [-C_0a_\varepsilon,C_0a_\varepsilon]} |\varphi(t)-\tilde{\varphi}(t)|\leq \varepsilon/3$, $g\in\tilde{\mathcal{G}}_\varepsilon$ such that $|g-\tilde{g}|\leq \varepsilon/(3C_0a_\varepsilon)$, and write
	\begin{align}
	\label{eq:covering.split}
	\| H - \tilde{H} \|_{L^2(\mu_N)}
	&\leq \| 1_{J\leq a_\varepsilon} (H - \tilde{H})\|_{L^2(\mu_N)} +\| 1_{J>a_\varepsilon} (H - \tilde{H}) \|_{L^2(\mu_N)}.
	\end{align}
	To estimate the first  term in the right hand side of \eqref{eq:covering.split}, recall that $\tilde{\varphi}$ is $1$-Lipschitz and hence
	\begin{align*}
	\| 1_{J\leq a_\varepsilon} (H - \tilde{H})\|_{L^2(\mu_N)} 
	&\leq \| 1_{J\leq a_\varepsilon} (\varphi(F+g\cdot G) - \tilde{\varphi}(F+g\cdot G) ) \|_{L^2(\mu_N)} \\
	&\qquad + \|1_{J\leq a_\varepsilon} (\tilde{\varphi}(F+g\cdot G)-\tilde{\varphi}(F+\tilde{g} \cdot G) )\|_{L^2(\mu_N)} \\
	&\leq \sup_{t\in[-C_0a_\varepsilon, C_0 a_\varepsilon ]} |\varphi(t)-\tilde{\varphi}(t)| + C_0a_\varepsilon |g-\tilde{g}|
	\leq \frac{2\varepsilon}{3}
	\end{align*}
	by choice of $\tilde{\varphi}$ and $\tilde{g}$.
	As for the second term in the right hand side of \eqref{eq:covering.split}, first note that it is zero in case $p=\infty$, since $1_{J>a_\varepsilon} =0$ $\mu_N$-almost surely by the choice of $a_\varepsilon$ in \eqref{eq:cover.a.eps}.
	Otherwise, if $p<\infty$, recalling that $|H|,|\tilde{H}|\leq C_0 J$,  Markov's inequality and the choice of $a_\varepsilon$ imply that
	\begin{align*}
	\| 1_{J>a_\varepsilon} (H - \tilde{H}) \|_{L^2(\mu_N)}
	&\leq 2 C_0 	\| 1_{J>a_\varepsilon} J \|_{L^2(\mu_N)} \\
	&\leq  \frac{2C_0 \| J^p \|_{L^{2}(\mu_N)}}{a_\varepsilon^{p}}
	\leq \frac{\varepsilon}{3}.
	\end{align*}
	This proves our claim that $\| H - \tilde{H} \|_{L^2(\mu_N)}\leq\varepsilon$.
	\item
	It remains to argue that the set $\tilde{L}_\varepsilon$ in step (b) exists.
	To that end, denote by $L$ the set of all $1$-Lipschitz functions $\varphi\colon\mathbb{R}\to\mathbb{R}$ satisfying $\varphi(0)=0$.
	Further, for $\varphi\in L$, denote by 
	\[ \mathcal{R}(\varphi) \colon [-1,1] \to\mathbb{R}, 
		\quad t\mapsto \frac{\varphi(C_0 a_\varepsilon t) }{C_0 a_\varepsilon}\] 
	its rescaled restriction.
	Then $\mathcal{R}(L)$ consists of $1$-Lipschitz functions which are bounded by 1 and \cite[Theorem 2.7.1]{van1996weak} implies that there exists a set $R'_\varepsilon$ with cardinality at most $\exp( \frac{C}{(\varepsilon/a_\varepsilon)\wedge 1})$ such that, for every $\varphi\in\mathcal{R}(L)$ there is $\varphi'\in R'_\varepsilon$ with $\sup_{t\in[-1,1]} |\varphi(t)-\varphi'(t)|\leq \varepsilon/(6 C_0 a_\varepsilon)$.
	By the triangle inequality, there is a set $\tilde{R}_\varepsilon\subset\mathcal{R}(L)$ of the same cardinality as $R'_\varepsilon$ such that for every  $\varphi\in \mathcal{R}(L)$, there is $\tilde{\varphi}\in \tilde{R}_\varepsilon$ satisfying  $\sup_{t\in[-1,1]} |\varphi(t)-\tilde{\varphi}(t)|\leq \varepsilon/(3 C_0 a_\varepsilon)$.
	Now extend every $\varphi\in \tilde{R}_\varepsilon$ to a function with domain $\mathbb{R}$ via 
	\[ \mathcal{E}(\varphi)\colon\mathbb{R}\to\mathbb{R}, 
		\quad t\mapsto C_0 a_\varepsilon \varphi\Big( (-1) \vee \Big( \frac{t}{C_0 a_\varepsilon}\wedge 1 \Big) \Big)\]
	and note that 
	\[ \sup_{t\in[-C_0a_\varepsilon,C_0 a_\varepsilon] } |\varphi(t) - \mathcal{E}(\tilde{\varphi}(t)) |
	=C_0 a_\varepsilon \sup_{t\in[-1,1]} | \mathcal{R}(\varphi)(t) - \tilde{\varphi}(t) |.  \]
	Hence $\tilde{L}_\varepsilon:=\mathcal{E}(\tilde{R}_\varepsilon)$ is the desired set.
	\item
	The set $\mathcal{H}$ satisfies Assumption \ref{ass:pw.mb}.
	Indeed, first observe that the set of continuous functions from $\mathbb{R}$ to $\mathbb{R}$ endowed with the topology of uniform convergence on compacts\footnote{That is, w.r.t.\ to the topology induced by the metric $d(\varphi,\overline{\varphi}):=\sum_{k\geq 1} (1\wedge \sup_{t\in[-k,k]} |\varphi(t)-\overline{\varphi}(t)|) \cdot 2^{-k}$.} 
	is separable; hence the subset of $1$-Lipschitz functions is separable as well w.r.t.\ this topology.	
	The rest of the argument follows from similar arguments as presented in Lemma \ref{lem:pw.meausrable}.
	
	\item
	We use the empirical process version of Dudley's entropy integral theorem, i.e.\  Theorem \ref{thm:dudley}.
	Note that $H^\ast:=0\in\mathcal{H}$, and therefore Theorem \ref{thm:dudley} implies   
	\begin{align*}
	 &E\Big[ \sup_{u\in[0,v]} \delta_u^N\Big]
	\leq \frac{C}{\sqrt{N}} E\Big[\int_0^\infty \sqrt{ \log  \mathcal{N}(\mathcal{H},\| \cdot \|_{L^2(\mu_N)},\varepsilon)  } \,d\varepsilon \Big]\\
	&= \frac{C}{\sqrt{N}} E\Big[ \| J \|_{L^{2p}(\mu_N)} \int_0^\infty \sqrt{ \log \Big( \exp\Big( \frac{C}{\tilde{\varepsilon}^{(p+1)/p}\wedge 1}\Big) \Big( \frac{C}{\tilde{\varepsilon}^{(p+1)/p}\wedge 1}\Big)^e \Big) } \,d\tilde{\varepsilon} \Big],
	\end{align*}
	where the last line followed from using \eqref{eq:covering.lipschitz} and substituting $\varepsilon$ by $\tilde{\varepsilon}= \varepsilon / \| CJ^p \|_{L^2(\mu_N)}^{1/p}$.
	It remains to notice that the (now deterministic) integral over $d\tilde{\varepsilon}$ is finite.
	Moreover, Jensen's inequality implies $E[ \| J \|_{L^{2p}(\mu_N)}]\leq \|J\|_{L^{2p}(\mu)}$ and the latter term is finite by assumption.

	In conclusion, we have shown \eqref{eq:estimate.delta.u.general.risk} and the proof is complete.
	\qedhere
	\end{enumerate}
\end{proof}

\begin{proof}[Proof of Theorem \ref{thm:main.hedging.on.lp}]
	Recall the definition of $M:=\| J \|_{L^p(\mu)}$ and $M_N:=\| J \|_{L^p(\mu_N)}$ given in \eqref{eq:def.M.MN}.
	As in the proof of Theorem \ref{thm:main.oce.hedging}  we set
	\[\Delta_N:=\sup_{g\in\mathcal{G}} \Big| \rho^\mu(F+g\cdot G) -\rho^{\mu_N}(F+g\cdot G) \Big| \] and consider both terms in
	\[ E^\ast[\Delta_N]
	=E^\ast[\Delta_N 1_{M_N\leq M+1} ] + E^\ast[\Delta_N 1_{M_N> M+1} ]  \]
	separately (note that linearity of the outer expectation holds here because $\{M_N\leq M+1\}$ is a measurable set).
	\begin{enumerate}[(a)]
	\item
	As $\mathcal{G}$ is bounded, we have $\|F+g\cdot G\|_{L^p(\mu)}\leq C M$.
	Therefore, by Lemma \ref{lem:growth.Gamma}, there exists some $b$ such that
	\[ \rho^\mu(F+g\cdot G)=\sup_{\gamma \in\mathcal{M} \text{ s.t.\ } \beta(\gamma)\leq b}  \Big( \int_{[0,1)} \mathrm{AVaR}_u^\mu(F+ g\cdot G )\,\gamma(du) -\beta(\gamma)\Big) \]
	for all $g\in\mathcal{G}$.
	Possibly making $b$ larger, the same reasoning implies that, on the set $M_N\leq M+1$, the same representation holds true if $\mu$ is replaced by $\mu_N$.
	Recalling the definition of $\delta^N$ in \eqref{eq:def.delta.u} and the definition of $\Gamma_b$ given in Lemma \ref{lem:growth.Gamma}, we can write
	\begin{align*}
	\Delta_N 1_{M_N\leq M+1}
	&\leq \sup_{\gamma \in\mathcal{M} \text{ s.t.\ } \beta(\gamma)\leq b} \int_{[0,1)} \delta_u^N\,\gamma(du) \\
	&\leq \sum_{n\geq 1} \Gamma_b(I_n) \sup_{u\in I_n}\delta_u^N,
	\end{align*}
	where $I_n:=[1-2^{-n+1},1-2^{-n})$ for every $n$, that is, $I_1=[0,1/2)$, $I_2=[1/2,3/4)$ and so forth.

	Now estimate $\Gamma_b(I_n)\leq C 2^{-n/q}$ by means of Lemma \ref{lem:growth.Gamma} and $E[\sup_{u\in I_n}\delta_u^N]\leq C (2^n\sqrt{N}^{-1}) \wedge 2^{n/2p}$ 
	by means of Lemma \ref{lem:control.delta.general.risk.lp}.
	Then, an application of Lemma \ref{lem:strange.sums.general} implies that
	\begin{align*}
	E^\ast[	\Delta_N 1_{M_N\leq M+1} ]
	&\leq C \sum_{n\geq 1}  2^{-n/q} \Big( \frac{2^n}{\sqrt{N}} \wedge 2^{n/2p} \Big)\\
	&\leq \frac{C}{\sqrt{N}^\frac{1/q-1/2p}{1-1/2p}} \vee \frac{C}{\sqrt{N}}
	\leq \frac{C}{\sqrt{N}^\frac{1/q-1/2p}{1-1/2p}}
	\end{align*}
	where the last inequality holds as $\frac{1/q-1/2p}{1-1/2p}\in(0,1)$.
	\item
	The second term is controlled in a similar way as in the proof of Theorem \ref{thm:main.oce.hedging}, namely we first estimate
	\begin{align*}
	E^\ast[\Delta_N 1_{M_N>M+1}]
	&\leq E^\ast[\Delta_N^2]^{1/2} P[M_N>M+1]^{1/2} \\
	&\leq \frac{C E^\ast[\Delta_N^2]^{1/2}}{\sqrt{N}}.
	\end{align*}
	It therefore remains to check that $E^\ast[\Delta_N^2]\leq C$.
	In fact, if $p=\infty$ then $M_N\leq M$ almost surely and there is nothing left to prove.
	So assume that $p<\infty$.
	Using monotonicity of $\rho$ and the fact that $\mathcal{G}$ is bounded, this boils down to checking that $E[\rho^{\mu_N}(CJ)^2]\leq  C$.
	To that end, by definition of $w_p$ and as $J\geq 1$, one has that
	\[\rho^{\mu_N}(CJ)
	\leq w_p(C\|J\|_{L^p(\mu_N)})
	\leq w_p \Big( C \frac{1}{N}\sum_{n\leq N} J(S_n)^p \Big).\]
	By convexity of $x\mapsto w_p(x)^2$ we may further estimate
	\begin{align*}
	E^\ast[\rho^{\mu_N}(CJ)^2]
	&\leq \frac{1}{N}\sum_{n\leq N} E\Big[ w_p\Big( C J(S_n)^p \Big)^2 \Big] \\
	&=\int_{\mathcal{X}} w_p(C J(x)^p)^2\,\mu(dx)
	\end{align*}
	and the last term is finite by assumption.
	\end{enumerate}
	Combining both steps completes the proof.
\end{proof}

\section{Deviation inequalities}
\label{sec:deviation}

In the following, we prove (a generalization of) part \eqref{item:oce.intro.deviation} of Theorem \ref{thm:main.hedging.on.Linfty.intro} and part \eqref{item:oce.intro.deviation} of Theorem \ref{thm:main.oce.hedging.intro} stated in Section \ref{sec:main.results.}.

\begin{theorem}
\label{thm:general.deviation}
	Assume that $F$ and $G$ are bounded functions and that the set $\mathcal{G}$ is bounded.
		Moreover, let $q\in(1,\infty)$ and assume that $\rho$ is $q$-regular.
		Then there are constants $c,C>0$ such that
	\[ P^*\Big[ \sup_{g\in\mathcal{G}}|\rho^\mu(F+g\cdot\mathcal{G})-\rho^{\mu_N}(F+g\cdot\mathcal{G})|\geq\varepsilon \Big]
	\leq C\exp\Big(-cN \varepsilon^{2q} \Big)\]
	for all $\varepsilon>0$ and $N\geq 1$.
\end{theorem}
\begin{proof}
\begin{enumerate}[(a)]
	\item
	In a first step, recall that $F$, $G$, and $\mathcal{G}$ are bounded, hence there is a constant $a$ such that $|F+g\cdot G|\leq a$ for all $g\in\mathcal{G}$.
	As the optimal $m$ in the definition of the average value at risk is given by a respective quantile,
	it follows that 
	\begin{align}
	\label{eq:avar.bounded.mu}
	\mathrm{AVaR}^\mu_u(F+g\cdot G)
	=\inf_{|m|\leq a} \frac{1}{1-u} \int_{\mathcal{X}}(F+g\cdot G -m)_+ +(1-u)m \,\mu(dx)
	\end{align}
	for every $u\in[0,1)$ and $g\in\mathcal{G}$, and \eqref{eq:avar.bounded.mu} remains true if $\mu$ is replaced by $\mu_N$.
	Further, as $\int_{\mathcal{X}} (1-u)m\,(\mu-\mu_N)(dx)=0$ for all $m\in\mathbb{R}$ and $u\in[0,1)$, this implies that
	\begin{align}
	\label{eq:deviation.avar.estimate}
	\big| \mathrm{AVaR}^\mu_u(F+g\cdot G) - \mathrm{AVaR}^{\mu_N}_u(F+g\cdot G) \big|
	\leq \frac{ \delta^N_0}{1-u},
	\end{align}
	where we set
	\begin{align*}
	\delta^N_0
	&:= \Big| \sup_{H\in\mathcal{H}} \int_{\mathcal{X}} H(x) \,(\mu-\mu_N)(dx)\Big|\quad\text{and}\\
	\mathcal{H}
	&:=\{ (F+g\cdot G -m )_+ : |m|\leq  a \text{ and }  g\in\mathcal{G} \}.
	\end{align*}
	Note that, by Lemma \ref{lem:pw.meausrable}, the set $\mathcal{H}$ satisfies Assumption \ref{ass:pw.mb}.
	\item
	In a second step, notice that the same arguments (again, actually simpler as $J$ is bounded) as in the proof of Theorem \ref{thm:main.hedging.on.Linfty.intro} imply that there is some $b>0$ such that the supremum over $\gamma\in\mathcal{M}$ in the spectral representation \eqref{eq:rho.spectral.rep} of $\rho$ can be restricted to those $\gamma$ for which $\beta(\gamma)\leq b$.
	This implies
	\begin{align*}
	&\big| \rho^\mu(F+g\cdot F) - \rho^{\mu_N}(F+ g\cdot G)\big| \\
	&\leq \sup_{\gamma\in\mathcal{M} \text{ s.t.\ } \beta(\gamma)\leq b}  \int_{[0,1)} |\mathrm{AVaR}^\mu_u(F+g\cdot G)-\mathrm{AVaR}^{\mu_N}_u(F+g\cdot G)|\,\gamma(du)\\
	&\leq \sum_{n\geq 1} \Gamma_b(I_n) 
	 \Big( \sup_{u\in I_n} \frac{\delta^N_0}{1-u} \wedge C_1 \Big) 
	\end{align*}
	where $I_n:=[1-2^{-n+1},1-2^{-n})$ for every $n$ and the constant $C_1$ appears since $F$, $G$, and $\mathcal{G}$ are bounded.
Without loss of generality we may assume that $C_1\geq 1$.
		Then, estimating $\Gamma_b(I_n)\leq C_2 2^{-n/q} $ by Lemma \ref{lem:growth.Gamma}, we obtain
	\begin{align*}
%	\label{eq:general.deviation.prelimiary}
	\begin{split}
	\sup_{g\in\mathcal{G}}\big| \rho^\mu(F+g\cdot F) - \rho^{\mu_N}(F+ g\cdot G)\big|
	&\leq C_1C_2 \sum_{n\geq 1} 2^{-n/q}
		 \Big( (2^n \delta^N_0) \wedge 1\Big)  \\
	&\leq C C_1C_2 \Big( (\delta^N_0)^{1/q}\vee \delta^N_0 \Big),
	\end{split}
	\end{align*}
	for all $N\geq 1$ almost surely, where the last inequality follows from Lemma \ref{lem:strange.sums.general}.
	Finally as $\delta^N_0\leq C_3$ almost surely, we conclude that
	\begin{align}
	\label{eq:general.deviation.prelimiary}
	\sup_{g\in\mathcal{G}}\big| \rho^\mu(F+g\cdot F) - \rho^{\mu_N}(F+ g\cdot G)\big|
	\leq C (\delta^N_0)^{1/q}.
	\end{align}
	\item
	In a last step, it remains to estimate $\delta^N_0$.
	By Corollary \ref{cor:covering.compact} one has that
	\[ \mathcal{N}(\mathcal{H}, \|\cdot\|_\infty,\varepsilon)
	\leq \Big(\frac{C}{\varepsilon}\Big)^{e+1}\vee 1\]
	for all $\varepsilon>0$.
	Hence, since $\mathcal{H}$ satisfies Assumption \ref{ass:pw.mb}, Theorem \ref{thm:concentration.inequality} implies that 
	\[ P[ \delta_0^N \geq \varepsilon ]
	\leq  C \exp\Big(-\frac{N\varepsilon^2}{C}\Big) \]
	for all $\varepsilon>0$ and $N\geq 1$.
	The proof is completed by plugging the last estimate into equation \eqref{eq:general.deviation.prelimiary}.
	\qedhere
	\end{enumerate}
\end{proof}

\begin{theorem}
\label{thm:oce.deviation}
	Assume that $F$ and $G$ are bounded functions, that the set $\mathcal{G}$ is bounded, and let $\rho=\mathrm{OCE}$ be the optimized certainty equivalent risk measure.
	Then there are constants $c,C>0$ such that
	\[ P\Big[ \sup_{g\in\mathcal{G}}|\rho^\mu(F+g\cdot\mathcal{G})-\rho^{\mu_N}(F+g\cdot\mathcal{G})|\geq\varepsilon \Big]
	\leq C\exp\Big(-cN\varepsilon^2\Big)\]
	for all $\varepsilon>0$ and $N\geq 1$.
\end{theorem}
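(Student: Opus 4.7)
The plan is to follow the same template as in the proof of Theorem \ref{thm:general.deviation}, but bypassing the spectral decomposition (which is the source of the $\varepsilon^{2q}$ exponent in the general case): since OCE is itself an infimum over $m$, the class of functions to be controlled is directly a one-parameter family on top of the hedging parameter $g$, and Corollary \ref{cor:covering.compact} gives it polynomial covering numbers.

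First, since $F$, $G$, and $\mathcal{G}$ are all bounded, $J:=1+|F|+|G|$ is bounded, so Lemma \ref{lem:oce.compact.estimate} (applied with $p=\infty$) supplies a constant $m_0>0$ such that
\[ \mathrm{OCE}^\mu(F+g\cdot G)=\inf_{|m|\leq m_0}\int_{\mathcal{X}}\bigl(l(F+g\cdot G - m)+m\bigr)\,\mu(dx) \]
for every $g\in\mathcal{G}$, and the same holds almost surely with $\mu$ replaced by $\mu_N$. Writing $\mathcal{H}:=\{l(F+g\cdot G-m)+m:g\in\mathcal{G},\ |m|\leq m_0\}$ and using the elementary estimate $|\inf_a\phi(a)-\inf_a\psi(a)|\leq\sup_a|\phi(a)-\psi(a)|$, one immediately obtains
\[ \sup_{g\in\mathcal{G}}\bigl|\mathrm{OCE}^\mu(F+g\cdot G)-\mathrm{OCE}^{\mu_N}(F+g\cdot G)\bigr|
\leq \sup_{H\in\mathcal{H}}\Bigl|\int_{\mathcal{X}}H(x)\,(\mu-\mu_N)(dx)\Bigr|. \]

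Next, I would bound the covering numbers of $\mathcal{H}$. Since $l$ is convex (hence locally Lipschitz) and $J$ is bounded, Corollary \ref{cor:covering.compact} yields
\[ \mathcal{N}(\mathcal{H},\|\cdot\|_\infty,\varepsilon)\leq (C/\varepsilon)^{e+1}\vee 1 \]
for all $\varepsilon>0$, and in particular the same bound holds for $\mathcal{N}(\mathcal{H},\|\cdot\|_{L^2(\nu)},\varepsilon)$ for any probability $\nu$ on $\mathcal{X}$. Thus condition (2.14.8) of \cite[Theorem 2.14.10]{van1996weak} is verified with a polynomial entropy envelope, and the class $\mathcal{H}$ is uniformly bounded (in $L^\infty$) by some constant depending only on $\|F\|_\infty$, $\|G\|_\infty$, $\mathcal{G}$, and $l$.

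Finally, I would apply \cite[Theorem 2.14.10]{van1996weak} exactly as in step (c) of the proof of Theorem \ref{thm:general.deviation}: it gives, for some $a\in(0,2)$,
\[ P\Bigl[\sqrt{N}\sup_{H\in\mathcal{H}}\Bigl|\int_{\mathcal{X}}H\,d(\mu-\mu_N)\Bigr|\geq \varepsilon\Bigr]
\leq C\exp\bigl(\varepsilon^a/C\bigr)\exp(-2\varepsilon^2), \]
and absorbing the sub-quadratic term into the quadratic one yields the sub-Gaussian tail
\[ P\Bigl[\sup_{H\in\mathcal{H}}\Bigl|\int_{\mathcal{X}}H\,d(\mu-\mu_N)\Bigr|\geq \varepsilon\Bigr]\leq C\exp(-cN\varepsilon^2) \]
for all $N\geq 1$ and $\varepsilon>0$. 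Combined with the reduction in the first paragraph, this gives the claim.

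The main conceptual point (as opposed to an obstacle) is to observe that OCE needs no spectral-representation machinery: the infimum over $m$ already appears in the definition of OCE, so one single application of an empirical-process concentration inequality to a class with polynomial covering numbers produces the optimal $\exp(-cN\varepsilon^2)$ rate, without the loss of exponent $1/q$ that is forced in the general-risk-measure case by the $\mathrm{AVaR}_u$-to-$\mathrm{AVaR}_u$ decomposition as $u\to 1$.
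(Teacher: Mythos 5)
Your proposal is correct and follows essentially the same route as the paper's proof: reduce via Lemma \ref{lem:oce.compact.estimate} to the empirical process indexed by $\mathcal{H}=\{l(F+g\cdot G-m)+m\}$, bound its covering numbers by Corollary \ref{cor:covering.compact}, and apply \cite[Theorem 2.14.10]{van1996weak} to get the sub-Gaussian tail. The extra remarks you give (explicitly invoking $p=\infty$, noting the absence of the spectral machinery) are accurate and simply make explicit what the paper leaves implicit when it says ``the proof is similar to Theorem \ref{thm:general.deviation} and we shall keep it short.''
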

\begin{proof}
	The proof is similar to the one given for Theorem \ref{thm:general.deviation} and we shall keep it short.
	By Lemma \ref{lem:oce.compact.estimate} one has
	\[|\rho^\mu(F+g\cdot G)-\rho^{\mu_N}(F+g\cdot G)|
	\leq \sup_{H\in\mathcal{H}} \Big| \int_{\mathcal{X}} H(x) \,(\mu-\mu_N)(dx)\Big|
	=:\delta^N_0 \]
	almost surely, for the set
	\[ \mathcal{H}:=\{ l(F+g\cdot G-m)+m : g\in\mathcal{G} \text{ and } |m|\leq a\}\]
	with $a$ such that $|F + g\cdot G| \le a$ for all $g\in\mathcal{G}$.
	By Lemma \ref{lem:pw.meausrable}, the set $\mathcal{H}$ satisfies Assumption \ref{ass:pw.mb}.
	Thus, an application of Theorem \ref{thm:concentration.inequality} again implies that $P[\delta_0^N\geq \varepsilon]\leq C\exp(-cN\varepsilon^2)$ for some constants $c,C>0$.
	This concludes the proof.
\end{proof}

\section{Sharpness of rates}
\label{sec:sharpness}

Whenever investigating average errors involving a (linear) dependence on i.i.d.\ phenomena, the central limit theorem assures that the $1/\sqrt{N}$ rate cannot be improved.
Indeed, take for instance $\rho(X):=E[X]=\mathrm{AVaR}_0(X)$.
Then, if $\mu$ is a probability on $[0,1]$ and $F$ is a (bounded) function which is equal to the identity on $[0,1]$, one  has that
\[ \rho^{\mu_N}(F)
=\frac{1}{N}\sum_{n\leq N} F(S_n)
\text{ approximately has the distribution } \mathcal{N}\Big(\rho^\mu(F), \frac{\mathrm{Var}(F(S)) }{N}\Big) \]
for large $N$ by the central limit theorem, where $\mathcal{N}$ denotes the normal distribution and $\mathrm{Var}(F(S))$ is the variance of $F(S)$.
In particular $E[| \rho^{\mu}(F)- \rho^{\mu_N}(F)|]$ asymptotically behaves like $\sqrt{\mathrm{Var}(F(S)) /N}$ and $P[| \rho^{\mu}(F)- \rho^{\mu_N}(F)|\geq \varepsilon]$ asymptotically behaves like $2 \Phi(-\varepsilon^2 N / \mathrm{Var}(F(S)))$ where $\Phi$ is the cumulative distribution function of the standard normal distribution.
We refer to \cite{Beu-Zaeh10,belomestny2012central,Chen08} for central limit theorems for risk measures. 

\vspace{0.5em}

In comparison to the above $1/\sqrt{N}$ rate, the rates obtained for general risk measures e.g.\ in Theorem \ref{thm:main.hedging.on.Linfty.intro} are worse.
As the proofs are presented, they depend on the notion of regularity of the risk measure given in Definition \ref{def:pareto} and this section is devoted to showing the necessity of regularity; we shall prove Proposition \ref{prop:non.asymptotic.no.rates.into}.
To that end, to ease the notation, for probabilities $\mu$ on $\mathbb{R}$ with bounded support, we shall write
\[\rho(\mu):=\rho(X) \quad\text{where }X\sim\mu.\]

\begin{remark}
\label{rem:rates.without.lebesque}
	Without the assumption that $\rho$ is $q$-regular, the proof of Proposition \ref{prop:non.asymptotic.no.rates.into} becomes rather trivial:
	take $\rho(X):=\mathop{\mathrm{ess.sup}} X$ and let $\mu$ be some probability with support $[0,1]$.
	As $\rho(\mu_N)=\max_{n\leq N} X_n$ (where $(X_n)$ is an i.i.d.\ sample of $\mu$) one has
	\[P [|\rho(\mu)-\rho(\mu_N)|\geq\varepsilon ]
	= P\Big[\max_{n\leq N} X_n \leq 1-\varepsilon\Big]
	=\mu([0,1-\varepsilon])^N.\]
	For suitable choices of $\mu$, the latter term can converge arbitrary slow to zero.
	Therefore $E[|\rho(\mu)-\rho(\mu_N)|]=\int_0^1 \mu([0,1-\varepsilon])^N\,d\varepsilon$ converges arbitrary slow as well.
\end{remark}

The proof of Proposition \ref{prop:non.asymptotic.no.rates.into} below mimics the idea of Remark \ref{rem:rates.without.lebesque} while simultaneously enforcing regularity of $\rho$.
To ease notation, denote by
\begin{align}
\label{eq:def.mu.p.m}
\mathrm{Ber}(p):=(1-p)\delta_0 + p\delta_1
\end{align}
the Bernoulli distribution with parameter of success $p\in[0,1]$.
Then, for $\mu=\mathrm{Ber}(p)$, the empirical measure $\mu_N$ of $\mu$ satisfies
\begin{align}
\label{eq:relation.mu.p.m.empirical}
\mu_N\equiv\mathrm{Ber}(p)_N = \mathrm{Ber}(\widehat{p}_N) \quad\text{where}\quad \widehat{p}_N:=\frac{1}{N}\sum_{n\leq N} X_n
\end{align}
(almost surely) where $(X_n)$ are i.i.d.\ $\mathrm{Ber}(p)$ distributed.
This simple formula is actually the reason why we stick to the Bernoulli distribution, as computations become a lot easier.

We start with two simple lemmas and leave their simple proofs to the reader.

\begin{lemma}
\label{lem:avar.bernulli}
	Let $p\in(0,1)$.
	Then
	\[ \mathrm{AVaR}_u(\mathrm{Ber}(p))=  \frac{p}{1-u} \wedge 1 \]
	for every $u\in[0,1)$.
\end{lemma}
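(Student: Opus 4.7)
I would compute $\mathrm{AVaR}_u(\mathrm{Ber}(p))$ by directly minimising
\[ f(m) := \frac{1}{1-u}\,E\big[(X-m)^+\big] + m \]
over $m\in\mathbb{R}$, using the definition \eqref{eq:def.avar} with $X\sim\mathrm{Ber}(p)$. Since $X$ only takes the values $0$ and $1$, the map $f$ is piecewise affine with breakpoints at $0$ and $1$; explicitly,
\[ f(m) = \begin{cases} (p-m)/(1-u) + m, & m\leq 0,\\ p(1-m)/(1-u) + m, & 0\leq m\leq 1,\\ m, & m\geq 1, \end{cases} \]
with respective slopes $-u/(1-u)\leq 0$, $1-p/(1-u)$, and $1$. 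In particular $f$ is continuous and convex, so the infimum is attained and the minimiser can be taken in $\{0,1\}$: on $(-\infty,0]$ the slope is nonpositive, so $f(m)\geq f(0)$; on $[1,\infty)$ the slope is strictly positive, so $f(m)\geq f(1)$.

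The next step is to compare the endpoint values $f(0)=p/(1-u)$ and $f(1)=1$ by inspecting the sign of the middle slope $1-p/(1-u)$. If $p\leq 1-u$, this slope is nonnegative, so $f$ is nondecreasing on $[0,1]$ and the infimum equals $f(0)=p/(1-u)$. If $p>1-u$, the slope is negative, so $f$ is decreasing on $[0,1]$ and the infimum equals $f(1)=1$. In both cases the common formula $\mathrm{AVaR}_u(\mathrm{Ber}(p))=p/(1-u)\wedge 1$ holds.

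I do not expect any serious obstacle: the argument is a short explicit computation. The only point worth noting is that the same threshold $p=1-u$ governs both the sign of the middle slope of $f$ and which of the two candidate values $p/(1-u)$ and $1$ is smaller, so the two case distinctions line up and produce the minimum formula automatically.
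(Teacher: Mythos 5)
Your proof is correct and follows essentially the same route as the paper: write out the objective $m \mapsto \frac{1}{1-u}E[(X-m)^+] + m$ for $X\sim\mathrm{Ber}(p)$ as a piecewise affine function with breakpoints at $0$ and $1$, and read off the infimum by comparing slopes. Your presentation is slightly more explicit about convexity forcing the minimiser to a breakpoint, but the computation is the same.
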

%\begin{proof}
%\comment{todo:delete?}
%	By definition, we have
%	\begin{align*}
%		\mathrm{AVaR}_u(\mathrm{Ber}(p))
%		&= \inf_{m\in \mathbb{R}}\Big( \frac{1}{1-u} \Big(p (1-m)^+ + (1-p) (-m)^+ \Big)+ m\Big)\\
%		&= \frac1{1-u} \inf_{m\in \mathbb{R}}
%		 \begin{cases}
%		 1-u &\text{if } m\geq 1\\
%		 m(1-u-p)+p &\text{if } 0< m<1\\
%		 p-um &\text{if } m\leq 0
%		 \end{cases} \\
%		&= \frac{p}{1-u}\wedge 1
%		\end{align*}
%	which shows the claim.
%\end{proof}

\begin{lemma}
\label{lem:lemma.maximizer.for.sharpness}
	It holds that
	\[ \sup_{x\geq 1}\Big( (1-x^{-\delta}) a + x^{-\delta} \big( (a x)\wedge 1 \big) \Big)
	=(1-a^{\delta}) a + a^\delta \]
	for every $a\in[0,1]$ and $\delta>0$.
\end{lemma}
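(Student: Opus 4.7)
The plan is to show that the supremum is attained at $x_0 := 1/a$ and equals $(1-a^\varepsilon)a + a^\varepsilon$ by a direct piecewise analysis. First I would dispose of the edge cases $a = 0$ (both sides vanish) and $a = 1$ (both sides equal $1$). For $a \in (0,1)$, the kink point $x_0 = 1/a$ of the map $y \mapsto (ay) \wedge 1$ lies in $(1, \infty)$, so the plan is to analyze
\[
f(x) := (1-x^{-\varepsilon})a + x^{-\varepsilon}\big((ax)\wedge 1\big)
\]
separately on $[x_0, \infty)$ and $[1, x_0]$, where $(ax)\wedge 1$ reduces to $1$ and $ax$ respectively.

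On $[x_0, \infty)$, substituting $(ax)\wedge 1 = 1$ gives $f(x) = a + (1-a)x^{-\varepsilon}$, which is strictly decreasing in $x$ since $1-a > 0$. Hence its supremum on this ray equals $f(x_0) = a + (1-a)a^\varepsilon$, which is exactly the claimed value $(1-a^\varepsilon)a + a^\varepsilon$.

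On $[1, x_0]$, substituting $(ax)\wedge 1 = ax$ gives $f(x) = a + a\,h(x)$ with $h(x) := x^{1-\varepsilon} - x^{-\varepsilon}$. I would then compute
\[
h'(x) = x^{-\varepsilon-1}\big[(1-\varepsilon)x + \varepsilon\big]
\]
and verify that it is non-negative on the relevant range, so that $h$ is non-decreasing there and attains its maximum at $x_0$; a direct evaluation yields $h(x_0) = a^{\varepsilon-1} - a^\varepsilon$, hence $f(x_0) = a + a(a^{\varepsilon-1} - a^\varepsilon) = (1-a^\varepsilon)a + a^\varepsilon$ once more. Combining the two pieces using the continuity of $f$ at $x_0$ finishes the argument.

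The only delicate point — and the main obstacle — is the sign analysis of $h'$ on $[1,x_0]$: in the regime $\varepsilon \in (0,1]$ relevant for the application (where $\varepsilon = 1/q$ with $q \geq 1$), one has the elementary bound $(1-\varepsilon)x + \varepsilon \geq 1$ for $x \geq 1$, which gives $h' > 0$ immediately. Every other step is a routine computation that just tracks the two analytic forms of $f$ across the kink at $x_0$.
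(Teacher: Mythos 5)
Your proof follows the same route as the paper's: restrict the supremum to $[1,1/a]$, observe that on that interval the expression equals $a(1 + x^{1-\varepsilon} - x^{-\varepsilon})$ and is increasing in $x$, and evaluate at $x_0 = 1/a$. You simply make explicit two things the paper leaves tacit (the decrease of $f$ on $[1/a,\infty)$ and the sign of $h'$). Your remark that the sign analysis of $h'$ requires $\varepsilon \in (0,1]$ is exactly right and in fact identifies a flaw in the lemma as stated: for $\varepsilon > 1$ the identity is \emph{false}. For example with $\varepsilon = 2$ and $a = 0.1$, the left-hand side evaluated at $x = 2$ equals $(1-\tfrac14)\cdot 0.1 + \tfrac14\cdot 0.2 = 0.125$, which exceeds the claimed right-hand side $(1-0.01)\cdot 0.1 + 0.01 = 0.109$; indeed for $\varepsilon>1$ the derivative $h'(x) = x^{-\varepsilon-1}\bigl[(1-\varepsilon)x+\varepsilon\bigr]$ changes sign at $x = \varepsilon/(\varepsilon-1)$, and the supremum is attained in the interior of $[1,1/a]$ whenever $1/a > \varepsilon/(\varepsilon-1)$. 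So the lemma should be stated for $\varepsilon \in (0,1]$. The paper's proof, which asserts without justification that $a(1+x^{1-\varepsilon}-x^{-\varepsilon})$ is increasing, carries the same implicit restriction. This is harmless downstream: in Proposition \ref{prop:sharp} one has $\varepsilon = 1/q \le 1$, and in Proposition \ref{thm:non.asymptotic.no.rates} one may replace $\varepsilon$ by $\varepsilon \wedge 1$ since $c/N \ge c/N^\varepsilon$ for $\varepsilon>1$ and $N\ge 1$.
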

%\begin{proof}
%\comment{todo:delete?}
%	For $a=0$ or $a=1$ the claim is clear.
%	If $a\in(0,1)$, the supremum can be restricted over $x\in[1,1/a]$.
%	For those $x$ the value equals $a (1+x^{1-\varepsilon}-x^{-\varepsilon})$ which is increasing as a function in $x$.
%	Hence the optimal $x$ is $1/a$ which yields the claim.
%\end{proof}

\begin{proof}[Proof of Proposition \ref{prop:non.asymptotic.no.rates.into}]
	For shorthand notation, set $\delta:=1/q$.
	Define  $\rho\colon L^\infty\to\mathbb{R}$ by
	\begin{align}
	\label{eq:rho.rates}
	\rho(X):=\sup_{x\geq 1} \Big( (1-x^\delta) \mathrm{AVaR}_0(X) + x^{-\delta} \mathrm{AVaR}_{1-1/x}(X) \Big).
	\end{align}
	As $\mathrm{AVaR}$ is a law--invariant coherent risk measure, $\rho$ inherits all those properties.
	
	To check that $\rho$ is $(q+\varepsilon)$-regular for every $\varepsilon>0$, fix such $\varepsilon$ and  denote by $X^\ast$ a random variable with the Pareto distribution with scale parameter 1 and shape parameter $q+\varepsilon$.
	Then $X^\ast$ has finite $q$-th moment and the definition of $\rho$ together with Lemma \ref{lem:avar.moments} imply that 
	\[ \rho(X^\ast\wedge n)
	\leq \| X^\ast \|_{L^{q}} \sup_{x\geq 1} \Big( (1-x^{-\delta}) 1 + x^{-\delta} x^{1/q} \Big)
	<\infty\]
	for all $n\in\mathbb{N}$.
	As the right hand side does not depend on $n$, this shows that $\rho$ is $(q+\varepsilon)$-regular.

	Now let $p_N:=1/N$ and let $(X_n^N)$ be an i.i.d.\ sample of $\mathrm{Ber}(p_N)$, that is, $P[X_n^N=1]=p_N=1/N$ for all $n$ and $N$.
	Further recall that the empirical measure of $\mathrm{Ber}(p_N)$ is $\mathrm{Ber}(\widehat{p}_N)$ where $\widehat{p}_N:= \frac{1}{N}\sum_{n\leq N} X_n^N$.
	We will show that
	\[ \rho(\mathrm{Ber}(p_N)) -E[\rho(\mathrm{Ber}(\widehat{p}_N))]
	\geq  \frac{ p_N^{ \delta} }{C} \]
	for all $N$.
	Using the triangle inequality, this clearly implies the statement of the proposition.

	By Lemma \ref{lem:avar.bernulli} and Lemma \ref{lem:lemma.maximizer.for.sharpness} we compute
	\begin{align*}
	\rho(\mathrm{Ber}(p_N))
	&=\sup_{x\geq 1} \Big( (1-x^{-\delta}) p_N + x^{-\delta} \big( (xp_N)\wedge 1 \big)  \Big) \\
	&= (1-p_N^\delta )p_N + p_N^\delta
	\end{align*}
	and similarly
	\begin{align*}
	\rho(\mathrm{Ber}(\widehat{p}_N))
	&=(1-\widehat{p}_N^\delta)\widehat{p}_N + \widehat{p}_N^\delta.
	\end{align*}
	Now recall that $E[\widehat{p}_N]=p_N$ and, by Jensen's inequality, $E[\widehat{p}_N^\delta\widehat{p}_N]\geq p_N^\delta p_N$; hence
	\begin{align*}
	\rho(\mathrm{Ber}(p_N)) -E[\rho(\mathrm{Ber}(\widehat{p}_N))]
	\geq p_N^\delta-E[\widehat{p}_N^\delta].
	\end{align*}

	For the set
	\[A_N:=\{\widehat{p}_N=0 \}
	=\{ X_n^N = 0\text{ for all } n\leq N\},\]
	one computes
	\begin{align*}
	P[A_N]
	&= (1-p_N)^N
	=\exp\Big(N\log\Big(1-\frac{1}{N}\Big)\Big)
	\geq \exp(-2)
	\end{align*}
	for $N\geq 2$.
	Moreover $E[\widehat{p}_N^\delta]= E[\widehat{p}_N^\delta 1_{A_N^c}]$ and an application of H\"older's inequality (with exponents $p=1/\delta$ and $q=1/(p-1)=1/(1-\delta)$) gives
	\begin{align*}
 	E[\widehat{p}_N^\delta]
	&\leq E[\widehat{p}_N]^\delta P[A_N^c]^{1-\delta} \\
	& \leq p_N^\delta  \big(1-\exp(-2) \big)^{1-\delta}
	=:p_N^\delta c
	\end{align*}
	for all $N\geq 2$.
	Here we also used that $E[\widehat{p}_N]=p_N$ and the previous computation for (the limit of) $P[A_N]$.

	In particular
	\begin{align*}
	\rho(\mathrm{Ber}(p_N)) -E[\rho(\mathrm{Ber}(\widehat{p}_N))]
	&\geq p_N^\delta (1 - c)
	\end{align*}
	for all $N\geq 2$.
	As $c\in(0,1)$, this completes the proof (considering the case $N=1$ separately).
\end{proof}

\begin{remark}
\label{rem:fatou.lebesque}
	In the theory of risk measures two continuity properties are often considered:
	the Fatou property and the stronger Lebesgue property.
	We refer the unfamiliar reader to \cite[Section 4.2]{foellmer01}. 
	A result of Jouini, Schachermayer and Touzi \cite{Jou-Sch-Tou} assures that every law--invariant risk measure automatically satisfies the Fatou property, and it is easy to see that a $q$-regular law--invariant risk measure satisfies the Lebesgue property.
	
	Small modifications in the proof of Proposition \ref{prop:non.asymptotic.no.rates.into} actually give the existence of a law--invariant risk measure which satisfies the Lebesgue property but for which no polynomial convergence rate hold true.
\end{remark}

\section{Additional proofs}
\label{sec:aux}

\subsection{Remaining proofs for Theorem \ref{thm:main.hedging.on.Linfty.intro}}

	We finally provide the proof of Theorem \ref{thm:main.hedging.on.Linfty.intro} for the case that $\rho$ is the shortfall risk measure.

	\begin{enumerate}[(a)]
	\item
	Define the function $J\colon\mathbb{R}\to\mathbb{R}$ by
	\[J(m):=\inf_{g\in\mathcal{G}} \int l( F + g\cdot G -m )\,\mu(dx)\]
	and in the same way define the (random) function $J_N$ with $\mu$ replaced by $\mu_N$.
	Further let $a\geq 0$ such that $|F+g\cdot G|\leq a$ for every $g\in\mathbb{R}$.
	Then $|\pi^{\mu}(F)|\leq a$, or, in other words
	\[ \pi^\mu(F)=\inf\{ m\in [a,a] : J(m)\leq 0  \}.\]
	The same is true if $\mu$ is replaced by $\mu_N$ and $J$ by $J_N$ (almost surely).
	\item
	We claim that there is $c>0$ such that  $J(\tilde{m})\leq  J(m) -c(\tilde{m}-m)$ for all $m,\tilde{m}\in[-a,a]$ with $m\leq \tilde{m}$.
	Indeed, as $l$ is convex and strictly increasing, its (right) derivative $l'$ is strictly positive.
	Now let $g\in\mathcal{G}$ be optimal for $J(m)$ (for notational simplicity, otherwise use some $\varepsilon$-optimal $g$), that is, $J(m)=\int l(F+g\cdot G-m)\,d\mu$.
	The fundamental theorem of calculus then implies
	\begin{align*}
	&J(\tilde{m})
	\leq \int l(F+g\cdot G - \tilde{m})\,d\mu\\
	&=\int l(F+g\cdot G - m) - (\tilde{m}-m)\int_0^1 l'(F+g\cdot G -m + t(\tilde{m}-m))\,dt \,d\mu.
	\end{align*}
	The term inside the the second integral is larger than $c:=\inf_{ |t|\leq 2a } l'(t)>0$.
	So $J(\tilde{m})\leq  J(m) -c(\tilde{m}-m)$, which is what we claimed.
	\item
	We claim that $J$ and $J_N$ are continuous.
	Indeed, this is an easy consequence of the continuity of $(m,g)\mapsto \int l(F+g\cdot G-m)\,d\mu$ together with the fact that $\mathcal{G}$ it relativity compact (similarly for $J_N$); we spare the details.
	\item
	Step (b) in particular implies that $J$ is strictly increasing.
	Combining this with the continuity of $J$ yields that $\pi^\mu(F)$ is the unique number satisfying $J(\pi^\mu(F))= 0 $.
	Similarly, $\pi^{\mu_N}(F)$ is the unique number satisfying $J_N(\pi^{\mu_N}(F))= 0 $ and therefore
	%\comment{ludo: not clear here.}
	\begin{align*}
	| J(\pi^{\mu_N}(F)) - J(\pi^{\mu}(F)) |
	&= |J(\pi^{\mu_N}(F)) - J_N(\pi^{\mu_N}(F))|\\
	&\leq \sup_{|m|\leq a} |J(m)-J_N(m)|
	=:\Delta_N.
	\end{align*}
	Making use of step (a), this implies $|\pi^{\mu_N}(F)-\pi^\mu(F)|\leq c \Delta_N$ and so it remains to gain control over $\Delta_N$.
	As
	\begin{align*}
	\Delta_N
	&\leq \sup_{H\in\mathcal{H}} \Big| \int H\,d(\mu-\mu_N) \Big| \quad\text{for}\\
	\mathcal{H}
	&:=\{ l(F+ g\cdot G - m ) : |m|\leq a \text{ and } g\in\mathcal{G} \},
	\end{align*}
	we can use Lemma \ref{lem:pw.meausrable}, Corollary \ref{cor:covering.compact},  and Dudley's theorem as in the proof of Theorem \ref{thm:main.oce.hedging.intro} to obtain $E[\Delta_N]\leq C/\sqrt{N}$ for all $N\geq 1$.
	Similarly, Corollary \ref{cor:covering.compact}  and the arguments given for the proof of Theorem \ref{thm:general.deviation} imply that $P[\Delta_N\geq \varepsilon]\leq C\exp(-cN\varepsilon^2)$ for all $\varepsilon>0$, $N\geq 1$, where $c>0$ is some (new) small constant.
	This completes the proof.
	\end{enumerate}
	
\subsection{The proof of Proposition \ref{prop:utility}}

We only sketch the proof of Proposition \ref{prop:utility}, as it is very similar to that of Theorem \ref{thm:main.hedging.on.Linfty.intro} on the optimized certainty equivalents.
The only difference is the absence of the component $m$ (in the definition of $\mathrm{OCE}$), which actually makes the proof even simpler.
In particular, we have
\[ \mathcal{N}\Big(\big\{U( F+ g\cdot G) : g\in\mathcal{G}\big\},\|\cdot\|_{\infty},\varepsilon\Big)\leq \Big(\frac{C}{\varepsilon}\Big)^{e}\vee 1\]
for all $\varepsilon>0$ by Corollary \ref{cor:covering.compact}.
To conclude the proof, copy the arguments given for the proofs of Theorem \ref{thm:main.oce.hedging} and Theorem \ref{thm:oce.deviation}.

\vspace{2em}
\noindent
\textsc{Acknowledgments:}
%\vspace{0.2em}
%
%\noindent

The authors would like to thank Patrick Cheridito as well as the Associate Editor and two Referees for extraordinarily helpful comments.
Daniel Bartl is grateful for financial support through the Vienna Science and Technology Fund (WWTF) project MA16-021 and the Austrian Science Fund (FWF) trough projects ESP-31 and P34743,
Ludovic Tangpi is supported by the NSF grant DMS-2005832 and NSF CAREER award DMS-2143861.

% % %\bibliographystyle{abbrv}
% \bibliographystyle{amsplain}
% \bibliography{bib_empirical_risk}

\providecommand{\bysame}{\leavevmode\hbox to3em{\hrulefill}\thinspace}
\providecommand{\MR}{\relax\ifhmode\unskip\space\fi MR }
% \MRhref is called by the amsart/book/proc definition of \MR.
\providecommand{\MRhref}[2]{%
  \href{http://www.ams.org/mathscinet-getitem?mr=#1}{#2}
}
\providecommand{\href}[2]{#2}

\appendix
\section{Supplementary results}

In this appendix we provide an additional result pertaining to the boundedness assumption on $\mathcal{G}$.
Recall that $\mathcal{G}$ is said to be bounded as a subset of $\mathbb{R}^e$ equipped with the Euclidean norm.
Further, we state two results from empirical process theory that we used in this article, and comment on the measurability of the plug-in estimator.

\subsection{The set $\mathcal{G}$ needs to be bounded}
Our set up also includes the case of risk based hedging, in which case one would rather write
\[ \pi^\mu(F)=\inf\Big\{ m\in\mathbb{R} : \text{there is some $g\in\mathcal{G}$ such that } \rho^\mu(F-m+g\cdot G)\leq 0\Big\}. \]
(This expression follows from additivity on the constants of $\rho^\mu$).

In prose, $\pi^\mu(F)$ is the minimal capital $m$ needed such that, possibly after trading, the loss $F$ reduced by $m$ becomes acceptable.
In this setting one would typically not restrict to bounded strategies, that is, one would take $\mathcal{G}=\mathbb{R}^e$.

The goal of this section is to prove the next proposition, which states that requiring $\mathcal{G}$ to be bounded is not just a technical simplification we made, but in fact necessary.

One precaution needs to be made though: Assume for instance that $G_i=0$ for all $i$, then clearly $g\mapsto \rho^\mu(F+g\cdot G)$ does not depend on $g$ and the size of $\mathcal{G}$ does not matter.
To exclude such cases (without too much effort), we assume that $(\mu,G)$ is non-degenerate in the sense that for every $g\in\mathbb{R}^e\setminus\{0\}$ one has $\mu(g\cdot G<0)>0$.

\begin{proposition}
\label{prop:G.needs.to.be.bounded}
	Let $\rho\colon L^\infty\to\mathbb{R}$ be any law--invariant risk measure, let $F$ and each $G_i$ be bounded, and let $(\mu,G)$  be non-degenerate in the above sense.
	Assume that $\pi^\mu(F)\in\mathbb{R}$ and
	\[ E[|\pi^{\mu}(F)-\pi^{\mu_N}(F)|]\to 0 \]
	as $N\to\infty$.
	Then the set $\mathcal{G}$ needs to be bounded.
\end{proposition}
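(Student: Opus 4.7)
The plan is to argue by contradiction: assume $\mathcal{G}$ is unbounded and show that the mean estimation error cannot vanish. The intuition is that unboundedness of $\mathcal{G}$ gives hedges $g_k\in\mathcal{G}$ of arbitrarily large magnitude pointing in some direction $h$ along which the payoff $h\cdot G$ takes strictly negative values on a set of positive $\mu$-mass. Whenever all $N$ samples happen to land in this set---an event whose probability decays only exponentially in $N$---the random variable $g_k\cdot G$ is $\mu_N$-a.s.\ dominated by $-|g_k|\eta$, so monotonicity and translation equivariance of $\rho$ push the hedged risk $\rho^{\mu_N}(F+g_k\cdot G)$ arbitrarily negative; in particular $\pi^{\mu_N}(F)$ becomes very negative while $\pi^\mu(F)$ is fixed and finite.

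To make this precise, I would first extract a sequence $g_k\in\mathcal{G}$ with $|g_k|\to\infty$ and, by compactness of the unit sphere, pass to a subsequence along which $h_k:=g_k/|g_k|\to h$ with $|h|=1$. Non-degeneracy of $(\mu,G)$ applied to $h$ gives $\mu(h\cdot G<0)>0$, so by continuity of measure there exists $\eta>0$ with $\alpha:=\mu(\{h\cdot G\leq -2\eta\})>0$. Using boundedness of $G$ and $h_k\to h$, for all sufficiently large $k$ the set $A_k:=\{h_k\cdot G\leq -\eta\}$ still satisfies $\mu(A_k)\geq \alpha$, and on $A_k$ one has the pointwise bound $g_k\cdot G\leq -|g_k|\eta$.

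Next, for each $N$ I would consider the event $E_N:=\{X_1,\dots,X_N\in A_{k(N)}\}$ (for $k(N)$ to be chosen), whose probability equals $\mu(A_{k(N)})^N\geq \alpha^N$ by independence. On $E_N$ the inequality $g_{k(N)}\cdot G\leq -|g_{k(N)}|\eta$ holds $\mu_N$-almost surely, so monotonicity and translation equivariance of $\rho$ yield $\rho^{\mu_N}(F+g_{k(N)}\cdot G)\leq \rho^{\mu_N}(F)-|g_{k(N)}|\eta \leq \|F\|_\infty-|g_{k(N)}|\eta$. Taking the infimum over hedges $g\in\mathcal{G}$ then gives $\pi^{\mu_N}(F)\leq \|F\|_\infty-|g_{k(N)}|\eta$ on $E_N$, whence
\[ \big|\pi^\mu(F)-\pi^{\mu_N}(F)\big|\geq |g_{k(N)}|\eta-\|F\|_\infty-|\pi^\mu(F)| \quad\text{on } E_N. \]
Since $|g_k|\to\infty$, for each $N$ one may select $k(N)$ with $|g_{k(N)}|\eta\geq 2(\|F\|_\infty+|\pi^\mu(F)|)+2\alpha^{-N}$, producing
\[ E\big[|\pi^\mu(F)-\pi^{\mu_N}(F)|\big]\geq \alpha^N\cdot\tfrac{1}{2}|g_{k(N)}|\eta\geq 1 \]
for every $N$, contradicting convergence to zero. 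The substantive ingredient is the use of the non-degeneracy hypothesis to secure $\mu(A_k)\geq\alpha>0$: it keeps the probability of $E_N$ only exponentially small, so that the polynomial (or even super-exponential) freedom to enlarge $|g_{k(N)}|$ suffices to overwhelm it. The main thing to be careful about in writing this out is the passage from the pointwise bound on $A_k$ to a $\mu_N$-a.s.\ bound, which is exactly where the choice of $E_N$ in terms of the sample is used.
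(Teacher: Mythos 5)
Your proposal is correct and follows essentially the same approach as the paper: extract a diverging sequence in $\mathcal{G}$, normalize to get a limiting direction, invoke non-degeneracy to obtain a positive-mass set on which the hedge payoff is uniformly negative, and exploit the (at worst exponentially small) probability that all $N$ samples fall in that set. The only difference is cosmetic: the paper fixes $N$ and lets the hedge index $n\to\infty$ to conclude $\pi^{\mu_N}(F)=-\infty$ on a positive-probability event, hence $E[|\pi^\mu(F)-\pi^{\mu_N}(F)|]=\infty$ for every $N$, whereas you choose a single $k(N)$ for each $N$ large enough to beat $\alpha^{-N}$ and derive the (sufficient) lower bound $E[\cdot]\geq 1$.
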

\begin{proof}
	We show the negation, namely that if $\mathcal{G}$ is unbounded, convergence cannot be true.
	To that end, let $(g^n)$ be a sequence in $\mathcal{G}$ witnessing that $\mathcal{G}$ is unbounded.
	After passing to a subsequence, there exists $g^\ast\in\mathbb{R}^e$ with $|g^\ast|=1$ such that $g^n/|g^n|\to g^\ast$.
	By assumption, $\mu(g^\ast \cdot G <0)>0$, hence there is $\varepsilon>0$ such that
	\[\mu(U)>0
	\quad\text{where}\quad
	U:=\{x\in\mathcal{X} : g^\ast\cdot G(x) < -\varepsilon\}.\]

	By definition of $\pi$ one has
	\[ \pi^{\mu_N}(F)\leq \rho^{\mu_N}(F+g^n\cdot G)  \]
	for every $n\in\mathbb{N}$.
	Moreover, it holds that
	\[F+ g^n\cdot G
	\leq \sup_{U} F + \sup_{ U} g^n\cdot G
	=: a_n
	\quad \mu_N\text{-a.s.\ on } \{\mu_N(U)=1\} \]
	for every $n\in\mathbb{N}$.
	By assumption the first term in the definition of $a_n$ is bounded.
	Further, as  $g^n/|g^n|$ converges to $g^\ast$, one has  that
	\begin{align*}
	g^n\cdot G
	&= |g^n|\Big( g^\ast\cdot G + \Big( \frac{g^n}{|g^n|}-g^\ast\Big)\cdot G\Big) \\
	&\leq |g^n|\Big(-\varepsilon + C\Big|\frac{g^n}{|g^n|}-g^\ast\Big|\Big)
	<-\frac{|g^n|\varepsilon}{2}
	\end{align*}
	on $U$ for all large $n$.
	By monotonicity of $\rho^{\mu_N}$, this implies
	\[\rho^{\mu_N}(F + g^n\cdot G)
	\leq \rho^{\mu_N}(a_n)
	=a_n\to -\infty
	\quad\text{on } \{\mu_N(U)=1\}\]
	as $n\to\infty$.
	Finally, as
	\[P[\mu_N(U)=1]=1-(1-\mu(U))^N>0 \]
	for every $N\geq 1$, we conclude that $\pi^{\mu_N}(F)=-\infty$ with positive probability.
	In particular $E[|\pi^{\mu}(F)-\pi^{\mu_N}(F)|]=\infty$ for every $N\geq 1$, which proves the claim.
\end{proof}

\begin{remark}
\label{rem:G.countable}
Let us argue that our standing assumption that $\mathcal{G}$ is a countable set (which is there to circumvents issues regarding measurability) can can be made without loss of generality.
If $\mathcal{G}$ is not necessarily countable, we take a  subset $\mathcal{G}'\subset\mathcal{G}$ which is countable and dense.
If $\rho^\nu$ is a risk measure that is finite on $L^p(\nu)$ for $p\in[1,\infty]$, then it is automatically continuous w.r.t.\ $\|\cdot\|_{L^p(\nu)}$.
In particular, if $F$ and $|G|$ are in $L^p(\nu)$, then, for every $g\in\mathcal{G}$ and $(g^n)_n\subset\mathcal{G}'$ that converges to $g$, we have that $F+g^n\cdot G\to F+ g\cdot G$ in $L^p(\nu)$. 
As a consequence, under the above assumptions, 
\[ \pi^\nu(F)
=\inf_{g\in\mathcal{G}} \rho^\nu(F+g\cdot G)
=\inf_{g\in\mathcal{G}'} \rho^\nu(F+g\cdot G). \]
This shows that, in all considerations made in this paper, the set  $\mathcal{G}$ can be replaced by the set $\mathcal{G}'$.
\end{remark}

\subsection{Two inequalities from empirical process theory}
\label{sec:empirical.processes}

Recall that $S,(S_n)_{n\geq 1}$ are i.i.d.\ random variables taking their values in a Polish  space $\mathcal{X}$ distributed according to $\mu$, and that $\mu_N:=\frac{1}{N}\sum_{n\leq N} \delta_{S_n}$ is the associated empirical measure.
Moreover, let $\mathcal{F}$ be a set of measurable, $\mu$-square integrable functions from $\mathcal{X}$ to $\mathbb{R}$, and recall that $\mathcal{N}( \mathcal{F}, \|\cdot\|_{L^p(\mu)},\varepsilon)$ is the covering number of $\mathcal{F}$ w.r.t.\ the $L^p$-norm at scale $\varepsilon$.
Finally, we present result only under the following assumption regarding measurability.

\begin{assumption}
\label{ass:pw.mb}	
There is a countable set $\mathcal{F}'\subset\mathcal{F}$ such that, 
for every $f\in\mathcal{F}$ there is a sequence $(f_n)_n$ in $\mathcal{F}'$ such that $f_n\to f$ pointwise and in $L^2(\mu)$.
\end{assumption} 

Note that Assumption \ref{ass:pw.mb} implies that 
\begin{align*}
%\label{eq:sup.f.in.F}
\sup_{f\in\mathcal{F}} \Big|\frac{1}{N}\sum_{n\leq N} f(S_n) - E[f(S)]\Big|
=\sup_{f\in\mathcal{F}'} \Big|\frac{1}{N}\sum_{n\leq N} f(S_n) - E[f(S)]\Big|;
\end{align*} 
in particular the term on the right hand side is measurable.
Moreover, a set $\mathcal{F}$ that satisfies Assumption \ref{ass:pw.mb} is  pointwise measurable in the sense of \cite{van1996weak} (see Section 2.3.3 therein).

We first state Dudley's entropy integral theorem in the form needed here.
\begin{theorem}
\label{thm:dudley}
	Suppose that Assumption \ref{ass:pw.mb} is satisfied and let $f^\ast\in\mathcal{F}$ be an  arbitrary but fixed function in $\mathcal{F}$.
	Then we have that
	\begin{align*}
	&E\Big[\sup_{f\in\mathcal{F}} \Big| \frac{1}{N}\sum_{n\leq N} f(S_n) - E[f(S)] \Big| \Big] \\
	&\leq \frac{C}{\sqrt{N}} \Big(  E[ f^\ast(S)^2]^{\frac{1}{2}} +  E\Big[ \int_{0}^\infty \sqrt{\log \mathcal{N}(\mathcal{F},\|\cdot\|_{L^2(\mu_N)},\varepsilon)} \,d\varepsilon \Big] 
	\end{align*}
	for all $N\geq 1$, where $C$ is an absolute constant.
\end{theorem}
\begin{proof}
	For completeness, we provide the proof of this standard fact.
	Note that, in all of the arguments below, the set $\mathcal{F}$ can be replaced without loss of generality by $\mathcal{F}'$, ensuring that not meausrabiliy issues occur.
	By the symmetrization lemma (see \cite[Lemma 2.3.1]{van1996weak}) we have that
	\[ E\Big[\sup_{f\in\mathcal{F}} \Big| \frac{1}{N}\sum_{n\leq N} f(S_n) - E[f(S)] \Big| \Big] 
	\leq 2 E\Big[\sup_{f\in\mathcal{F}} \Big| \frac{1}{N}\sum_{n\leq N} \varepsilon_n f(S_n)\Big| \Big], \]
	where the $(\varepsilon_n)_{n}$ are i.i.d.\ random signs (i.e.\ $P[\varepsilon_n=\pm 1]=1/2$) which are stochastically independent of $(S_n)_n$.
	Now, Hoeffding's inequality \cite[Lemma 2.2.7]{van1996weak} implies that, conditionally on $(S_n)_{n}$, the process $(X_f)_{f\in\mathcal{F}}$ where $X_f:= \frac{1}{\sqrt{N}}\sum_{n=1}^N \varepsilon_n f(S_n)$ is sub-Gaussian w.r.t.\ to the $L^2(\mu_N)$ norm.
	Hence, Dudley's entropy integral theorem \cite[Corollary 2.2.8]{van1996weak} for sub-Gaussian processes applied conditionally on $(S_n)_n$ gives 
	\begin{align}
	\label{eq:dudley.conditionally}
	 E\Big[\sup_{f\in\mathcal{F}} |X_f| \Big| (S_n)_{n}\Big]
	\leq E\big[|X_{f^\ast}| \big| (S_n)_{n} \big] + C \int_0^\infty \sqrt{ \log \mathcal{N}( \mathcal{F}, \|\cdot\|_{L^2(\mu_N)},\varepsilon) } \,d\varepsilon.
	\end{align}
	Finally, as the $\varepsilon_n f^\ast(S_n)$'s are i.i.d.\ zero mean random variables, H\"older's inequality assures that
	\[ E[|X_{f^\ast}| ]
	\leq E\Big[ \Big(\frac{1}{\sqrt{N}}\sum_{n\leq N} \varepsilon_n f^\ast(S_n) \Big)^2 \Big]^{\frac{1}{2}}
	=   E[ f^\ast(S)^2]^{\frac{1}{2}} .  \]
	Thus the statement of the theorem follows by integrating \eqref{eq:dudley.conditionally}.
\end{proof}

\begin{theorem}
\label{thm:concentration.inequality}
	Suppose that Assumption \ref{ass:pw.mb} is satisfied, that there is $M$ such that $|f|\leq M$ for all $f\in\mathcal{F}$, and that
	\begin{align}
	\label{eq:deviation.covering}
	\mathcal{N}( \mathcal{F}, \|\cdot\|_{\infty},\varepsilon)
	\leq \Big( \frac{a}{\varepsilon}\Big)^b \vee 1
	\end{align}
	for all $\varepsilon>0$, where $a,b>0$ are two constants.
	Then there is a constant $C$ depending only on $a,b,M$ such that
	\[ P\Big[ \sup_{f\in\mathcal{F}} \Big| \frac{1}{N}\sum_{n\leq N} f(S_n) - E[f(S)] \Big| \geq \varepsilon \Big] 
	\leq C\exp\Big( -\frac{   N  \varepsilon^2   }{ C} \Big) \]
	for all $\varepsilon>0$ and $N\geq 1$.
\end{theorem}
\begin{proof}
	The goal is to apply \cite[Theorem 2.14.10]{van1996weak}.
	To that end, we may assume without loss of generality that $0\leq f\leq 1$ for every $f\in\mathcal{F}$.
	In a first step, note that \eqref{eq:deviation.covering} implies that
	\[ \sup_{Q} \log \mathcal{N}( \mathcal{F}, \|\cdot\|_{L^2(Q)},\varepsilon)
	\leq \frac{K}{\varepsilon}\]
	for all $\varepsilon>0$, where $K$ is a constant depending only on $a$ and $b$ and the supremum is taken over all probability measures.
	This is exactly equation (2.14.8) in \cite{van1996weak} which is the assumption needed to apply \cite[Theorem 2.14.10]{van1996weak}.
	Hence, with the notation of that theorem, we have $U=5/3<2$ so that for $\delta=1/6$ there are constants $\alpha$ and $\beta$ depending only on $K$ such that
	\[  P\Big[ \sup_{f\in\mathcal{F}} \Big| \frac{1}{N}\sum_{n\leq N} f(S_n) - E[f(S)] \Big| \geq \varepsilon \Big] 
	\leq \alpha \exp(\beta ( \sqrt{N}\varepsilon)^{U+\delta})\exp(-2  (\sqrt{N}\varepsilon)^2)
	 \]
	for every $\varepsilon>0$ and $N\geq 1$.
	Recalling that $U+\delta= 11/12 <2$, a quick computation shows that 
	\[
	\alpha \exp(\beta ( \sqrt{N}\varepsilon)^{U+\delta})\exp(-2  (\sqrt{N}\varepsilon)^2)
	\leq  C \exp\Big(-\frac{ N  \varepsilon^2}{C}\Big),
	 \]
	for some constant $C$ depending on $\alpha$ and $\beta$.
	This completes the proof.
\end{proof}

\end{document}